\newcommand{\RN}[1]{{\color{blue}{\textsc{$\bullet$ Run #1}}}}
\newcommand{\run}{{\color{black}{run}}}
\newcommand{\Run}{{\color{black}{Run}}}
\newcommand{\REG}{\textbf{R}}
\newcommand{\tsts}{temporarily stops taking steps}
\newcommand{\Wu}{\textsc{Write}}
\newcommand{\Ru}{\textsc{Read}}
\newcommand{\wu}{\textsc{w}}
\newcommand{\ru}{\textsc{r}}
\newcommand{\Iup}[1]{{\cal{I}}'_{#1}}
\newcommand{\Iu}[1]{{\cal{I}}_{#1}}
\newcommand{\I}[1]{\mathit{I_{#1}}}
\newcommand{\Is}{${\cal{I}}_{n}^{s}$}
\newcommand{\regx}{$R_{xn}$}
\newcommand{\AW}{\mathcal{I}}
\newcommand{\AWB}{${\cal{I}}_{B}$}
\newcommand{\Reg}[2]{\mbox{$[#1,#2]$-register}}
\newcommand{\tp}[1]{\langle #1 \rangle}
\newcommand{\ilns}[1]{in line~\ref{#1} of}
\newcommand{\iln}[1]{in}
\newcommand{\Vals}{\mathit{tuples}}
\newcommand{\val}{\mathit{val}}
\newcommand{\lw}{\mathit{last\_written}}
\newcommand{\lr}{\mathit{last\_read}}
\newcommand{\lc}{\mathit{previous\_k}}
\newcommand{\ready}{\textsc{prepare}}
\newcommand{\done}{\textsc{commit}}
\newcommand{\ml}{malicious}
\newcommand{\vd}{valid}
\newcommand{\Done}{\textbf{done}}
\newcommand{\rinit}{u_0}
\newcommand{\sety}{\textit{Z}}
\newcommand{\pk}[1]{P_{#1}}
\newcounter{NewCounter}
 \newtheorem{assumption}[NewCounter]{Assumption}
\keywords{distributed computing, concurrency, linearizability, shared registers}
\title{On implementing SWMR registers from SWSR registers in systems with Byzantine failures}
\author{Xing Hu \qquad Sam Toueg}{Department of Computer Science, University of Toronto, Canada}{}{}{}
\authorrunning{X.\,Hu and S.\,Toueg}
\begin{document}
\nolinenumbers
\maketitle 

\setlength{\parindent}{0pt}
\setlength{\parskip}{2pt}

\setlength{\textheight}{24.4cm}

% !TEX root = main.tex

\begin{abstract}
The implementation of registers from (potentially) weaker registers is a classical problem
	in the theory of distributed computing.
Since Lamport's pioneering work~\cite{Lamport86}, this problem has been extensively studied
	in the context of asynchronous processes with crash failures.
In this paper, we investigate this problem in the context of Byzantine process failures, with and without process signatures.

We first prove that, without signatures, there is no wait-free
	linearizable
	implementation of a 1-writer $n$-reader register from atomic 1-writer $1$-reader registers.
In fact, we show a stronger result, namely,
	even under the assumption that
	the writer can only crash and at most one reader can be {\ml},
	there is no linearizable
	implementation of a 1-writer $n$-reader register from atomic 1-writer $(n-1)$-reader registers
	that ensures that every
	correct process eventually completes its operations.	

In light of this impossibility result,
	we give two implementations of a 1-writer $n$-reader register from atomic 1-writer $1$-reader registers
	that work under different assumptions.
The first implementation is linearizable (under any combination of process failures),
	but it guarantees that every
	correct process eventually completes its operations
	only under the assumption that
	the writer is correct or no reader is {\ml}
	--- thus matching the impossibility result.
The second implementation assumes process signatures; it is bounded wait-free and linearizable
	under any combination of process failures.
	
Finally, we show that without process signatures, even if we assume that
	the writer is correct and at most one of the readers can be {\ml},
	it is impossible to guarantee that every correct reader completes each read operation in a \emph{bounded} number of steps.
\end{abstract}

% !TEX root = main.tex

\section{Introduction}
We consider the basic problem of implementing a single-writer \emph{multi}-reader
	register from atomic single-writer \emph{single}-reader registers in a system
	where processes are subject to \emph{Byzantine failures}.
In particular, (1) we give an implementation that works under some failure assumptions, and (2) we prove a matching impossibility result for the case when these assumptions do not hold.
We also consider systems where processes can use unforgeable signatures, and give an implementation
	that works for any number of faulty processes.
We now describe our motivation and results in detail.

\vspace{-2mm}

\subsection{Motivation}
Implementing shared registers from weaker primitives is a fundamental problem
that has been thoroughly studied in distributed computing~\cite{abd,Bloom1988,Burns1987,Haldar1995,Israeli1992,Lamport86,Newman1987,Peterson1984,Peterson1987,Singh1987,Vidyasankar1988,Vidyasankar1991,Paul1986}.
In particular, it is well-known that in systems where processes are subject to \emph{crash} failures,
	it is possible to implement a 
	$m$-writer $n$-reader register (henceforth denoted $\Reg{m}{n}$)
	from atomic 1-writer 1-reader registers (denoted $\Reg{1}{1}$s).

 In this paper, we consider
	the problem of implementing \emph{multi}-reader registers from \emph{single}-reader registers
	in systems where processes are subject to \emph{Byzantine} failures.\
In particular, we consider the following basic questions:

\vspace{-2mm}
\begin{itemize}
\item
Is it possible to implement a 
	$\Reg{1}{n}$
	from atomic $\Reg{1}{1}$s in systems
	with Byzantine processes?

\item If so, under which assumption(s) such an implementation exist?
\end{itemize}

The above questions are also motivated by the growing
	interest in shared-memory or hybrid systems where processes are subject to 
	Byzantine failures. 
For example, Cohen and Keidar~\cite{CohenKeidar2021} 
	give $f$-resilient implementations of several objects
	(namely, \emph{reliable broadcast}, \emph{atomic snapshot}, and \emph{asset transfer} objects)
	using atomic $\Reg{1}{n}$s in systems with Byzantine failures where at most $f <n/2$ processes are faulty.
As another example, Aguilera \emph{et al.}
	use atomic $\Reg{1}{n}$s to solve some agreement problems
	in hybrid systems with Byzantine process failures~\cite{Aguilera2019}.
Moreover, Most\'efaoui \emph{et al.}~\cite{Mostefaoui2016}
	prove that,
	in \emph{message-passing} systems with Byzantine process failures,
	there is a $f$-resilient 
	implementation of a $\Reg{1}{n}$ 
	if and only if at most $f < n/3$ processes are faulty.

\vspace{-3.8mm}
\subsection{Description of the results}

\vspace{-1mm}

In this section, when we write ``implementation'', we mean an implementation
	that is both:
	(a)~``safe'', i.e., it is linearizable~\cite{CohenKeidar2021,linearizability,Mostefaoui2016}, and
	(b)~``live'', i.e., it ensures that every
	correct process eventually completes its operations
	(possibly under some failure assumptions).

To simplify the exposition of our results,
	we first state them in terms of two process groups:
	\emph{correct} processes that do not fail and \emph{faulty} ones.
We show that in a system with Byzantine failures the following matching impossibility and possibility results hold.
 For all $n \ge 3$:

\begin{compactitem}
\item[\textbf{(A)}] If the writer 
	\emph{and} some readers (even if only one reader) can be faulty,
	then there is no implementation of a $\Reg{1}{n}$ from atomic $\Reg{1}{n-1}$s.
	
\item[\textbf{(B)}] If the writer
	\emph{or} some readers (any number of readers), but \emph{not both}, can be faulty, 
	then there is an implementation of a $\Reg{1}{n}$ from atomic $\Reg{1}{1}$s.
\end{compactitem}

Note that result \textbf{(A)} implies
	that there is no \emph{wait-free}	
	implementation of a $\Reg{1}{n}$ from atomic $\Reg{1}{n-1}$s.\footnote{Recall that a wait-free implementation guarantees that every correct process eventually completes its operations, \emph{regardless of the execution speeds or failures of the other processes}~\cite{herlihy91}.}

This simple version of the results, however, leaves some questions open.
One reason is because these results do not distinguish between
	the different types of faulty processes
	(recall that Byzantine failures encompass
	all the possible failure behaviours, from simple crash to ``malicious'' behaviour).
For example we may ask: what happens 
	if we can
	assume that some processes (say the writer) are subject to crash failures \emph{only},
	while some other processes (say the readers) can fail in ``malicious''~ways?
Is an implementation of a $\Reg{1}{n}$ from atomic $\Reg{1}{1}$s now possible?

To answer this and similar questions, we partition processes into \emph{three} separate groups:
	(a)~ those that do not fail, called \emph{correct} processes,
	(b)~those that fail \emph{only} by crashing, and 
	(c) ~those that fail in any other way, called \emph{{\ml}} processes.
In systems with a mix of such process failures, we prove the following: 

\begin{compactitem}
\item[\textbf{(1)}] For all $n \ge 3$,
	there is no implementation $\Iu{n}$ of a $\Reg{1}{n}$ from atomic $\Reg{1}{n-1}$s,
	even if we assume that the writer 
	can only crash and at most one of the readers can be {\ml}. 
\end{compactitem}

In fact, we show that this impossibility result holds even if \emph{every} reader is given atomic $\Reg{1}{n}$s that it can write and \emph{all} processes can read, and the writer is the only process that does not have atomic $\Reg{1}{n}$s.

Note that 
	the above results consider safety and liveness as an \emph{indivisible} requirement
	of a register implementation.
But it could be useful to consider each requirement separately.
For~example, what happens if we want to implement a $\Reg{1}{n}$ with the following properties: (a)~it is \emph{always} safe (i.e., linearizable)
	and (b) it may lose its liveness  (i.e., it may block some
	read or write operations) \emph{only if} some specific ``pattern/types'' of failures occur?
We prove that in systems with a mix of process failures:

\begin{compactitem}
\item[\textbf{(2)}] For all $n \ge 3$, there is an implementation $\Iu{n}$ of a $\Reg{1}{n}$ from atomic $\Reg{1}{1}$s 
	such that:
\begin{compactitem}
\item $\Iu{n}$ is linearizable, and

\item  In every run of $\Iu{n}$ where the writer is correct or no reader is {\ml},
	every correct process completes all its~operations.

\end{compactitem}
\end{compactitem}

\noindent
So this register implementation is linearizable
	regardless of which processes fail and how they fail, i.e., it is always ``safe''.
But it guarantees ``liveness'' only if the writer is correct or no reader is {\ml}.
If the writer is correct, it tolerates any number of {\ml} readers.

\noindent
Note that \textbf{(1)} and \textbf{(2)} are matching impossibility and possibility results.
They imply the simpler results \textbf{(A)} and~\textbf{(B)} that we stated earlier
	for processes that are coarsely characterized as either correct or faulty.

If we assume that the writer is correct, the linearizable implementation 
	of result \textbf{(2)} above
	ensures that every correct reader completes each read in a \emph{finite} number of steps.
This raises the question of whether, if we assume that the writer is correct,
	there is a linearizable implementation such that
	every reader completes each read in a \emph{bounded} number of steps.
We prove that the answer is ``No''.
More precisely:

\begin{compactitem}
\item[\textbf{(3)}] For all $n \ge 3$,  even if we assume that the writer is correct and at most one reader can be {\ml},
	there is \emph{no}
	linearizable implementation of a 
	$\Reg{1}{n}$ from atomic \mbox{$\Reg{1}{n-1}$}s
	that ensures that every correct reader completes every read in a \emph{bounded} number of steps.
\end{compactitem}

The above results are for the case that the implemented register has at least $n=3$ readers.
For the special case that $n=2$,  we give a simple implementation of a $\Reg{1}{2}$ 
	from atomic $\Reg{1}{1}$s that is \emph{bounded wait-free}: 
	all correct processes are guaranteed to complete their operations
	in a bounded number of steps
	regardless of which processes fail and how they fail.

We also consider the problem of implementing a $\Reg{1}{n}$ from atomic $\Reg{1}{1}$s in systems where processes
	are subject to Byzantine failures, but they can use \emph{unforgeable signatures}.
In sharp contrast to the impossibility result~\textbf{(1)},
	we show that with signatures for all $n\ge2$,
	there is an implementation of $\Reg{1}{n}$ from atomic $\Reg{1}{1}$s that is \mbox{bounded wait-free}.
	
We conclude the paper with a result about implementations from \emph{regular} registers~\cite{Lamport86}.
Recall that, in contrast to atomic registers,
	regular registers allow ``new-old'' inversions in the values that processes read.
It is well-known that in systems with crash failures,
	it is easy to implement a \emph{wait-free} linearizable
	$\Reg{1}{n}$ from regular \mbox{$\Reg{1}{n}$}s.
Here~we~show~that in systems with
	Byzantine failures, such an implementation is impossible:
	for~$n \ge 3$,
	even if we assume that the writer can only crash and at most one reader can be {\ml},
	there is \emph{no}
	linearizable implementation of a 
	$\Reg{1}{n}$ from regular~\mbox{$\Reg{1}{n}$}s\footnote{So all processes,
	\emph{including the writer},
	are given regular registers that all the $n$ readers~can~read.}\linebreak
	that ensures that every correct process eventually completes its operations.

% !TEX root = main.tex

\section{Result techniques}

The techniques that we used to obtain our main possibility and impossibility results
	are also a significant contribution of this paper.
	
To prove the impossibility result \textbf{(1)},
	one cannot use a standard partitioning argument:
	all the processes
	except the writer are given atomic $\Reg{1}{n}$s that all processes can read,
	and the writer is given a $\Reg{1}{n-1}$ that all the readers except one can read;
	thus it is clear that the system cannot be partitioned.

So to prove this result we use an interesting \emph{reductio ad absurdum} technique.
Starting from an alleged implementation
	of  $\Reg{1}{n}$ from $\Reg{1}{n-1}$s,
	we consider a run where the implemented register is initialized to $0$, the writer completes a write of $1$, and then a reader reads~$1$.
By leveraging the facts that:
	(1)~in each step the writer can read or write only $\Reg{1}{n-1}$s,
	(2)~the writer may crash,
	(3)~one~of~the readers may be malicious,
	and (4)~there are at least 3 readers,
	we are able to successively remove every read or write step of the writer (one by one,
	starting from its last step) in a way that
	maintains the property that some correct reader reads 1 and at most one reader in the run is malicious.
	As we successively remove the steps of the writer,
	the identity of the reader that reads 1, and the identity of the reader that may be malicious, keep changing.
	By continuing this process, we end up with a run in which the writer takes no steps, and yet a correct reader reads~1.

Note that this proof is reminiscent of the impossibility proof for the ``Two generals' Problem''
	in message-passing systems~\cite{twogeneral}.
In that proof, one leverages the possibility of message losses to successively
	remove one message at a time.
The proof given here is much more elaborate because it leverages the subtle interaction
	between crash \emph{and} malicious failures that may occur at different processes.

For the matching possibility result \textbf{(2)}, we solve the problem of implementing a $\Reg{1}{n}$ from $\Reg{1}{1}$s
	with a \emph{recursive} algorithm: intuitively, we first give an algorithm to implement a $\Reg{1}{n}$ using $\Reg{1}{n-1}$s, rather than only $\Reg{1}{1}$s, and then recurse till $n=2$.
We do so because the recursive step of implementing a $\Reg{1}{n}$ using $\Reg{1}{n-1}$s
	is significantly easier than implementing a $\Reg{1}{n}$
	using only $\Reg{1}{1}$s.
This is explained in more detail in Section~\ref{algo-difficulty}.

% !TEX root = main.tex
\section{Model Sketch}\label{model}

We consider systems with asynchronous processes that communicate via single-writer registers and are subject to Byzantine failures.
Recall that a single-writer $n$-reader register is denoted as a $\Reg{1}{n}$; the $n$ readers are distinct from the writer.

\subsection{Process failures}

A process that is subject to Byzantine failures can behave arbitrarily.
In particular, it may deviate from the algorithm it is supposed to execute, 
	or just stop this execution prematurely, i.e., crash.
To distinguish between these two types of failures, we partition processes as follows:
 \begin{compactitem}
 \item  Processes that do not fail, i.e., \emph{correct} processes.
 \item  Processes that fail, i.e., \emph{faulty} processes. Faulty processes are divided into two groups:
 \begin{compactitem}
 	  \item processes that just \emph{crash}, 
	   and
	  \item the remaining processes,
	  	which we call \emph{{\ml}}.
 \end{compactitem}
 \end{compactitem}

\subsection{Atomic and implemented registers}

A register is \emph{atomic} if its read and write operations are \emph{instantaneous} (i.e., indivisible);
	each read 
	must return the value of the last write
	that precedes it, or the initial value of the register if no such write~exists.
	
Roughly speaking, the \emph{implementation} of a register from a set of ``base'' registers
	is given by read/write procedures that each process can execute
	to read/write the implemented register; these procedures can access the given base registers
	(which, intuitively, may be less
	``powerful'' than the implemented register).
So each operation on an implemented register \emph{spans an interval} that starts
	with an \emph{invocation} (a procedure call)
	and completes with a corresponding \emph{response} (a value returned by the procedure).
Note that a process executes steps of a register implementation
	only when it executes its \emph{own} operations on the register,
	i.e., only within the intervals of these operations.	

\subsection{Implementation liveness properties}

All the register implementations that we consider
	satisfy the following liveness property:

\begin{definition}[Termination]\label{termination}
Every correct process completes every operation
	in a finite number of its own steps.
\end{definition}

As we will see, termination may rely on some failure assumptions.
For example, the register implementation that we
	give in Section~\ref{n-from-1}
	(Algorithm~\ref{1wnr}, Theorem~\ref{Theo-Main-Possibility})
	satisfies the Termination property under the assumption that
	either the writer is correct or no reader is {\ml}.
In contrast to the Termination property,
	\emph{wait-freedom} and  \emph{bounded wait-freedom} are liveness properties that do not rely on any failure assumptions~\cite{herlihy91}:

\begin{definition}[Wait-freedom]\label{wfdef}
Every correct process completes every operation in a finite number of its own steps, regardless of the execution speeds or failures of the other processes.\footnote{In a preliminary version of this paper~\cite{DISC2022},
	an implementation that satisfies the Termination property
	(under some failure assumption)
	was said to be wait-free (under this failure assumption).
In~particular, the register implementation given
	in Section~\ref{n-from-1}
	was said to be wait-free under the assumption that
	the writer is correct or no reader is {\ml}.
But this use of the term ``wait-free'' is not conventional and can be misleading.
Here we reserve the term ``wait-free'' for implementations that satisfy the Termination property \emph{unconditionnally}, as in~\cite{herlihy91}. }
\end{definition}

\begin{definition}[Bounded wait-freedom]\label{bwfdef}
Every correct process completes every operation in a bounded number of its own steps, regardless of the execution speeds or failures of the other processes.
\end{definition}

\vspace{-5.5mm}

\subsection{Linearizability of register implementations}

\vspace{-1.5mm}

Roughly speaking, linearizability requires that every operation on an implemented object appears
	as if it took effect instantaneously at some point
	(the ``linearization point'') in its execution interval~\cite{linearizability}.\footnote{Linearizable (implementations of) registers, however, are \emph{not} equivalent to atomic registers. In fact, Golab, Higham and Woelfel have shown that with a strong adversary, some randomized algorithms that ``work correctly'' under the assumption that processes use atomic registers, do not work if they use linearizable register implementations instead of atomic registers~\cite{sl11}.}
As noted by~\cite{CohenKeidar2021,Mostefaoui2016}, however, the precise definition of linearizability
	depends on whether we assume that processes can only crash (as it was assumed in~\cite{linearizability}), 
	or they can also fail in a ``Byzantine way''.
We now explain this for the special case of \emph{register} implementations.

\smallskip
\textbf{In systems with only crash failures.}
It is well-known that a \emph{single-writer multi-reader} register implementation is linearizable
	if and only if it satisfies two simple properties: intuitively, (1) every read operation reads the value
	written by a concurrent or immediately preceding write operation, and (2) there are no ``new-old'' inversions in the values read.	
To define these properties precisely, we first define what it means for two operations to be concurrent or for one to precede the other.

\vspace{-2mm}
\begin{definition}
Let $o$ and $o'$ be any two operations.
\begin{compactitem}
\item $o$ \emph{precedes} $o'$ if the response of $o$ occurs 
	before \mbox{the invocation of $o'$.}

\item $o$ \emph{is concurrent with} $o'$ if neither precedes the other.
\end{compactitem}
\end{definition}

\vspace{-2mm}
We say that a write operation $\wu$ \emph{immediately precedes} a read operation $\ru$
	if $\wu$ precedes~$\ru$, and there is no write operation $\wu'$
	such that $\wu$ precedes $\wu'$ and $\wu'$ precedes $\ru$.

Let $v_0$ be the \emph{initial value} of the implemented register,
	and $v_k$ be the value written by the $k$-th write operation of the writer $w$ of the implemented register
(this is well-defined because 
	each process, including the writer, applies its operations sequentially).

\vspace{-1mm}
\begin{definition}[Register Linearizability]\label{LinearizableCrash}
In a system with crash failures,
	an implementation of a $\Reg{1}{n}$ 
	is \emph{linearizable} if and only if it satisfies the following two properties:

\begin{compactitem}
\item \label{p1}  \emph{\textbf{Property 1} [Reading a ``current'' value]}
If a read operation $\ru$ returns the value $v$ then:
	
	\begin{compactitem}
	\item there is a write $v$ operation that immediately precedes $\ru$ or is concurrent with $\ru$, or	
	\item  $v = v_0$ and no write operation precedes $\ru$.
	\end{compactitem}

\item \label{p2} \emph{\textbf{Property 2} [No ``new-old'' inversion]}
If two read operations $\ru$ and $\ru'$ return values $v_k$ and $v_{k'}$, respectively,
	and  $\ru$ precedes $\ru'$, then $k \le k'$.
\end{compactitem}
\end{definition}

\vspace{-1mm}
\noindent
\textbf{In systems with Byzantine failures.}
The above definitions do not quite work for systems with Byzantine failures.
For example, it is not clear what it means for a writer $w$ of an implemented register to ``write a value $v$''
	if $w$ is {\ml}, i.e., if $w$ \emph{deviates} from the write procedure that it is supposed to execute;
	similarly, if a reader $r$ is {\ml} it is not clear what it means for $r$ to ``read a value $v$''.
The definition of linearizability for systems with Byzantine failures avoids
	the above issues by restricting the linearization requirements
	to processes that are \emph{not} {\ml}.
More precisely:

\begin{definition}[Register Linearizability]\label{LinearizableByz}
In a system with Byzantine process failures,
	an implementation of a $\Reg{1}{n}$
	is \emph{linearizable} if and only if the following holds.
	If the writer is not {\ml},~then:

\begin{compactitem}
\item \label{pb1}  \emph{\textbf{Property 1} [Reading a ``current'' value]}
If a read operation $\ru$ by a process that is not {\ml} returns the value $v$ then:
	
	\begin{compactitem}
	\item there is a write $v$ operation that immediately precedes $\ru$ or is concurrent with $\ru$, or	
	\item  $v = v_0$ and no write operation precedes $\ru$.
	\end{compactitem}

\item \label{pb2} \emph{\textbf{Property 2} [No ``new-old'' inversion]}
If two read operations $\ru$ and $\ru'$ by processes that are not {\ml} return values $v_k$ and $v_{k'}$, respectively,
	and $\ru$ precedes $\ru'$, then $k \le k'$.
\end{compactitem}
\end{definition}

Note that if the writer is correct or only crashes,
	then readers that are correct or only~crash are required to read ``current'' values
	and also avoid ``new-old'' inversions.
So in systems where faulty processes can only crash, 
	 Definition~\ref{LinearizableByz} reduces to Definition~\ref{LinearizableCrash}.

Cohen and Keidar were the first to define linearizability for \emph{arbitrary}
	objects in systems with Byzantine failures~\cite{CohenKeidar2021},
	and their definition generalizes the definition of \emph{register} linearizability in such systems given by Most\'efaoui \emph{et al.}~in~\cite{Mostefaoui2016}.
Definition~\ref{LinearizableByz} above (which is also for register linearizability)
	is consistent~with~both.\footnote{In~\cite{CohenKeidar2021,Mostefaoui2016}, however, processes that are subject to Byzantine failures are partitioned into only two groups, namely, correct processes and faulty processes.
Thus the reader of a register that just crashes is, by definition, faulty.
So, as with all other faulty processes, by the linearizability definitions in~\cite{CohenKeidar2021,Mostefaoui2016} it~is exempt from any requirement, e.g., it is allowed to read a stale value. Definition~3~avoids~this by leveraging our subdivision of faulty processes into those that only crash and those that are malicious.}

%
% !TEX root = main.tex
 
\section{Impossibility result}\label{Impossibility-Result}

We now prove that there is no wait-free
	linearizable
	implementation of a $\Reg{1}{n}$ from atomic $\Reg{1}{n-1}$s.
In fact, we show a stronger result:
	even under the assumption that
	the writer can only crash and at most one reader can be {\ml},
	there is no linearizable
	implementation of a $\Reg{1}{n}$ from atomic $\Reg{1}{n-1}$s
	that ensures that every
	correct process eventually completes its operations.

\begin{theorem}\label{Theo-Impossibility-Result}
For all $n \ge 3$,
	in a system with $n+1$ processes that are subject to Byzantine failures,
	there is \emph{no}
	linearizable implementation of a 
	$\Reg{1}{n}$ from atomic \mbox{$\Reg{1}{n-1}$}s
	that satisfies the Termination property,
	 even if we assume that the writer of the implemented
	$\Reg{1}{n}$ can only crash and at most one reader can be {\ml}.
\end{theorem}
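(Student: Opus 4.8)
The plan is to argue by contradiction. Assume that for some $n \ge 3$ there is a wait-free linearizable implementation $\I{n}$ of a $\Reg{1}{n}$ from atomic $\Reg{1}{n-1}$s that tolerates a writer that can only crash together with (at most) one {\ml} reader. The single structural fact driving everything is that the writer's base registers are all $\Reg{1}{n-1}$s: hence for each base register $R$ that the writer ever accesses there is at least one reader --- an \emph{outsider} of $R$ --- that cannot read $R$. Fix the initial value $v_0$ of the implemented register to $0$ and consider the deterministic solo run in which the writer performs one $\wrt{1}$ from the initial configuration; let $s_1, s_2, \dots, s_m$ be the base-register steps it takes before $\wrt{1}$ returns.

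I would then establish, by downward induction on $j = m, m-1, \dots, 0$, the existence of a run $E_j$ of $\I{n}$ such that: (i) the writer executes exactly the prefix $s_1, \dots, s_j$ and then takes no further step; (ii) at most one reader is {\ml} and all other processes are correct; and (iii) some correct reader completes a read returning $1$. For the base case $j = m$, let the writer run $\wrt{1}$ to completion and then a single reader read in isolation; by wait-freedom that read returns a value, and since the writer is not {\ml} and a write of $1$ immediately precedes the read, Property~1 of Definition~\ref{LinearizableByz} forces that value to be $1$. The inductive step deletes the writer's last remaining step $s_j$. If $s_j$ is a base-register \emph{read}, the deletion changes no shared state and the writer is idle right afterwards, so the readers' executions stay legal and $E_{j-1}$ is $E_j$ with $s_j$ erased, with the same correct reader returning $1$. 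If $s_j$ is a \emph{write} to a base register $R$, then --- $R$ being single-writer and the writer silent after $s_j$ --- the deletion affects only readers that read $R$ after that point; in particular, if the correct reader returning $1$ in $E_j$ is an outsider of $R$, running it alone after the writer leaves its view, hence its output $1$, unchanged, and $E_{j-1}$ follows. Finally, once all the writer's steps are gone ($j = 0$), I also use that the readers cannot observe the writer's \emph{invocation}, so I may take $E_0$ to be a run in which the writer never invokes $\wrt{1}$ at all: there a correct reader returns $1$ although $v_0 = 0$ and no write precedes or is concurrent with that read, contradicting Property~1 of Definition~\ref{LinearizableByz}.

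The substantive case --- which I expect to be the main obstacle --- is the remaining branch of the inductive step: $s_j$ is a write to $R$ and the correct reader returning $1$ in $E_j$ \emph{can} read $R$. Here the plan is to first extend $E_j$ by letting a second correct reader $q$ --- an outsider of $R$, distinct from the first --- read afterwards; since $\wrt{1}$ is the only write, Property~2 of Definition~\ref{LinearizableByz} (no ``new-old'' inversion) forces $q$ to return $1$ as well. Then $E_{j-1}$ is built by deleting $s_j$ and letting the single {\ml} reader available to us --- which must be distinct from both correct readers above, so that exactly here $n \ge 3$ is needed --- deviate so as to present to $q$ precisely the shared state it saw in the extended $E_j$: since $q$ is an outsider of $R$, the one base register on which the two runs disagree is invisible to $q$, and the {\ml} reader supplies whatever else is required. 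Making this precise is the delicate heart of the argument: one must specify the {\ml} reader's behaviour register by register, show that the patched run is indistinguishable to $q$ from a legal run of $\I{n}$ so that $q$ must again return $1$, and --- crucially --- organize the bookkeeping so that only \emph{one} reader is ever {\ml} even though both the identity of the reader returning $1$ and the identity of the {\ml} reader drift from step to step. This is the analogue here of stripping one message at a time in the Two Generals' impossibility proof, but considerably more intricate because each deletion interacts with a {\ml} failure sitting at a \emph{different}, and shifting, process; it is the point where all four ingredients --- the writer touching only $\Reg{1}{n-1}$s (so outsiders exist), the writer being allowed to crash (so prefixes of its run are legal), one reader being allowed to be {\ml} (so the patch is permitted), and $n \ge 3$ (so three distinct readers are available) --- are used together.
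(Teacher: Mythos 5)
Your overall strategy is the paper's: a backward induction that deletes the writer's base-register steps one at a time, using the fact that every write to a $\Reg{1}{n-1}$ has an ``outsider'' reader, the writer's licence to crash, the single {\ml} reader, and linearizability to keep forcing some correct reader to return~$1$, ending in a run with no writer steps that violates Property~1. The base case, the deletion of read steps, the deletion of a write whose register is invisible to the reader returning~$1$, and the final contradiction are all correct and match the paper.

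The gap is exactly where you place it, in the ``substantive case,'' and the one concrete commitment you make there is wrong. After extending $E_j$ so that an outsider $q$ of $R$ also reads~$1$, the process whose behaviour is perturbed by deleting $s_j$ is the \emph{first} correct reader $x$: it can see $R$, and its read operation may have written to base registers that $q$ later reads. Since those registers are single-writer, no third process can ``supply whatever else is required'' on $x$'s behalf; the only way to freeze $q$'s view is to make $x$ itself {\ml} in $E_{j-1}$ and have it replay its $E_j$ steps verbatim. But then the reader $z$ that was {\ml} in $E_j$ must become correct, and you have no mechanism for that: $z$'s arbitrary writes may already be visible to $q$, so you can neither erase them nor pretend $z$ followed the protocol. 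This is why the paper's invariant $P_k$ additionally carries a pool of $n-2$ correct readers that have taken \emph{no} steps (this, not the mere existence of three readers, is where $n\ge 3$ enters), and why each induction step threads through the chain $B_{k-1}\to C^r_{k-1}\to D^r_{k-1}\to E^r_{k-1}\to F^r_{k-1}$: the currently {\ml} reader first \emph{resets all its base registers to their initial values}, a fresh silent reader then reads~$1$ (forced by Property~2), and only then can the old {\ml} reader's steps be excised entirely --- the fresh reader cannot tell reset registers from never-written ones --- while the {\ml} role is handed to the reader whose view the excision would otherwise change. Two such hand-offs may be needed before the reader returning~$1$ is an outsider of $s_{j-1}$. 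Without the reset trick and the pool of untouched readers, your induction cannot keep the count of {\ml} readers at one, so as written the argument does not go through.
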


\begin{proof}
Let $n \ge 3$.
Suppose, for contradiction, that there is an 
	implementation $\AW$
	of a $\Reg{1}{n}$ $\REG$ from atomic $\Reg{1}{n-1}$s
	that is linearizable (i.e., it satisfies the Register Linearizabilty property)
	and ensures that all correct processes complete their operations
	(i.e., it satisfies the Termination property),
	under the assumption that the writer $w$ of $\REG$
	can only crash and at most one of the $n$ readers of $\REG$ can be {\ml}.

\begin{figure}[!t]
\vspace{-6mm} %%%%%%%%%%
\minipage{0.48\textwidth}
    \centering 
    \includegraphics[width=0.9\textwidth]{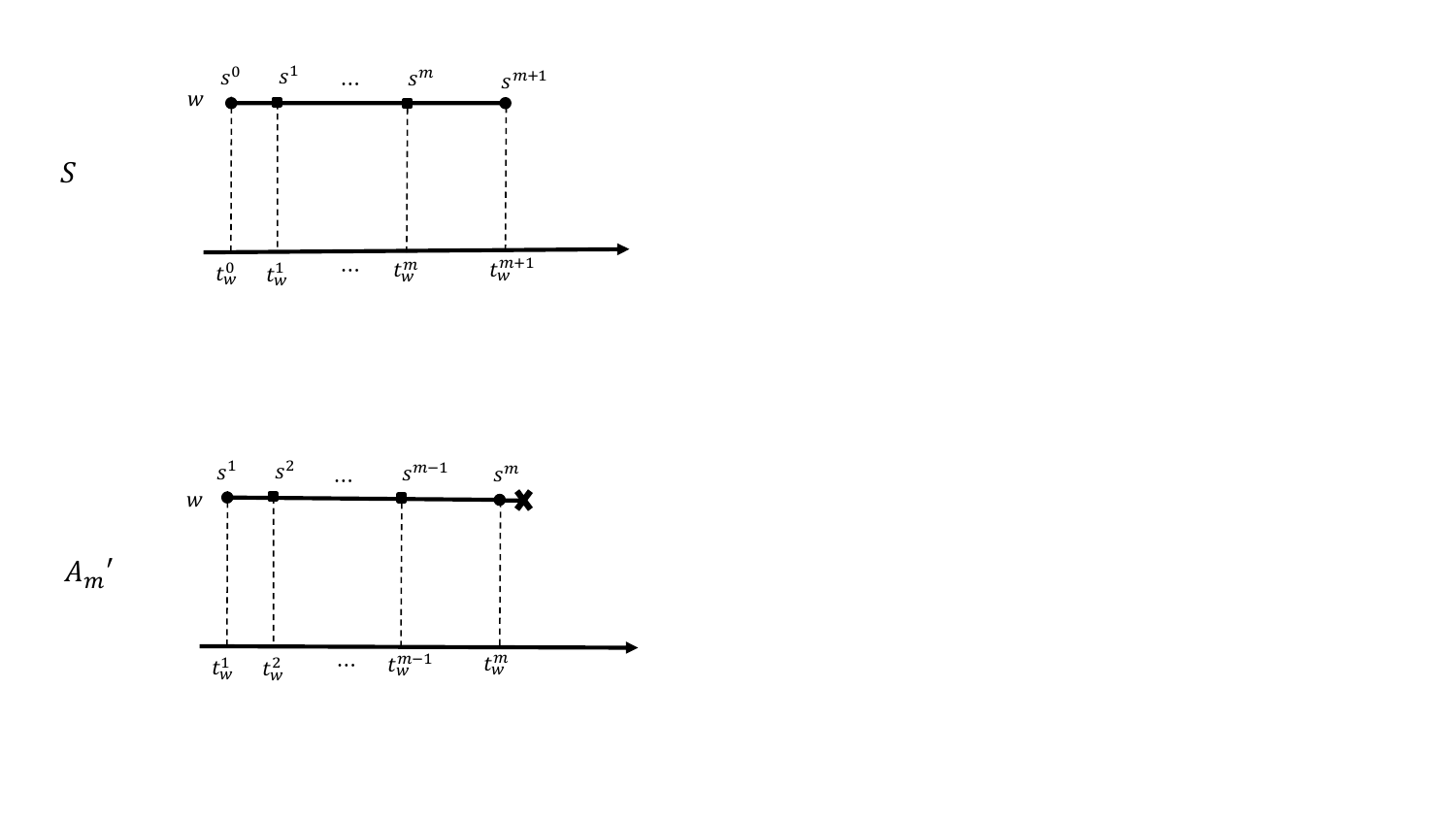}
    \caption{{\Run} $A_{m}'$} 
    \label{S}
\endminipage
\minipage{0.48\textwidth}
     \centering 
    \includegraphics[width=0.9\textwidth]{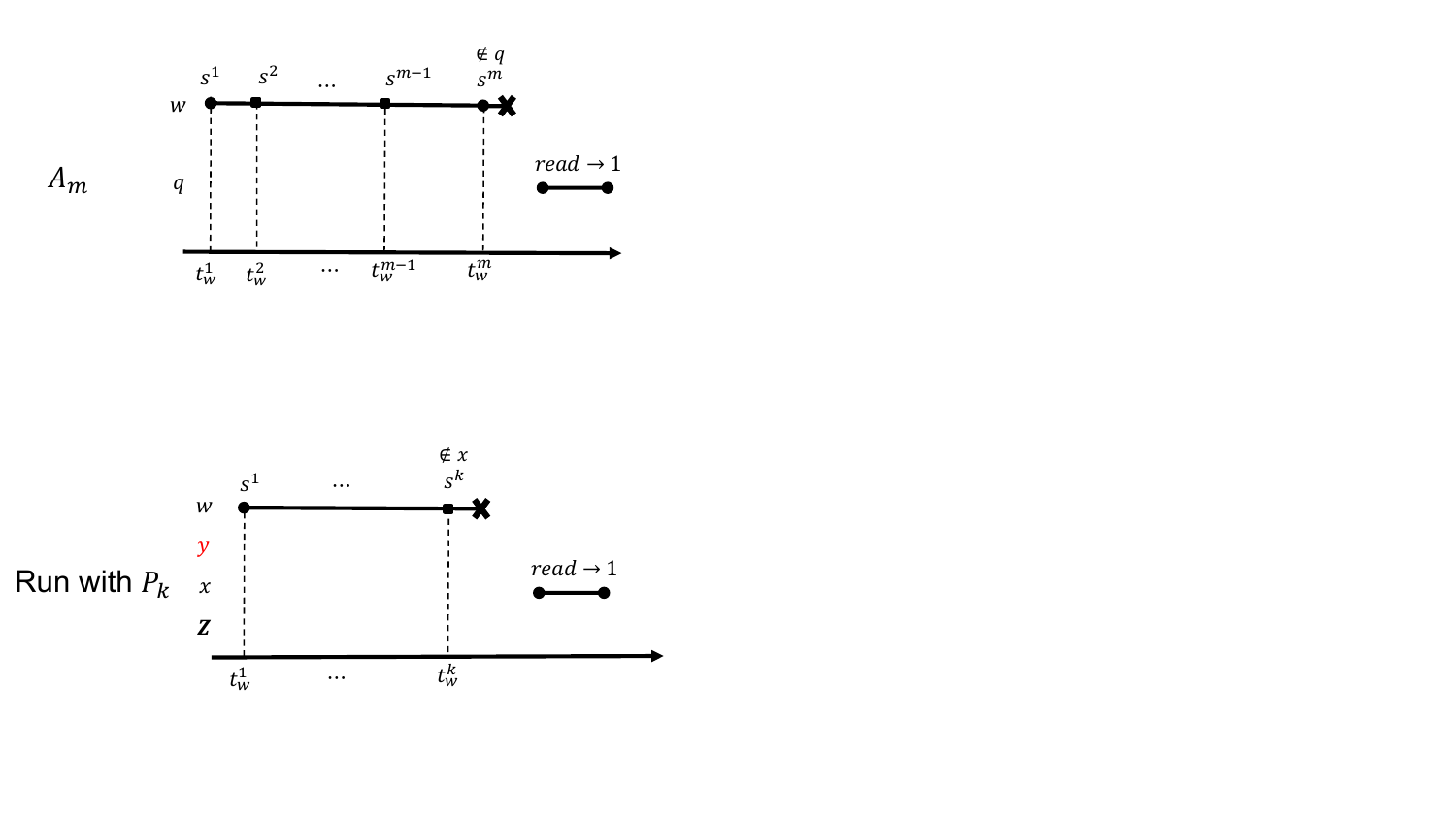}
    \caption{{\Run} $A_m$} 
    \label{Am}
\endminipage
\newline

\minipage{0.48\textwidth}
    \centering 
    \includegraphics[width=0.9\textwidth]{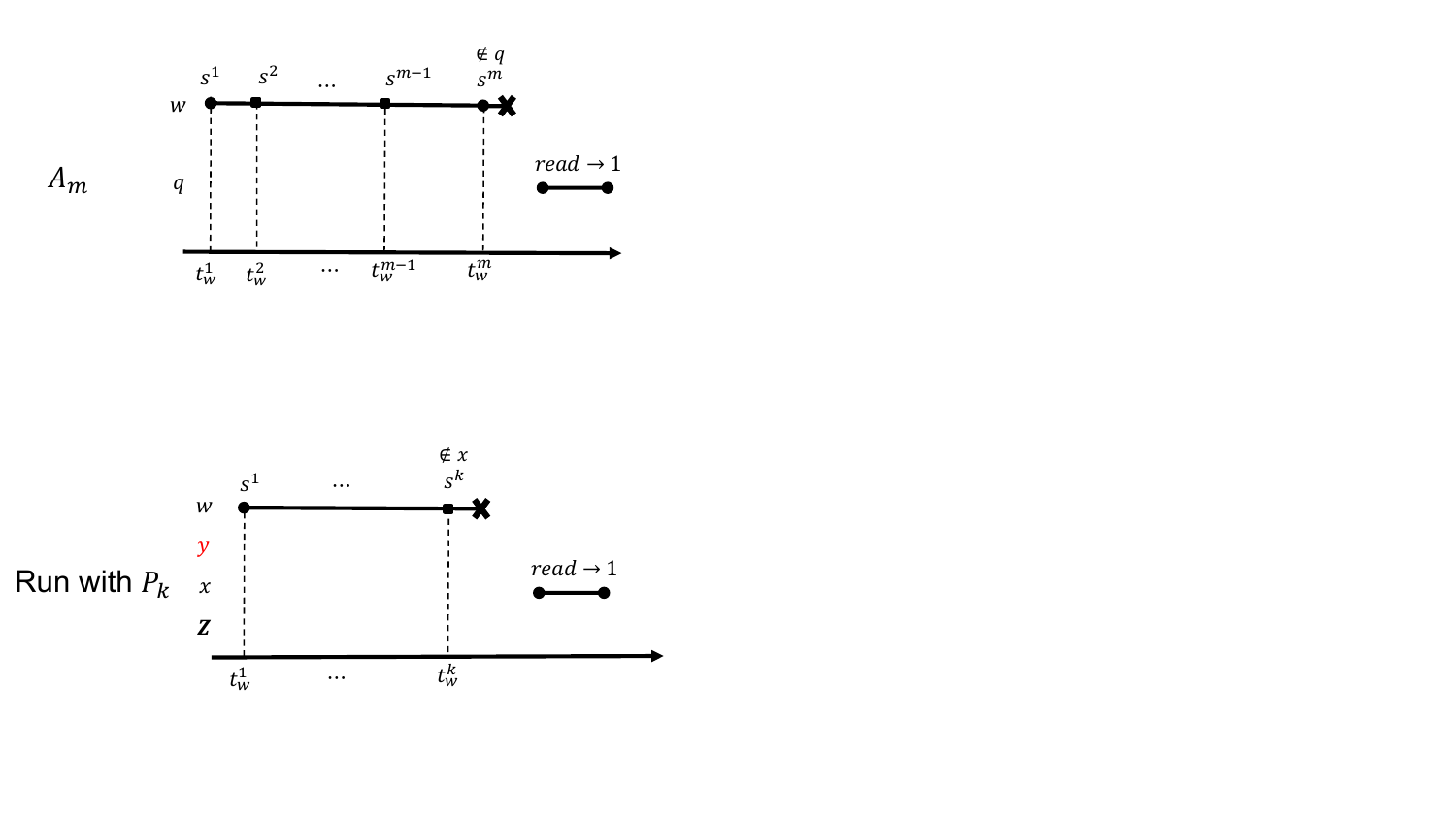}
    \caption{A {\run} with property $P_k$} 
    \label{E}
\endminipage
\minipage{0.48\textwidth}
    \centering 
    \includegraphics[width=0.9\textwidth]{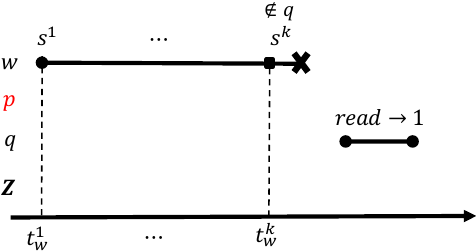}
    \caption{{\Run} $A_{k}$} 
    \label{Ak}
\endminipage\hfill
\newline

\minipage{0.48\textwidth}
    \centering 
    \includegraphics[width=0.9\textwidth]{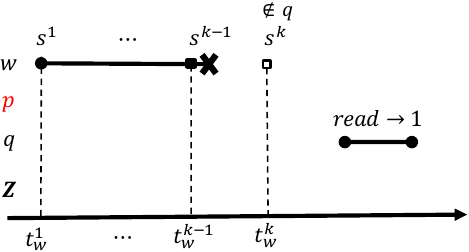}
    \caption{{\Run} $B_{k-1}$} 
    \label{Bk-1}
\endminipage
\minipage{0.48\textwidth}%
    \centering 
    \includegraphics[width=0.9\textwidth]{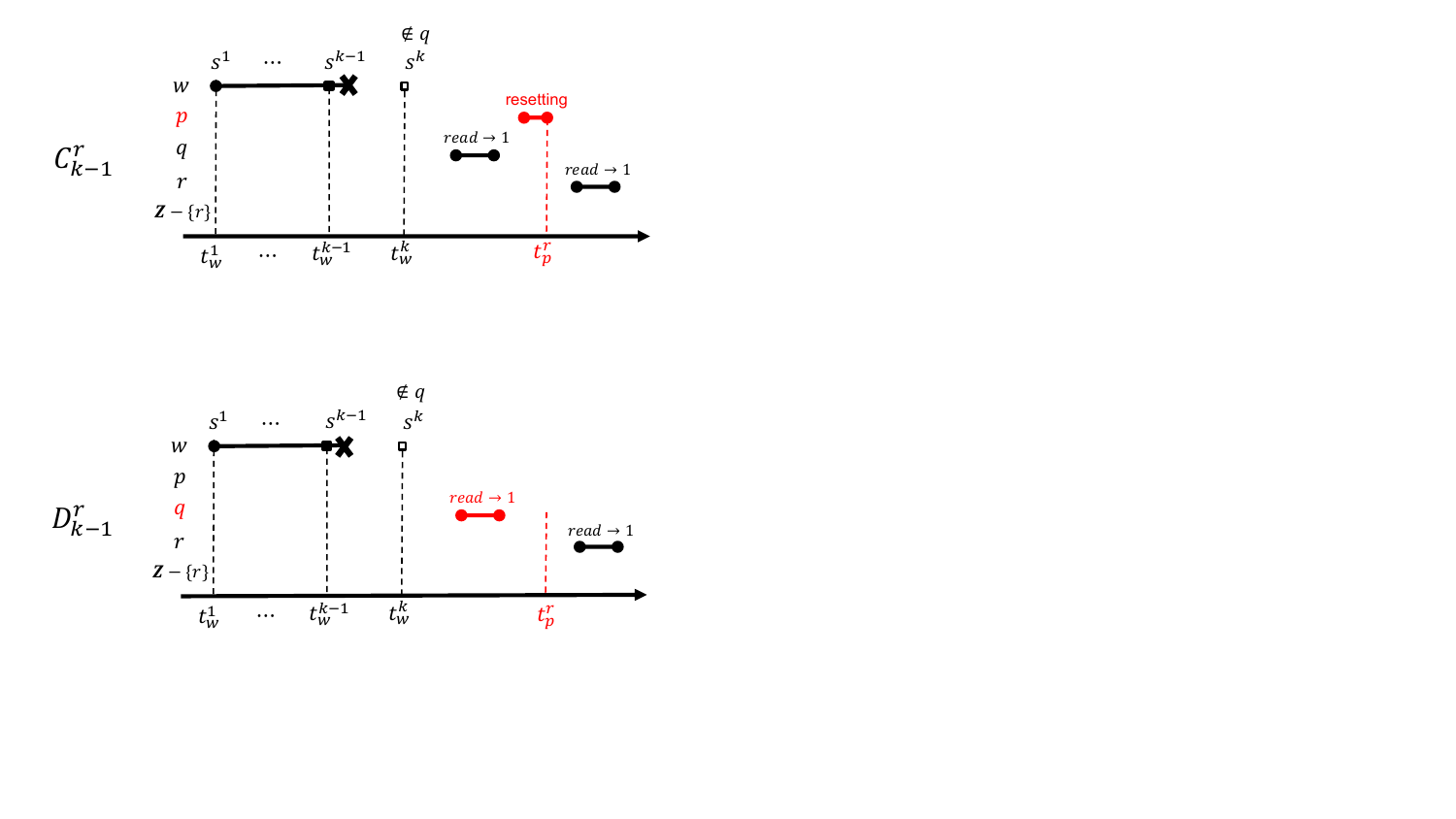}
    \caption{{\Run} $C_{k-1}^r$} 
    \label{Ck-1}
\endminipage\hfill
\newline

\minipage{0.48\textwidth}
    \centering 
    \includegraphics[width=0.9\textwidth]{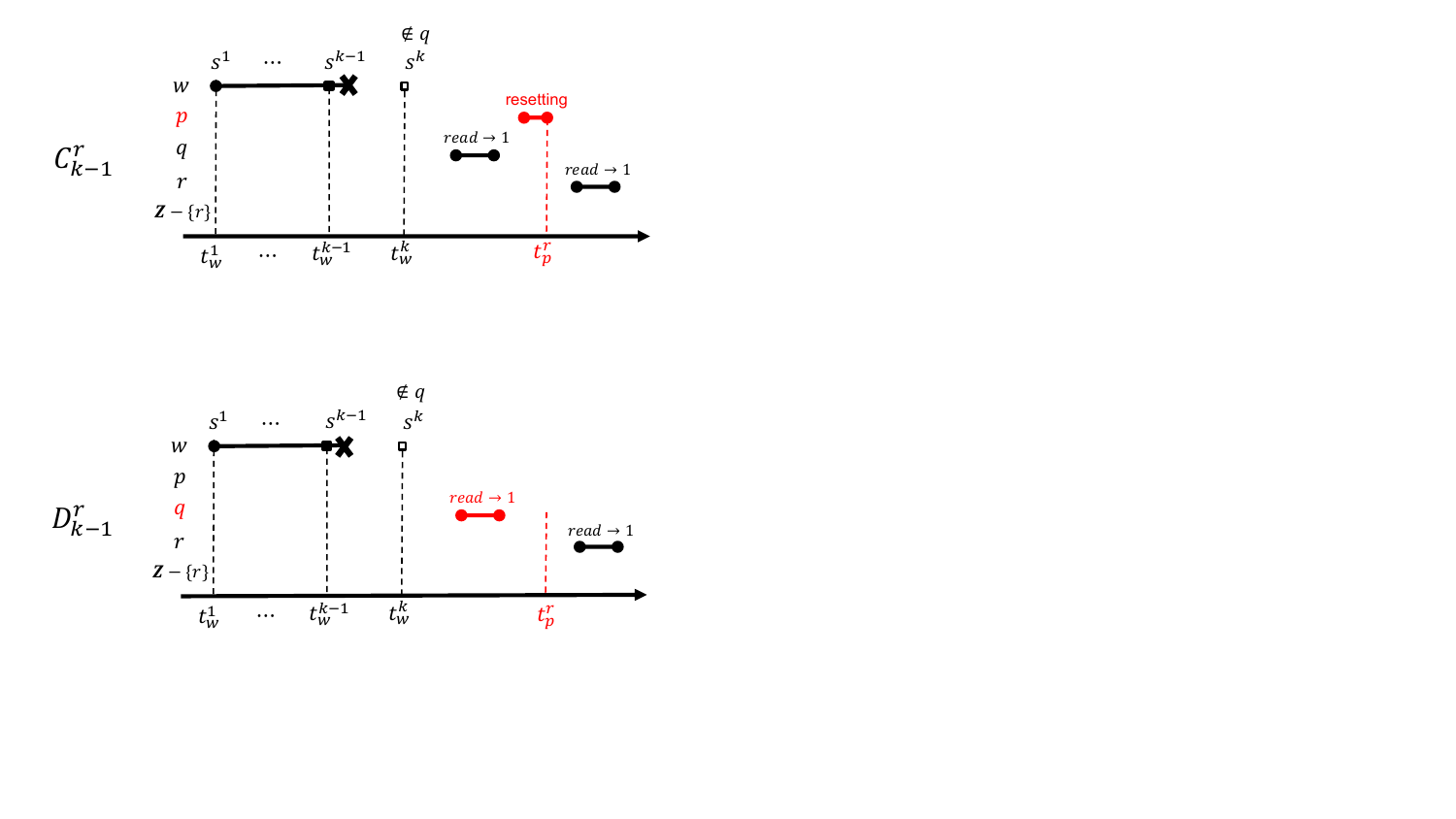}
    \caption{{\Run} $D_{k-1}^r$} 
    \label{Dk-1}
\endminipage
\minipage{0.48\textwidth}
    \centering 
    \includegraphics[width=0.9\textwidth]{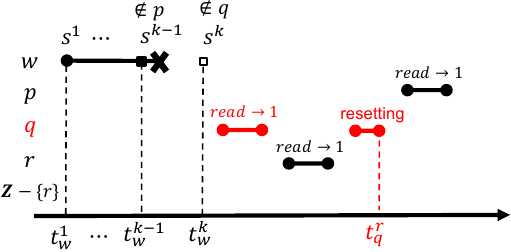}
    \caption{{\Run} $E_{k-1}^r$} 
    \label{Ek-1}
\endminipage\hfill
\newline

\minipage{0.48\textwidth}%
    \centering 
    \includegraphics[width=0.9\textwidth]{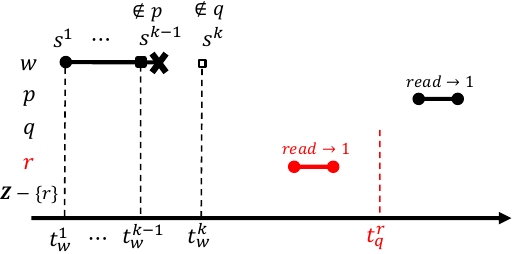}
    \caption{{\Run} $F_{k-1}^r$} 
    \label{Fk-1}
\endminipage
\minipage{0.48\textwidth}
    \centering 
    \includegraphics[width=0.9\textwidth]{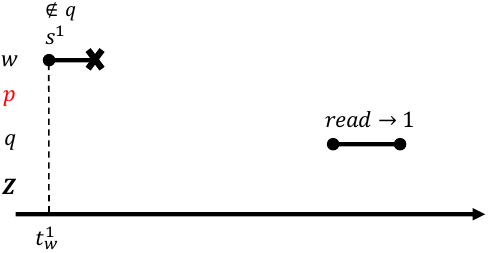}
    \caption{{\Run} $A_1$} 
    \label{A1}
\endminipage\hfill
\newline

\minipage{0.48\textwidth}
    \centering 
    \includegraphics[width=0.9\textwidth]{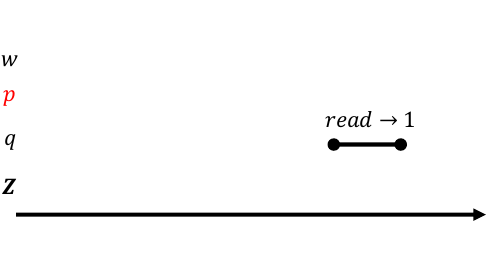}
    \caption{{\Run} $A_0$} 
    \label{A0}
\endminipage\hfill
\end{figure}

We now construct a sequence of {\run}s of $\AW$ that leads to a contradiction.
In all these {\run}s, the initial value of the implemented $\REG$ is $0$,
	the writer $w$ invokes only one operation into~$\REG$, namely a write of $1$,
	and each reader reads $\REG$ at most once (i.e., $\REG$ is only a ``one-shot'' binary register).
Moreover, in each of these {\run}s
	the writer crashes (but it is not {\ml}) and
	 there is at most one {\ml} reader;
	\mbox{the other $n-1$ readers~are~correct.}
Thus, these runs of $\AW$
	must satisfy the linearizability Properties 1 and 2 of
	Register Linearizability (Definition~\ref{LinearizableByz}),
	and every correct reader must complete any read operation that it invokes.

\begin{definition}\label{invisible}
Let $s$ be any step that the writer $w$ takes when executing the implementation $\AW$ of $\REG$.
Step $s$ is \emph{invisible
	to a reader $p$} if $s$ is either a local step of $w$, or 
	the reading or the writing of an atomic $\Reg{1}{n-1}$ that is not readable by $p$.
\end{definition}

Since there are $n$ readers,
	and the registers that $w$ can write 
	are atomic $\Reg{1}{n-1}$s,
	every write by $w$ into one of these registers
	is invisible to one of the readers.
So: 

\begin{observation}\label{invisibletoareader}
Let $s$ be any step that the writer $w$ takes when executing the implementation $\AW$ of $\REG$. 
Step $s$ is invisible to at least one of the $n$ readers.
\end{observation}

Let $A'_m$ be the following {\run} of $\AW$ (see Figure~\ref{S}):

\begin{compactitem}

\item The readers do not invoke any read operations, and so they take no steps.

\item The writer $w$ invokes an operation to write 1 on $\REG$.
By the Termination property of the implementation,
it completes this operation in a finite number of steps.

During this write operation, 
	$w$ takes a sequence of steps $s^1,...,s^{m}$
	such that each $s^i$ is either a local step, or
	the reading or the writing of an atomic $\Reg{1}{n-1}$
	($s^0$ is the invocation step of the write operation, and $s^m$ is the response step of this operation).
	
	Let $t_w^i$ be the time when step $s^i$ occurs.

\item After the time $t_w^{m}$ when $w$ completes its write operation, $w$ crashes.
	
\end{compactitem}

From the {\run} $A'_m$ of $\AW$, it is clear that the following run
	is also a {\run} of $\AW$ (see Figure~\ref{Am}):

\RN{$A_{m}$:}
\begin{compactitem}

\item The writer $w$ behaves exactly as in $A'_m$.
		
\item All the readers are correct.

\item Let $q$ be a reader such that step $s^m$ is invisible to $q$
	(by Observation~\ref{invisibletoareader}, this reader exists).

	 After the writer $w$ crashes at time $t_w^{m}$, $q$ invokes a read operation on $\REG$.
	 By the Termination property of the implementation,
	 $q$ completes its read operation.
By the linearizability properties of $\AW$,
	this read operation on $\REG$ returns 1.

\item All the other readers do not invoke any read operations, and so they take no steps.

\end{compactitem}

\begin{definition}\label{pk}
For every $k$,
 	$1 \le k \le m$,
 	a {\run} of $\AW$ has property $\pk{k}$
	 if the following holds:
\begin{compactenum}
\item Up to and including time $t_w^k$, all processes behave exactly as in $A_m$, that is:
\begin{compactitem}	
	\item $w$ takes steps $s^0, s^1, \ldots, s^k$
	\item All the readers take no steps.
\end{compactitem}

\item After taking step $s^k$ at time $t_w^k$, $w$ crashes before taking further steps.

\item There is a reader $x$ that is correct such that step $s^k$ is invisible to $x$.
After time $t_w^k$, reader~$x$ starts and completes a read operation on $\REG$ that returns~$1$.

\item There is a reader $y \neq x$ that may be correct or malicious. After time $t_w^k$, reader $y$ may or may not take steps.

\item There is a set {\sety} of $n-2$ distinct readers other than $x$ and $y$ that are correct and~take~no~steps.
\end{compactenum}
\end{definition}

Note that since $n\ge 3$, the set {\sety} contains at least one reader.
Furthermore, all the readers that take steps do so after time $t_w^k$.

A {\run} of $\AW$ with property $\pk{k}$ 
	 is shown in Figure~\ref{E}.
In this figure and all the subsequent ones,
	correct readers are in black font,
	while the reader that may be {\ml} is colored~{\color{red} red}
	(this~reader may have taken some steps after time $t_w^k$, but these are \emph{not} shown in the figure).
The ``$\notin x$'' on top of a step $s^i$ means that $s^i$
	is invisible to the reader~$x$.
The symbol \ding{54} indicates where the crash of the writer $w$ occurs.

Note that the {\run} $A_m$ of $\AW$ satisfies property $\pk{m}$:
	the reader denoted $x$ in property $\pk{m}$ is the reader $q$ of {\run} $A_m$,
	the reader $y$ of $\pk{m}$ is an arbitrary reader other than $q$ in $A_m$,
	and the set $Z$ of $\pk{m}$ is the set of the remaining $n-2$ readers in $A_m$.
So we have:

\begin{observation}\label{baserun}
Run $A_m$ of $\AW$ has property $\pk{m}$.
\end{observation}

\begin{claim}~\label{induction}
	 For every $k$, $1 \le k \le m$, there is a {\run} of $\AW$ that has property $\pk{k}$.
\end{claim}

\begin{proof}
We prove the claim by a backward induction on $k$, starting from $k=m$.

\smallskip\noindent
\textbf{Base Case:} $k=m$. This follows directly from Observation~\ref{baserun}.

\noindent
\textbf{Induction Step}: Let $k$ be such that $1 < k \le m$.

\RN{$A_{k}$.} Suppose
	there is a {\run} $A_{k}$
	of $\AW$ that has property $\pk{k}$ (this is the induction hypothesis).
We now show that there is a {\run} $A_{k-1}$ of $\AW$ that has property $\pk{k-1}$.

Since {\run} $A_{k}$ of $\AW$ satisfies $\pk{k}$, the following holds in $A_{k}$ (see Figure~\ref{Ak}):
\begin{compactitem}  

\item Up to and including time $t_w^k$, all processes behave exactly as in $A_m$.

\item After taking step $s^k$ at time $t_w^k$, $w$ crashes before taking further steps.

\item There is a reader $q$ that is correct such that step $s^k$ is invisible to $q$.
After time $t_w^k$, 
reader~$q$~starts and completes a read operation on $\REG$ that returns~$1$.

\item There is a reader $p \neq q$ that may be correct or malicious. After time $t_w^k$, reader $p$ may or may not take steps.\footnote{These steps are not shown in Figure~\ref{Ak}.}

\item There is a set {\sety} of $n-2$ distinct readers other than $p$ and $q$ that are correct and~take~no~steps.

\end{compactitem}

\RN{$B_{k-1}$.}
From the {\run} $A_{k}$ of $\AW$ we construct the following {\run} $B_{k-1}$ of $\AW$ (Figure~\ref{Bk-1}).
Intuitively, $B_{k-1}$ is exactly like $A_k$
	except that $w$ crashes just before taking step $s^k$ (so $B_{k-1}$ is just $A_k$ with the step $s^k$ ``removed'').
Run $B_{k-1}$ is possible because:
	(1) even though $p$ may have ``noticed'' the removal of step $s^k$,
		$p$~may be {\ml} (all the other readers are correct in this {\run}),
		and $p$ behaves exactly as in $A_k$,
	and (2) $q$ cannot distinguish between $A_k$ and $B_{k-1}$
		because $s^k$ is invisible to $q$, and $p$ and all the readers in {\sety} behave as in $A_k$;
		so $q$ behaves as in $A_k$, and in particular $q$ reads 1 in $B_{k-1}$ as in $A_{k}$.

More precisely in $B_{k-1}$:

\begin{compactitem}   
\item All processes behave exactly as in $A_k$ up to and including time $t_w^{k-1}$.

\item After taking step $s^{k-1}$ at time $t_w^{k-1}$, the writer $w$ crashes before taking step $s^k$.
	
\item All the readers in {\sety} are correct and take no steps, exactly as in $A_k$.

\item $p$ behaves exactly as in $A_k$.
	This is possible because
		even though $p$ may have ``noticed'' the removal of step $s^k$,
		$p$ may be {\ml} (all the other readers are correct in this {\run}).	
\item $q$ behaves exactly as in $A_k$.
	In particular, after time $t_w^k$, 
	$q$~starts and completes a read operation on $\REG$ that returns~$1$.
	This is possible because $q$ cannot distinguish between $A_k$ and $B_{k-1}$:
	 $s^k$ is invisible to $q$, and $p$ and all the readers in {\sety} behave exactly as in $A_k$.
\end{compactitem}

Note that in $B_{k-1}$ all processes behave exactly as in $A_m$ up to and including time $t_w^{k-1}$.

There are two cases:

\textbf{Case 1:} 
\emph{$s^{k-1}$ is invisible to $q$.}
Then $B_{k-1}$ is a {\run} of $\AW$ that has the property $\pk{k-1}$, as we wanted to show.

\textbf{Case 2:} 
\emph{$s^{k-1}$ is visible to $q$.}
Then, by Observation~\ref{invisibletoareader}, $s^{k-1}$ is invisible to $p$ or to~some~$r' \in ~{\sety}$.

\noindent
\RN{$C_{k-1}^r$.}
Let $r$ be \emph{any} reader in {\sety}.
From the {\run} $B_{k-1}$ of $\AW$ we construct the following {\run} $C_{k-1}^r$ of $\AW$ (Figure~\ref{Ck-1}).
$C_{k-1}^r$ is a continuation of $B_{k-1}$ where, after the correct reader $q$ reads 1, {\ml} $p$ wipes out any trace of the write steps that it may have taken so far, and then correct reader $r\in \sety$ reads~1 (this is the only value that $r$ can read, since correct $q$ previously read~1).
More precisely:

\begin{compactitem}   
\item $C_{k-1}^r$ is an extension of $B_{k-1}$:
	all processes behave exactly as in $B_{k-1}$ up to and including the time when $q$ completes its read operation on $\REG$.

\item All the readers in $\sety-\{r\}$ are correct and take no steps\footnote{If $n=3$, then the set $\sety-\{r\}$ is empty.}.

\item After the correct reader $q$ completes its read operation on $\REG$:

\begin{compactitem}   

	\item $q$ takes no steps.

	\item 
	$p$ resets all the atomic registers that it can write to their initial values.
	Process $p$ can do so because it may be {\ml} (all the other readers are correct in this {\run}).
	Let~$t_p^r$~be the time when
	$p$ completes all the register resettings.

	\item Correct reader $r$
	starts a read operation on $\REG$ after time $t_p^r$.
	It takes no steps before this read.
	By the Termination property of the implementation,
		$r$ completes its read operation.
	Since $w$ is not {\ml}, and
		the read operation by correct $q$ precedes the read operation by $r$ and returns~$1$,
		by the linearizability of $\AW$,
		the read operation by correct reader $r$~also~returns~$1$.
\end{compactitem}
\end{compactitem}

Note that in $C_{k-1}^r$ all processes behave exactly as in $A_m$ up to and including time $t_w^{k-1}$.

\RN{$D_{k-1}^r$.}
We can now construct the following {\run} $D_{k-1}^r$ of $\AW$ (Figure~\ref{Dk-1}).
Intuitively,	we obtain $D_{k-1}^r$ from $C_{k-1}^r$ by removing all the steps of $p$.
So reader $p$ (which was {\ml} in $C_{k-1}^r$)
	is now a correct process that takes no steps.
Despite this removal, $q$ behaves exactly as in $C_{k-1}^r$ because $q$
		(which was correct in $C_{k-1}^r$)
		may now be {\ml}.
The writer $w$ also behaves exactly as in $C_{k-1}^r$ because it cannot see the removal of $p$'s steps:
	\mbox{they all occur after time~$t_w^{k-1}$.}
Correct reader $r$ behaves exactly as in $C_{k-1}^r$
	because it also cannot see the removal of $p$'s steps:
	in both $C_{k-1}^r$ and $D_{k-1}^r$,
	$r$ does not ``see'' any steps of $p$.
\mbox{So $r$ reads 1 in $D_{k-1}^r$ as in $C_{k-1}^r$.}

More precisely in~$D_{k-1}^r$:

\begin{compactitem}   
\item $w$ behaves exactly as in $C_{k-1}^r$.

\item All the readers in $\sety-\{r\}$ are correct and take no steps, as in $C_{k-1}^r$.

\item $p$ is correct and it takes no steps.
	So all the
	atomic
	registers that it can write retain their initial~values.

\item $q$ behaves exactly as in $C_{k-1}^r$.
	  This is possible because
		even though $q$ may have ``noticed'' the removal of $p$'s steps,
		$q$ may be {\ml} (all the other readers are correct in this {\run}).
	
\item $r$ behaves exactly as in $C_{k-1}^r$.
	In particular,
	after time $t_p^r$ reader $r$~starts and completes a read operation on $\REG$ that returns~$1$.
	This is possible because $r$ cannot distinguish between $C_{k-1}^r$ and~$D_{k-1}^r$:
	$r$ cannot see the removal of $p$'s steps, 
		and $q$ and all the readers in $\sety-\{r\}$ behave exactly as in $C_{k-1}^r$.
\end{compactitem}

Note that in $D_{k-1}^r$ all processes behave exactly as in $S$ up to and including time $t_w^{k-1}$.

If $s^{k-1}$ is invisible to reader $r$, it is clear that the {\run} $D_{k-1}^r$ of $\AW$ has property $\pk{k-1}$.

Recall that (1) the reader $r$ above is an \emph{arbitrary} reader in {\sety},
	and
	(2) $s^{k-1}$ is invisible to~$p$ or to some reader $r' \in \sety$.
So there are two cases:

\textbf{Subcase 2a:} \emph{$s^{k-1}$ is invisible to some reader $r' \in \sety$.}
In the above we proved that the {\run} $D_{k-1}^{r'}$ of $\AW$ has property $\pk{k-1}$,
	as we wanted to show.
	
\textbf{Subcase 2b:} \emph{$s^{k-1}$ is invisible to $p$.}

\RN{$E_{k-1}^r$.}
We construct the continuation $E_{k-1}^r$ of $D_{k-1}^r$ shown in Figure~\ref{Ek-1}:
	after $r$ reads~1, {\ml} process $q$
	wipes out any trace of the write steps that it has taken so far,
	and then correct reader $p$ 
	starts a read operation on $\REG$.
		 By the Termination property of the implementation,
		this read operation by $p$ must complete. 
	Since correct $r$ previously read~1,
		by the linearizability of $\AW$, 
		$p$ must also read 1.

More precisely in $E_{k-1}^r$:
\begin{compactitem} 

\item $E_{k-1}^r$ is an extension of the {\run} $D_{k-1}^r$:
		all processes behave exactly as in $D_{k-1}^r$ up to and including the time when $r$ completes its read operation on $\REG$.

\item All the readers in $\sety-\{r\}$ are correct and take no steps, as in $D_{k-1}^r$.

\item After the correct reader $r$ completes its read operation on $\REG$:

\begin{compactitem}   

	\item $r$ takes no steps.

	\item 
	$q$ resets all the atomic registers that it can write to their initial values.
	Process $q$ can do so because it may be {\ml} (all the other readers are correct in this {\run}).
	Let~$t_q^r$~be the time when
	$q$ completes all the register resettings.

	\item Correct reader $p$
	starts a read operation on $\REG$ after time $t_q^r$.
	It takes no steps before this read.
	 By the Termination property of the implementation,
		$p$ completes its read operation.
		Since $w$ is not {\ml},
		and the read operation by correct $r$ precedes the read operation by $p$ and returns~$1$,
		by the linearizability of $\AW$,
		the read operation by correct reader $p$~also~returns~$1$.
\end{compactitem}

\end{compactitem}

Note that in $E_{k-1}^r$ all processes behave exactly as in $A_m$ up to and including time $t_w^{k-1}$.

\RN{$F_{k-1}^r$.}
Finally, we construct the {\run} $F_{k-1}^r$ of $\AW$ by removing
	all the steps of $q$ from $E_{k-1}^r$ (see Figure~\ref{Fk-1}); so $q$ (which was {\ml} in $E_{k-1}^r$)
	is now a correct process that takes no steps.
Despite this removal, $r$ behaves exactly as in $E_{k-1}^r$ because $r$ (which was correct in $E_{k-1}^r$) may now be {\ml}.
The writer $w$ also behaves exactly as in $E_{k-1}^r$ because it cannot see the removal of $q$'s steps:
	they all occur after time $t_w^{k-1}$.
Finally, correct $p$ behaves exactly as in $E_{k-1}^r$ because it also cannot see the removal of $q$'s steps:
	in both $E_{k-1}^r$ and $F_{k-1}^r$,
	$p$ does not ``see'' any steps of $q$. 
So $p$ reads 1 in $F_{k-1}^r$ as in $E_{k-1}^r$.

More precisely in $F_{k-1}^r$:

\begin{compactitem} 

\item $w$ behaves exactly as in $E_{k-1}^r$.

\item All the readers in $\sety-\{r\}$ are correct and take no steps, as in $E_{k-1}^r$.

\item $q$ is correct and it takes no steps. So all the atomic registers that it can write retain their initial values.
	
\item $r$ behaves exactly as in $E_{k-1}^r$.
	  This is possible because
		even though $r$ may have ``noticed'' the removal of $q$'s steps,
		$r$ may be {\ml} (all the other readers are correct in this {\run}).
	
\item $p$ behaves exactly as in $E_{k-1}^r$.
	In particular, after time $t_q^r$ reader
	$p$~starts and completes a read operation on $\REG$ that returns~$1$.
	This is possible because $p$ cannot distinguish between $E_{k-1}^r$ and~$F_{k-1}^r$:
	$p$ cannot see the removal of $q$'s steps, 
		and $r$ and all the readers in $\sety-\{r\}$ behave exactly as in $E_{k-1}^r$.

\end{compactitem}

Note that in $F_{k-1}^r$ all processes behave exactly as in $A_m$ up to and including time $t_w^{k-1}$.

Since $s^{k-1}$ is invisible to $p$, it is clear that the {\run} $F_{k-1}^r$ of $\AW$ has property $\pk{k-1}$.

The above concludes the proof of the Induction Step of Claim~\ref{induction}:
	we proved that, in all possible cases, there is a {\run} of ${\AW}$ that has property $\pk{k-1}$, as we needed to show.
\end{proof}

By the Claim~\ref{induction} that we just proved,
	the implementation $\AW$ of $\REG$ has a {\run} $A_1$ with property~$\pk{1}$.
By this property, the following holds in $A_{1}$ (see Figure~\ref{A1}):
\begin{compactitem}  

\item Up to and including time $t_w^1$, all processes behave exactly as in $A_m$.

\item After taking step $s^1$ at time $t_w^1$, $w$ crashes before taking further steps.

\item There is a reader $q$ that is correct such that step $s^1$ is invisible to $q$.
After time $t_w^1$, 
reader~$q$~starts and completes a read operation on $\REG$ that returns~$1$.

\item There is a reader $p \neq q$ that may be correct or malicious. After time $t_w^1$, reader $p$ may or may not take steps.

\item There is a set {\sety} of $n-2$ distinct readers other than $p$ and $q$ that are correct and~take~no~steps.

\end{compactitem}

From the {\run} $A_{1}$ of $\AW$ we construct the following {\run} $A_0$ of $\AW$ (Figure~\ref{A0}).
Intuitively, $A_0$~is~the same as $A_1$ except that the writer is correct
	 and does not take any steps
	(i.e., $w$ does not invoke a write 1 operation on $\REG$), but all the readers
	behave the same as in $A_1$ and so $q$ still reads 1.
	This {\run} of $\AW$ is possible because:
	(1) even though $p$ may have ``noticed'' that $w$~does~not take the step $s^1$,
	$p$~may be {\ml} (all the other readers are correct in this {\run}),
	and $p$ behaves exactly as in $A_1$,
	and 
	(2) $q$ cannot distinguish between $A_1$ and $A_0$
		because $s^1$ is invisible to $q$, and $p$ and all the readers in {\sety} behave as in $A_1$.
So $q$ reads 1 from $\REG$ in $A_0$ exactly as in $A_{1}$.
Since the initial value of the implemented register $\REG$ is $0$,
	{\run} $A_0$ of the implementation $\AW$
	of $\REG$ violates the linearizability~of~$\AW$
	--- a contradiction that concludes the proof of Theorem~\ref{Theo-Impossibility-Result}.
\end{proof}

\vspace{-2mm}

It is easy to verify that the above proof
	holds (without any change)
	even if every reader is given atomic $\Reg{1}{n}$s that it can write and \emph{all} other processes can read, and the writer is the only process that does not have an atomic $\Reg{1}{n}$.		
Thus:

\begin{theorem}\label{STheo-Impossibility-Result}
For all $n \ge 3$, in a system with $n+1$ processes that are subject to Byzantine failures,
	there is \emph{no}
	linearizable implementation of a 
	$\Reg{1}{n}$
	that satisfies Termination,
	even under the assumption that:
	
\begin{compactitem}

\item the writer $w$ of the implemented
	$\Reg{1}{n}$  can only crash and at most one reader can be {\ml}, and

\item $w$ has atomic $\Reg{1}{n-1}$s, and every reader has atomic $\Reg{1}{n}$s.

\end{compactitem}

\end{theorem} 

\vspace{-4mm}

%
% !TEX root = main.tex

\section{Register implementation algorithm}
We now give an implemention of a $\Reg{1}{n}$ 
	from atomic $\Reg{1}{1}$s in systems with Byzantine process failures;
	this implementation is linearizable,
	and it satisfies the Termination property 
	provided the writer of the register
	\emph{or} any number of the readers,
	\emph{but not both}, can be faulty.
More precisely, it is a \emph{\vd} implementation,  as we define below.

\begin{definition}\label{valid}
A register implementation is \emph{{\vd}} if the following holds:
\begin{compactitem}
\item \label{regpropertiesl}  \emph{It is linearizable.}

\item \label{regpropertiesw} \emph{If the writer is correct or no reader is {\ml}, it satisfies the Termination property.}
\end{compactitem}
\end{definition}

Note that, when executed in a system where processes can only crash,
	a valid register implementation is linearizable and ``terminating'' (unconditionally).

\subsection{Some difficulties to overcome}\label{algo-difficulty}
Note that in a system with Byzantine process failures, implementing a $\Reg{1}{n}$ 
	from $\Reg{1}{1}$s is non-trivial, even if the writer can only crash.
To see this, we now illustrate some of the issues that arise.
First note that with $\Reg{1}{1}$s the writer cannot \emph{simultaneously} inform all the readers about a new write.
So different readers may have different views of whether there is a write in progress:
	some readers may not see it, some readers may see it as still in progress, while other readers may see it as having completed. 
Thus readers must communicate with each other to avoid ``new-old'' inversions in the values they read.
With non-Byzantine failures, readers can easily coordinate their reads because they
	can trust the information they pass to each other.
With Byzantine failures, however, readers cannot blindly trust what other readers tell them.

For example, suppose a reader $q$
	is aware that a write $v$ operation is in progress (say because the writer $w$ directly ``told'' $q$ about it via the register that they share).
To avoid a ``new-old'' inversion, $q$ checks whether any other reader $q'$ has already read $v$ (because it is possible that from $q'$'s point of view,
	the write of $v$ already completed).
Suppose some $q'$ ``warns'' $q$ that it has already read the new value $v$, and so $q$ also reads $v$. 
But what if $q'$ is {\ml} and ``lied'' to $q$ (and only to $q$) about having read $v$?
Note that $q$ may be the \emph{only} correct reader currently aware that the write of $v$ is in progress (say because $w$ is slow).
Now suppose that
	a reader $q''$ that is \emph{not} aware of the write of $v$ also wants to read:
	if $q''$ reads the old value of the register
	this creates a ``new-old'' inversion with the newer value $v$ that $q$ previously read;
	but if $q''$ reads $v$ because $q$ warns $q''$ that it had read $v$, then $q''$ may be reading a value $v$ \emph{that was never written by the correct~writer~$w$}: $q$ itself could be {\ml} and could have ``lied'' about reading $v$!

The above is only one of many possible scenarios illustrating why it is not easy to
	implement a $\Reg{1}{n}$
	from $\Reg{1}{1}$s
	when some readers can be {\ml}, even if the writer itself is not {\ml}.
	
\subsection{A recursive solution}
To simplify this task, we do not \emph{directly} implement a $\Reg{1}{n}$ using \emph{only} $\Reg{1}{1}$s.
Instead, we first give an implementation $\I{n}$ of a $\Reg{1}{n}$
	that uses some $\Reg{1}{n-1}$s
	together with some $\Reg{1}{1}$s.
Then, by replacing the $\Reg{1}{n-1}$s with $\I{n-1}$ implementations,
	we get an implementation of the $\Reg{1}{n}$
	that uses some $\Reg{1}{n-2}$s and some $\Reg{1}{1}$s.
By recursing down to $n=2$, this gives an implementation of the $\Reg{1}{n}$ that uses only $\Reg{1}{1}$s.
In other words, we can implement a $\Reg{1}{n}$ 
	from $\Reg{1}{1}$s 
	with a \emph{recursive} construction that gradually reduces the number of readers of the base registers that it uses (all the way down to 1).
We now describe this recursive implementation and prove its correctness.

\vspace{-2mm}

\subsection{Implementing a \texorpdfstring{\Reg{1}{n}}{} from \texorpdfstring{\Reg{1}{n-1}s}{}}\label{impIn}
Algorithm~\ref{1wnr} is an implementation $\I{n}$ of a $\Reg{1}{n}$ that is writable by a process $w$
	and readable by every process in $\{p\} \cup Q$, where
	$p$ is an arbitrary reader and all remaining $n-1$ readers are in $Q$.
We distinguish $p$ from the other readers in $Q$ because $p$ and $q\in Q$ use different procedures for reading the implemented $\Reg{1}{n}$.
$\I{n}$
	uses two kinds of registers:
	\emph{atomic} $\Reg{1}{1}$s and
	\emph{implemented} $\Reg{1}{n-1}$s.
We will show that  
	$\I{n}$ is \emph{{\vd}} under the assumption that the $\Reg{1}{n-1}$ implementations that it uses are also \emph{{\vd}}.

{\large \textbf{Notation.}} Recall that if $R$ is an atomic register, all operations applied to $R$ are instantaneous,
	whereas if $R$ is an implemented register, each operation spans an interval of time, from an invocation to a response.
However, since we assume that the $\Reg{1}{n-1}$ implementations that $\I{n}$ uses are valid and therefore \emph{linearizable},
	we can think of each operation on an implemented $\Reg{1}{n-1}$ as being atomic, i.e.,
	as if it takes effect instantaneously at some point during its execution interval~\cite{linearizability}.
Thus to read or write a register $R$ we use the same notation, irrespective of whether $R$ is atomic or implemented.
In particular, in our implementation algorithm (shown in Figure~\ref{1wnr})
	we use the following notation:

\begin{compactitem}
\item ``$R \gets v$''  denotes the operation that writes $v$ into $R$.

\item ``\textbf{if} $R = \mathit{val}$ \textbf{then} $\ldots$'' means ``read register $R$ and
	if the value read  is equal to $\mathit{val}$~then~$\ldots$''
\end{compactitem}

The shared registers used by the implementation are as follows:
\begin{compactitem}

\item 
$R_{ss'}$ is an atomic $\Reg{1}{1}$
	writable by process $s$ and readable~by~process $s'$.\footnote{If $s=s'$, this ``shared register'' is actually just a \mbox{local register of process~$s$.}}
	
\item $R_{wQ}$ is an implemented $\Reg{1}{n-1}$ 
	writable by $w$ and readable by every $q \in Q$.

\item$R_{pQ}$ is an implemented $\Reg{1}{n-1}$ 
	writable by $p$ and readable by every $q \in Q$.
\end{compactitem}

{\large \textbf{Algorithm description.}} The  implementation $\I{n}$
	of a $\Reg{1}{n}$ from $\Reg{1}{n-1}$s consists of two procedures,
	namely $\Wu()$ for the writer $w$,
	and $\Ru()$ for each reader $r$ in $\{p\} \cup Q$.
To write a value $u$, the writer $w$ executes $\Wu(u)$.
If $u$ is the $k$-th value written by $w$,
	$\Wu(u)$ first forms the unique tuple $\langle k,u \rangle$
	and then it calls the lower-level write procedure $\wu(\langle k,u \rangle)$ to write this tuple.
Intuitively, $\Wu()$ tags the values that it writes with a counter value to make them unique
	and to indicate in which order they~are~written.

To read a value, a reader $r \in \{p\} \cup Q$ calls $\Ru()$,
	and this in turn calls a lower-level read procedure $\ru_r()$ that reads tuples written by $\wu()$.
There are two version of the procedure $\ru_r()$: one used when $r=p$ and one used when $r \in Q$.
If $\ru_r()$ returns a tuple of the form $\langle j,v \rangle$,
	then $\Ru()$ strips the counter $j$ from the tuple and returns the value $v$ as the value read
	(otherwise $\Ru()$ returns~$\bot$ to indicate a read failure).
	
Thus the lower-level procedures
	$\wu()$, $\ru_p()$, and $\ru_q()$ for each $q \in Q$,
	are executed to write and read unique tuples of the form $\langle k,u \rangle$.
We now describe how these procedures~work.

$\bullet$\hspace{1mm} To execute $\wu(\langle k,u  \rangle)$,
	process $w$ first writes $(\ready,\lw,\langle k,u \rangle)$ in the 
	$R_{wp}$ register
	that $p$ can read, and then in the 
	$R_{wQ}$ register that every process in $Q$ can read;
	$\lw$ is the \emph{last} tuple written by $w$ before $\langle k,u \rangle$ (so
	$\lw = \langle k-1,u'\rangle$ for some $u'$).
Then, $w$ writes $(\done,\langle k,u \rangle)$ into $R_{wp}$ and then~into~$R_{wQ}$.

% !TEX root = main.tex

\begin{algorithm}[!t]
\caption{Implementation $\I{n}$ of a $\Reg{1}{n}$
	writable by (an arbitrary) process~$w$ and readable by the $n$ processes in $\{p\}\cup Q$, for $n\ge2$.
It uses two $\Reg{1}{n-1}$s and some $\Reg{1}{1}$s.
}\label{1wnr} 
\ContinuedFloat

\vspace{-3mm}
{\footnotesize
\begin{multicols}{2}

\textsc{Atomic Registers}
\vspace{.7mm}

~~~~$R_{wp}$: 
	$\Reg{1}{1}$;
	initially $(\done,\langle 0,u_0 \rangle)$
	
\vspace{1mm}

~~~~For all processes $q$ and $q'$ in $Q$: 

\hspace{1cm}
	$R_{qq'}$: 
	$\Reg{1}{1}$;
	 initially $\langle 0 , \rinit \rangle$

\vspace{1.3mm}

\textsc{Implemented Registers}
\vspace{.7mm}

\hspace{3.5mm}$R_{wQ}$: 
	$\Reg{1}{n-1}$; initially $(\done,\langle 0,u_0 \rangle)$

\hspace{2.5mm} $R_{pQ}$: 
	$\Reg{1}{n-1}$;
	initially $\langle 0 , \rinit \rangle$

\columnbreak

\textsc{Local variables}

\vspace{.7mm}

~~~~$c$: variable of $w$; initially $0$

~~~~$\lw$: variable of $w$; initially $\langle 0 , \rinit \rangle$

~~~~$\lc$: variable of $p$; initially $0$

\end{multicols}
\hrule
\vspace{-2mm}
\begin{algorithmic}[1]
\begin{multicols}{2}
\Statex
\textsc{\Wu($u$):}  ~~~~~~~~$\triangleright$ executed by the writer $w$
\Indent
\State \label{c} $c \gets c+1$  
\State \label{callW} \textbf{call} \textsc{\wu($\langle c,u \rangle$)}
\State \Return $\Done$
\EndIndent

\columnbreak

\noindent
\textsc{$\Ru$():}\Comment{executed by any reader $r$ in $\{p\}\cup Q$}
\Indent
\State \label{callru}\textbf{call} $\ru_r$()
\State \textbf{if} this call returns some tuple $\langle k,u \rangle$ \textbf{then}
\Indent
\State $\Return$ $u$ 
\EndIndent
\State \textbf{else} \Return $\bot$ 
\EndIndent
\end{multicols}
\vspace{-1mm}
\hrule
\vspace{4mm}
\Statex

\textsc{\wu($\langle k,u \rangle$):} \Comment{executed by $w$ to do its $k$-th write}
\Indent
\State \label{wpP} $R_{wp}\gets (\ready,\lw,\langle k,u \rangle)$
\State \label{wqP} $R_{wQ}\gets (\ready,\lw,\langle k,u \rangle)$
\State \label{wpC} $R_{wp}\gets (\done,\langle k,u \rangle)$
\State \label{wqC} $R_{wQ}\gets (\done,\langle k,u \rangle)$
\State $\lw \gets \langle k,u \rangle$
\State \Return $\Done$
\EndIndent

\vspace*{2mm}

\noindent
\textsc{$\ru_p$():}\Comment{executed by reader $p$}
\Indent
\State \label{pCommit}\textbf{if} $R_{wp}=(\done,\langle k,u \rangle)$ for some $\langle k,u \rangle$ with $k \ge \lc$ \textbf{then}
\Indent
\State \label{tellq} \label{ptoq}$R_{pQ}\gets \langle k,u \rangle$
\State \label{lc} $\lc \gets k$
\State \label{pCrt} \Return $\langle k,u \rangle$
\EndIndent
\State \label{pPre}\textbf{elseif} $R_{wp}=(\ready, \lw, - )$  for some $\lw$ \textbf{then}
\Indent
\State \label{pPrt}\Return $\lw$
\EndIndent
\State \textbf{else} \Return $\bot$ 
\EndIndent

\vspace*{2mm}

\noindent
\textsc{$\ru_q()$:}\Comment{executed by any reader $q$ in $Q$}
\Indent
\State \label{qCommit} \textbf{if} $R_{wQ}=(\done,\langle k,u \rangle)$ for some $\langle k,u \rangle$ \textbf{then}
\Indent
\State \label{qCrt} \Return $\langle k,u \rangle$
\EndIndent

\State \label{qPre} \textbf{elseif} $R_{wQ}=(\ready,\lw,\langle k,u \rangle)$ for some $\lw$ and some $\langle k,u \rangle$ \textbf{then}

\Indent

\State {\color{blue} \textbf{cobegin}}

\textsc{\color{blue} {~~~~// Thread 1}}

\Indent

\State \label{rp} \textbf{repeat forever}

\Indent
\State \label{qCommitp} \textbf{if} $R_{wQ}=(\done,\langle k',- \rangle)$ for some $k'\ge k$ \textbf{then}
\Indent
\State \label{qPrt4}\Return $\langle k,u \rangle$
\EndIndent
\State \label{qPrep} \textbf{if} $R_{wQ}=(\ready, - ,\langle k', - \rangle)$  for some $k' > k$ \textbf{then}
\Indent
\State \label{qPrt5}\Return $\langle k,u \rangle$
\EndIndent
\EndIndent
\EndIndent

\textsc{\color{blue} {~~~~// Thread 2}}

\Indent
\State \label{qfromp}\textbf{if} $R_{pQ}=\langle k',- \rangle$ \text{for some} $k'\ge k$ \textbf{then}
\Indent
\State \label{qtor} \textbf{for every} process $q' \in Q$ \textbf{do} $R_{qq'}\gets \langle k,u \rangle$
\State \label{qPrt2}\Return $\langle k,u \rangle$
\EndIndent

\State \label{rtoq}\textbf{elseif} $R_{q'q}=\langle k',- \rangle$ \text{for some} $q' \in Q$ and some $k'\ge k$ \textbf{then}
\Indent

\State \label{qfrompp} \textbf{if} $R_{pQ}=\langle k',- \rangle$ \text{for some} $k'\ge k$ \textbf{then}
\Indent
\State \label{qtorp} \textbf{for every} process $q' \in Q$ \textbf{do} $R_{qq'}\gets \langle k,u \rangle$
\State \label{qPrt3}\Return $\langle k,u \rangle$
\EndIndent
\State \textbf{else} \textbf{exit} \textsc{Thread 2}
\EndIndent
\State \label{et2} \textbf{else}  \Return \label{qPrtlw} $\lw$
\EndIndent

\State {\color{blue} \textbf{coend}}

\EndIndent

\State \label{qrtbot} \textbf{else} \Return $\bot$

\EndIndent
\end{algorithmic}
}
 \hfill
\end{algorithm}

$\bullet$\hspace{1mm} To execute $\ru_p()$, process $p$ reads $R_{wp}$ (line~\ref{pCommit}).
If $p$ reads $(\done,\langle k,u \rangle)$ 
	with a $k$ at least as big as those it saw before,
	 it returns $\langle k,u \rangle$ as the tuple read (line~\ref{pCrt});
	just before doing so, however, it writes $\langle k,u \rangle$
	in the $R_{pQ}$ register that every process $q \in Q$ can read (line~\ref{tellq}): intuitively, this is to ``warn'' them that
	$p$ read a ``new'' tuple, to help avoid ``new-old'' inversions in the tuples read.
	
If $p$ reads $(\ready,\lw,-)$ (line~\ref{pPre}), then it returns $\lw$ as the tuple read
	(without giving any ``warning'' about this to processes in $Q$).

If $p$ reads anything else from $R_{wp}$, then it returns~$\bot$ (the writer is surely {\ml}).

$\bullet$\hspace{1mm}To execute $\ru_q()$, process $q\in Q$ reads $R_{wQ}$.
If $q$ reads $(\done,\langle k,u \rangle)$ (line~\ref{qCommit}),
	it just returns $\langle k,u \rangle$ as the tuple read in~line~\ref{qCrt}
	(without ``warning'' other processes).
	
If $q$ reads $(\ready,\lw,\langle k,u \rangle)$ (line~\ref{qPre}), then
	$q$ \emph{cannot} simply return $\lw$ as the tuple read:
	this is because $p$ could have already read $(\done,\langle k,u \rangle)$ from $R_{wp}$
	and so $p$ could have already read the ``newer'' tuple $\langle k,u \rangle$ with $\ru_p()$.
So $q$ must determine whether to return $\lw$ or $\langle k,u \rangle$.
To do so, $q$ forks two threads and executes them in parallel (we will explain why below).

If $q$ does not read values of the form $(\ready,\lw,\langle k,u \rangle)$ or $(\done,\langle k,u \rangle)$ from $R_{wQ}$, then  $q$ just returns $\bot$ in line~\ref{qrtbot} ($w$ is surely {\ml}).

In \textsc{Thread 1},
	process $q$ keeps reading $R_{wQ}$: if it ever reads
	$(\done,\langle k',- \rangle)$ with $k' \ge k$,
	or $(\ready, - ,\langle k', - \rangle)$ with $k'>k$,
	it simply returns $\langle k,u \rangle$ as the tuple read.
Note that \emph{if the writer $w$ is correct},
	then $q$ \emph{cannot} spin forever in this thread without returning $\langle k,u \rangle$.

In \textsc{Thread 2},
	process $q$ first reads the register $R_{pQ}$ to see whether $p$ ``warned'' processes in $Q$
	that it read a tuple at least as ``new'' as $\langle k,u \rangle$.

\textbf{-}\hspace{1mm} If $q$ sees that $R_{pQ}$ contains a tuple
	at least as ``new'' as $\langle k,u \rangle$  (line~\ref{qfromp}),
	then $q$ returns $\langle k,u \rangle$
	as the tuple read (line~\ref{qPrt2});
	but before doing so, $q$ successively writes $\langle k,u \rangle$
	in each register $R_{qq'}$ such that $q' \in Q$ (line~\ref{qtor}): intuitively, this is to ``warn'' each process in $Q$ that $q$
	read this ``new'' tuple.
	
\textbf{-}\hspace{1mm} Otherwise, $q$ reads every $R_{q'q}$ register
	to avoid a new-old inversion with any tuple read by any process $q' \in Q$:
	if $q$ sees that some $R_{q'q}$ contains a tuple at least as ``new'' as $\langle k,u \rangle$  (line~\ref{rtoq}),
	then $q$ reads $R_{pQ}$ \emph{again} (line~\ref{qfrompp}) (so $q$ does \emph{not} simply
	``trust'' $q'$ and return $\langle k,u \rangle$!).
If $q$ sees that $R_{pQ}$ contains a tuple at least as ``new'' as $\langle k,u \rangle$ (line~\ref{qfrompp}),
	then
	$q$ successively writes $\langle k,u \rangle$
	to every register $R_{qq'}$ such that $q' \in Q$ (line~\ref{qtorp}),
	and it returns $\langle k,u \rangle$ as the tuple read (line~\ref{qPrt3});
	else $q$ exits \textsc{Thread 2} (so in this case only \textsc{Thread 1} remains).

\textbf{-}\hspace{1mm} Finally, if $q$ does not see that $R_{pQ}$  or  any $R_{qq'}$ contains a tuple at least as ``new'' as $\langle k,u \rangle$ (in lines~\ref{qfromp} and~\ref{rtoq}),
	then $q$ returns $\lw$ (line~\ref{qPrtlw}).

\textbf{Why two parallel threads?} In a nutshell, this is to guarantee the Termination property of $\I{n}$
	in runs where the writer is correct or no reader is {\ml} (this property is required for the implementation to be valid).
It turns out that:
\begin{compactenum}[(A)]
\item if only \textsc{Thread 1} is executed, then a faulty writer can block correct readers even if no reader is {\ml}, and
\item if only \textsc{Thread 2} is executed, then {\ml} readers can block correct readers from returning any value in this thread even if the writer is correct.
\end{compactenum}

But if the writer is correct or no reader is {\ml}, we can show that every read operation by a correct reader is guaranteed to complete with a return value \emph{in one of the two threads}.

It is easy to see why a faulty writer (even one that just crashes) may block a correct reader in \textsc{Thread~1}.
We now explain how {\ml} readers may impede correct readers in \textsc{Thread~2}.

In \textsc{Thread~2} readers must read $R_{pQ}$ at least once (in line~\ref{qfromp}).
Recall that (a) $R_{pQ}$ is an \emph{implemented} \mbox{$\Reg{1}{n-1}$}, and
	(b) we are \emph{only} assuming that this implementation is \emph{valid}.
In particular, if the writer $p$ of $R_{pQ}$ crashes \emph{and} some readers of $R_{pQ}$ are {\ml},
	the implementation of $R_{pQ}$ does \emph{not} guarantee that correct readers complete their operations.
In other words, if $p$ crashes \emph{and} some readers of $R_{pQ}$ are {\ml}, a correct reader $q$ may block while trying to read $R_{pQ}$!

Malicious readers can also prevent a correct reader $q$
	from reading any tuple in~\textsc{Thread~2} as follows.
When $q$ executes $\ru_q()$ the following can occur:
	(1) in line~\ref{rtoq}, $q$ sees that some $R_{q'q}$ contains $\langle k',- \rangle$ with $k'\ge k$ ,
	but 
	(2) in line~\ref{qfrompp} $q$ sees that $R_{pQ}$ does \emph{not} contain $\langle k',- \rangle$ with $k'\ge k$.
We can show that this can occur \emph{only if} at least one of $p$ or $q'$ is {\ml}.
Note that if (1) and (2) indeed occur, then $q$ exits \textsc{Thread~2} \emph{without returning any tuple}.

% !TEX root = main.tex

We now prove the correctness of the $\Reg{1}{n}$ implementation $\I{n}$ given in Figure~\ref{1wnr},
	more precisely, we show that if the $\Reg{1}{n-1}$s that $\I{n}$ uses are {\vd}, then $\I{n}$ is valid (Theorem~\ref{I-is-valid}).
Since this proof may be distracting, in a first reading of the paper 
	a reader may want to skip this proof and go directly to Theorem~\ref{I-is-valid}.

{\large \textbf{Correctness of the implementation \texorpdfstring{$\I{n}$}{In}.}}

We must show that
	$\I{n}$ is \emph{{\vd}} under the assumption that the $\Reg{1}{n-1}$ implementations that it uses,
	namely $R_{wQ}$ and $R_{pQ}$, are also valid.
So in this proof we assume:

\begin{assumption}\label{a0}
The implementations of the $\Reg{1}{n-1}$s $R_{wQ}$ and $R_{pQ}$ that $\I{n}$ uses are valid. 
\end{assumption}

We show that under this asssumption,
	the implementation $\I{n}$ of the $\Reg{1}{n}$ is also valid, that is:

\begin{compactitem}
\item $\I{n}$ is linearizable, and
\item 
If the writer is correct or no reader is {\ml}, $\I{n}$ satisfies the Termination property.
\end{compactitem}

\medskip\noindent
Henceforth, we consider an arbitrary {\run} $E$ of the implementation $\I{n}$ given in Figure~\ref{1wnr}.

By~Assumption~\ref{a0}, the implemented registers $R_{wQ}$ and $R_{pQ}$ that $\I{n}$ uses are linearizable;
	moreover, the atomic registers that $\I{n}$ uses are also (trivially) linearizable.
So operations on these registers appear to take effect instantaneously at some point (the ``linearization point'') in their execution intervals.
Therefore, without loss of generality, we can assume that in the {\run} $E$
	the operations on the registers that $\I{n}$ uses are sequential.

\vspace{1mm}
In the proof, we use the following notation (where $R$ is any atomic or implemented register used by $\I{n}$):

\begin{itemize}

\item ``process $x$ reads $R = v$ in line $\ell$ of $\ru()$''
	means that process $x$ reads register $R$,
	this read returns the value $v$,
	and both occur in line $\ell$ of the read procedure $\ru()$. 

\item ``process $x$ reads $R=u$
	\emph{before}
	process $y$ reads $R'=v$'' means that
	the read operation by~$x$ (which returns $u$) \emph{precedes} the read operation by~$y$ (which returns $v$).

\item ``process $x$ writes $u$ in $R$
	\emph{before}
	process $y$ writes $v$ in $R'$'' means that
	the write $u$ operation by $x$
	\emph{precedes} the write $v$ operation by $y$.
\end{itemize}

We first show that $\I{n}$ is linearizable.
 Then we prove that
 	it satisfies the Termination property if the writer is correct or no reader is {\ml}.

\textbf{Linearizability of \texorpdfstring{$\I{n}$}{In}}. We consider two cases:

 \smallskip
 \textsc{Case 1:} \emph{The writer $w$ of the register implemented by $\I{n}$ is {\ml}.}
 By Definition~\ref{LinearizableByz}, $\I{n}$ is (trivially) linearizable in this case.
 
 \textsc{Case 2:} \emph{The writer $w$ of the register implemented by $\I{n}$ is \emph{not} {\ml}.}
 
 For this case, we now prove that the read and write operations of the implemented register
 	satisfy the linearizability Properties~1 and~2 of Definition~\ref{LinearizableByz}.
In the following:
 
\begin{compactitem}
\item $u_0$ is the initial value of the register that $\I{n}$ implements.
\item For $k\ge1$, $u_k$ denotes the $k$-th value written by $w$ using
	the procedure
	$\Wu()$.
	More precisely, if $w$ calls $\Wu()$ with a value $u$ and this is its $k$-th call of $\Wu()$,
	then~$u_k$~is~$u$.
\item $v_0$ is $\langle 0, u_0 \rangle$.
\item For $k\ge1$, $v_k$ denotes the $k$-th value written by $w$ using
	the procedure
	$\wu()$. 
\end{compactitem}

\begin{observation}\label{vk}
For all $k\ge 0$, $v_k = \tp{k,u_k}$.
\end{observation}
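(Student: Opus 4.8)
\textbf{Proof plan for Observation~\ref{vk}.}
The statement to prove is that for all $k \ge 0$, the $k$-th value written by $w$ via the low-level procedure $\wu()$ is exactly the tuple $\tp{k,u_k}$, where $u_k$ is the $k$-th value that $w$ passes to the high-level procedure $\Wu()$. The plan is a straightforward induction on $k$ (or, more simply, a direct unwinding of the code), tracking the value of the local counter variable $c$ of $w$ across successive invocations of $\Wu()$.

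First I would handle the base case $k=0$: by definition $v_0 = \tp{0,u_0}$ and $u_0$ is the initial value of the implemented register, so there is nothing to prove. For the inductive step, consider the $k$-th invocation of $\Wu()$ by $w$ (for $k \ge 1$), which by definition passes the argument $u_k$. I would argue that immediately before this invocation the counter $c$ holds the value $k-1$: initially $c = 0$, and each invocation of $\Wu()$ increments $c$ exactly once in line~\ref{c} before calling $\wu()$ in line~\ref{callW}, so after the first $k-1$ invocations $c = k-1$. Hence in line~\ref{c} of the $k$-th invocation $c$ becomes $k$, and line~\ref{callW} calls $\wu(\tp{k,u_k})$. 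Since $v_k$ is \emph{defined} as the $k$-th value written using $\wu()$, this gives $v_k = \tp{k,u_k}$, completing the induction.

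The only point requiring care is that this bookkeeping is valid \emph{because $w$ is not {\ml}} — recall that throughout this subsection (Case~2 of Section~\ref{algo1linear}) we assume the writer is not malicious, so $w$ really does execute the code of $\Wu()$ and $\wu()$ as written, and in particular faithfully increments and threads through the counter $c$. I expect no real obstacle here: the observation is essentially a restatement of how the counter is maintained, and the argument is a one-line induction on the number of completed $\Wu()$ invocations. The main thing to state explicitly is the invariant ``just before $w$'s $k$-th call to $\Wu()$, the variable $c$ equals $k-1$,'' from which the result is immediate.
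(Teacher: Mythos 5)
Your proof is correct and matches the paper's intent: the paper states this as an unproved observation, and the justification it implicitly relies on is exactly your one-line induction showing that the counter $c$ equals $k$ during the $k$-th invocation of $\Wu()$, so that the $k$-th call to $\wu()$ receives $\tp{k,u_k}$. Your remark that this bookkeeping is only meaningful because the writer $w$ is assumed not to be {\ml} in this part of the analysis is also the right caveat.
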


By a slight abuse of notation:
\begin{compactitem}
\item a \emph{write operation} performed by executing the $\Wu()$ or $\wu()$ procedures
	with a value $x$ is denoted $\Wu(x)$ or $\wu(x)$, respectively.
\item A \emph{read operation} performed by executing the $\Ru()$ or $\ru()$ procedures that
	return a value $x$ is denoted $\Ru(x)$ or $\ru(x)$, respectively.
\end{compactitem}

\begin{observation}\label{wwrites}
Let $\wu(v)$ be any write operation by $w$.
Then there is a $k \ge 1$ such that $v=v_k$.
\end{observation}

\begin{observation}\label{regcontentstrongw1}
Let $R\in\{R_{wp}, R_{wQ}\}$.
If $w$ writes $x$ in $R$,
	then $x=(\done, v_k)$ for some $k \ge 1$ or
	$x=(\ready, v_k , v_{k+1})$ 
	for some $k \ge 0$.
\end{observation}

\begin{observation}\label{regcontentstrongp}
Suppose $p$ is not {\ml}.
If $p$ reads $R_{wp}=x$,
	then $x=(\done, v_k)$ or
	$x=(\ready, v_k , v_{k+1})$, for some $k \ge 0$. 
\end{observation}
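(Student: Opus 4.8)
The plan is to derive Observation~\ref{regcontentstrongp} directly from the atomicity of $R_{wp}$, the fact that $R_{wp}$ has a single writer (namely $w$), and Observation~\ref{regcontentstrongw1}. First I would use the hypothesis that $p$ is not {\ml} for its only purpose: it guarantees that ``$p$ reads $R_{wp}=x$'' refers to a genuine read operation applied by $p$ to the atomic register $R_{wp}$ while executing $\ru_p()$, rather than to an arbitrary value that a {\ml} $p$ might claim to have obtained. Since $R_{wp}$ is an atomic $\Reg{1}{1}$ and atomic reads return the last written value (or the initial value if there is none), the value $x$ returned by this read is either the initial value of $R_{wp}$ or a value that was written into $R_{wp}$ by its unique writer $w$.

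Next I would handle these two possibilities. The initial value of $R_{wp}$ is $(\done,\langle 0,u_0\rangle)$, which by Observation~\ref{vk} equals $(\done,v_0)$; this has the desired form $(\done,v_k)$ with $k=0$. If instead $x$ was written into $R_{wp}$ by $w$, then, since we are in the case where the writer $w$ is not {\ml}, Observation~\ref{regcontentstrongw1} applies with $R=R_{wp}$ and gives that $x=(\done,v_k)$ for some $k\ge 1$ or $x=(\ready,v_k,v_{k+1})$ for some $k\ge 0$. Combining the two possibilities, $x=(\done,v_k)$ or $x=(\ready,v_k,v_{k+1})$ for some $k\ge 0$, which is exactly the claim.

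I do not expect a real obstacle here; this observation is essentially a corollary of Observation~\ref{regcontentstrongw1} once one notes that $R_{wp}$ is written by $w$ alone and that atomicity forces a read to return a written (or the initial) value. The only point that required genuine care is already discharged in Observation~\ref{regcontentstrongw1}: namely, that a non-{\ml} writer writes into $R_{wp}$ only the two kinds of tuples produced in lines~\ref{wpP} and~\ref{wpC} of $\wu()$, and that the value and counter components of these tuples are exactly the $v_k$'s --- which in turn rests on Observation~\ref{vk} and on the invariant that the local variable $\lw$ of $w$ always holds the previously written tuple $v_{k-1}$ when $w$ performs its $k$-th write.
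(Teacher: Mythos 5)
Your proposal is correct and follows exactly the reasoning the paper leaves implicit for this observation: since $R_{wp}$ is an atomic single-writer register written only by the non-{\ml} $w$, a read by a non-{\ml} $p$ returns either the initial value $(\done,v_0)$ or a value covered by Observation~\ref{regcontentstrongw1}. Nothing further is needed.
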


\begin{observation}\label{regcontentstrongq}
Suppose $q\in Q$ is not {\ml}.
If $q$ reads $R_{wQ}=x$,
then $x=(\done, v_k)$ or
	$x=(\ready, v_k , v_{k+1})$, for some $k \ge 0$. 
\end{observation}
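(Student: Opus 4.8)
The plan is to derive the statement directly from the linearizability of the base register $R_{wQ}$ (guaranteed by Assumption~\ref{a0}) together with Observation~\ref{regcontentstrongw1}, which already characterizes exactly the values that the writer $w$ ever writes into $R_{wQ}$. Note that we are inside Case~2 of Section~\ref{algo1linear}, so the writer $w$ of the register implemented by $\I{n}$ is not {\ml}; since $w$ is also the only, and non-{\ml}, writer of the implemented register $R_{wQ}$, and the implementation of $R_{wQ}$ is valid and hence linearizable, Definition~\ref{LinearizableByz} applies to $R_{wQ}$.

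First I would apply Property~1 of Definition~\ref{LinearizableByz} to $R_{wQ}$ and to the read operation by $q$ that returns $x$: since both $q$ and the writer $w$ of $R_{wQ}$ are not {\ml}, the value $x$ is either the initial value of $R_{wQ}$ (in case no write into $R_{wQ}$ precedes $q$'s read), or a value written into $R_{wQ}$ by $w$ through some write operation. In the first case, the initial value of $R_{wQ}$ is $(\done,\langle 0,u_0\rangle)$, which equals $(\done,v_0)$ by Observation~\ref{vk}, so $x$ has the claimed form with $k=0$. In the second case, Observation~\ref{regcontentstrongw1} gives that $x = (\done, v_k)$ for some $k\ge 1$ or $x = (\ready, v_k, v_{k+1})$ for some $k\ge 0$. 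Combining the two cases, $x = (\done, v_k)$ for some $k\ge 0$ or $x = (\ready, v_k, v_{k+1})$ for some $k\ge 0$, as desired.

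I do not expect a real obstacle here: this is the implemented-register analogue of Observation~\ref{regcontentstrongp} (which concerns the atomic register $R_{wp}$), and the only extra ingredient is that $R_{wQ}$, though implemented rather than atomic, is still linearizable by Assumption~\ref{a0}, so a read of $R_{wQ}$ by the non-{\ml} process $q$ still obeys Property~1 of Definition~\ref{LinearizableByz}. The one point to be careful about is that Property~1 gives this guarantee only when the writer of $R_{wQ}$ is not {\ml} --- which is precisely why the argument must sit in Case~2 --- and that Observation~\ref{regcontentstrongw1} already handles both write statements of $\wu()$ (lines~\ref{wqP} and~\ref{wqC}), so no separate inspection of them is needed here.
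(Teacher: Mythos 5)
The paper states this as an Observation without an explicit proof, and your argument supplies exactly the intended justification: by Assumption~\ref{a0} the operations on $R_{wQ}$ can be treated as linearized, so a read by the non-{\ml} reader $q$ returns either the initial value $(\done,\langle 0,u_0\rangle)=(\done,v_0)$ or a value written by the non-{\ml} writer $w$, which Observation~\ref{regcontentstrongw1} already restricts to the claimed forms. Your proposal is correct and follows essentially the same route the paper takes implicitly.
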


\begin{lemma}\label{pread}
Suppose $p$ is not {\ml}.
Let $\ru_p(v)$ be any read operation by $p$.
Then there is a $k \ge 0$ such that $v=v_k$, and
\begin{compactitem}
\item $p$ reads $R_{wp} = (\done, v_{k})$ in line~\ref{pCommit} of $\ru_p(v)$,\footnote{For brevity, we say that
	``a process $r$ reads or writes a register \emph{in line  $x$ of a $\ru_r(-)$ or a $\wu(-)$ operation}'',
	if it reads or writes this register in line $x$ of the $\ru_r()$ or $\wu()$ \emph{procedure}
	executed to do this $\ru_r(-)$ or $\wu(-)$ operation.}
	or
\item $p$ reads $R_{wp} =(\ready, v_{k}, v_{k+1})$ in line~\ref{pPre} of $\ru_p(v)$. 
\end{compactitem}
\end{lemma}

\begin{proof}
Suppose $p$ is not {\ml}.
Let $\ru_p(v)$ be any read operation by $p$.
Note that $p$ reads $R_{wp}$ in $\ru_p(v)$.
When it does so, by Observation~\ref{regcontentstrongp},
	there are two possible cases:
\begin{enumerate}
\item $p$ reads $R_{wp} = (\done, v_{k})$ for some $k\ge 0$ in line~\ref{pCommit} of $\ru_p(v)$.
 Then $\ru_p(v)$ returns $v_{k}$ in line~\ref{pCrt},
 	i.e., $v=v_k$.
\item $p$ reads $R_{wp} = (\ready, v_{k}, v_{k+1})$ for some $k\ge 0$ in line~\ref{pPre} of $\ru_p(v)$.
 Then $\ru_p(v)$ returns $v_{k}$ in line~\ref{pPrt},
 	i.e., $v=v_k$.
\end{enumerate}
\end{proof}

\begin{lemma}\label{qread}
Suppose $q \in Q$ is not {\ml}.
Let $\ru_q(v)$ be any read operation by $q$.
Then there is a $k \ge 0$ such that $v=v_k$, and
\begin{compactitem}
\item $q$ reads $R_{wQ} = (\done, v_{k})$ in line~\ref{qCommit} of $\ru_q(v)$, 
\item $q$ reads $R_{wQ} =(\ready, v_{k-1}, v_{k})$ in line~\ref{qPre} of $\ru_q(v)$, or 
\item $q$ reads $R_{wQ} =(\ready, v_{k}, v_{k+1})$ in line~\ref{qPre} of $\ru_q(v)$. 
\end{compactitem}
\end{lemma}

\begin{proof}
Suppose $q \in Q$ is not {\ml}.
Let $\ru_q(v)$ be any read operation by $q$.
Note that $q$ reads $R_{wQ}$ in $\ru_q(v)$.
When it does so, by Observation~\ref{regcontentstrongq},
	there are two possible cases:
\begin{enumerate}
\item $q$ reads $R_{wQ} = (\done, v_{k})$ for some $k\ge 0$ in line~\ref{qCommit} of $\ru_q(v)$.
Then $\ru_q(v)$ returns $v_k$ in line~\ref{qCrt},
	i.e., $v=v_k$.
\item $q$ reads $R_{wQ} = (\ready, v_{k}, v_{k+1})$ for some $k\ge 0$ in line~\ref{qPre} of $\ru_q(v)$.
Then there are two subcases:
	\begin{compactenum}
	\item $\ru_q(v)$ returns $v_{k+1}$ in line~\ref{qPrt4},~\ref{qPrt5},~\ref{qPrt2}, or~\ref{qPrt3},
		i.e., $v=v_{k+1}$.
	Let $k'=k+1$.
	Then in this case, $q$ reads $R_{wQ} =(\ready, v_{k'-1}, v_{k'})$ in line~\ref{qPre} of $\ru_q(v)$ and $v=v_{k'}$.

	\item $\ru_q(v)$ returns $v_{k}$  in line~\ref{qPrtlw},
	i.e., $v=v_{k}$.
	\end{compactenum}
\end{enumerate}
\end{proof}

\begin{observation}\label{monowpw}
Let $R$ be any register in $\{R_{wp}, R_{wQ}\}$.

\begin{enumerate}[(1)]
\item \label{monowpw1} If $w$ writes $(\ready, v_{k-1}, v_k)$ in $R$ before $w$ writes $(\done,v_{k'})$ in $R$,
	then \mbox{$k \le k'$}.

\item \label{monowpw2} If $w$ writes $(\ready, v_{k-1}, v_k)$ in $R$ before $w$ writes $(\ready, v_{k'-1}, v_{k'})$ in $R$, 
	then $k < k'$.

\item \label{monowpw2.5} If $w$ writes $(\done,v_k)$ in $R$ before $w$ writes $(\done,v_{k'})$ in $R$,
	then $k < k'$.

\item \label{monowpw3} If $w$ writes $(\done,v_k)$ in $R$ before $w$ writes $(\ready, v_{k'-1}, v_{k'})$ in $R$,
	then $k < k'$.
\end{enumerate}
\end{observation}

\begin{observation}\label{monowpwr}
Let $R$ be any register in $\{R_{wp}, R_{wQ}\}$.
Suppose $r \in \{p\}\cup Q$ is not {\ml}.
\begin{enumerate}[(1)]
\item \label{monowpwr1} If $w$ writes $(\ready, v_{k-1}, v_k)$ in $R$ before $r$ reads $(\done,v_{k'})$ in $R$,
	then \mbox{$k \le k'$}.

\item \label{monowpwr2} If $w$ writes $(\ready, v_{k-1}, v_k)$ in $R$ before $r$ reads $(\ready, v_{k'-1}, v_{k'})$ in $R$, 
	then \mbox{$k \le k'$}.

\item \label{monowpwr2.5} If $w$ writes $(\done,v_k)$ in $R$ before $r$ reads $(\done,v_{k'})$ in $R$,
	then \mbox{$k \le k'$}.

\item \label{monowpwr3} If $w$ writes $(\done,v_k)$ in $R$ before $r$ reads $(\ready, v_{k'-1}, v_{k'})$ in $R$,
	then $k < k'$.
\end{enumerate}
\end{observation}

\begin{observation}\label{monowp}
Let $R$ be any register in $\{R_{wp}, R_{wQ}\}$.
Suppose $r$ and $r'$ are non-{\ml} processes in $\{p\} \cup Q$.

\begin{enumerate}[(1)]
\item \label{monowp1} If $r$ reads $R=(\ready, v_{k-1}, v_k)$ before $r'$ reads $R=(\done,v_{k'})$,
	then \mbox{$k \le k'$}.

\item \label{monowp2} If $r$ reads $R=(\ready, v_{k-1}, v_k)$ before $r'$ reads $R=(\ready, v_{k'-1}, v_{k'})$, 
	then \mbox{$k \le k'$}.

\item \label{monowp2.5} If $r$ reads $R=(\done,v_k)$ before $r'$ reads $R=(\done,v_{k'})$,
	then \mbox{$k \le k'$}.

\item \label{monowp3} If $r$ reads $R=(\done,v_k)$ before $r'$ reads $R=(\ready, v_{k'-1}, v_{k'})$,
	then $k < k'$.
\end{enumerate}
\end{observation}

\textsc{Proof of linearizability Property 1.}
We now prove that the write and read operations of the register that $\I{n}$ implements satisfy Property~1 of Definition~\ref{LinearizableByz}, i.e., processes read the ``current'' value of the register.
 To do so,
 	we first prove this for the writes and reads of the lower-level procedures $\wu()$ and $\ru_r()$ for all readers $r$
	(Lemma~\ref{p1r}),
 	and then prove it for the writes and reads of the high-level procedures $\Wu()$ and $\Ru()$ (Lemma~\ref{Cp1}).

\begin{lemma}\label{p1r}
If $\ru_r(v)$ is a read operation by a non-{\ml} process $r\in \{p\}\cup Q$ then:
\begin{compactitem}
	\item there is a $\wu(v)$ operation that immediately precedes $\ru_r(v)$ or is concurrent with $\ru_r(v)$, or
	\item $v = v_0$ and no $\wu(-)$ operation precedes $\ru_r(v)$.
\end{compactitem}
\end{lemma}

\begin{proof}
Suppose $r\in\{ p \} \cup Q$ is not {\ml}.
Let $\ru_r(v)$ be any read operation by $r$.

 By Lemmas~\ref{pread} and~\ref{qread}, $v = v_k$ for some $k \ge 0$.
 We now show that:
 \begin{compactitem}
 \item if $k=0$ then no $\wu(-)$ operation precedes $\ru_r(v_k)$, and
 \item if $k>0$ then a $\wu(v_{k})$ operation immediately precedes~$\ru_r(v_k)$ or is concurrent with $\ru_r(v_k)$.
 \end{compactitem}
\vspace{2mm}
There are two cases: $r=p$~or~$r \in Q$.
\begin{itemize}
\item Case 1: $r=p$.
	By Lemma~\ref{pread},
	there are two cases:
		
	\begin{enumerate}[1)]
 	\item $p$ reads $R_{wp} = (\done, v_{k})$ in line~\ref{pCommit} of $\ru_p(v_k)$.
	There are two cases:
		\begin{enumerate}[i.]
		\item $k=0$.
			Suppose, for contradiction, 
			that there is a $\wu(v)$ operation that precedes $\ru_p(v_0)$.
			By Observation~\ref{wwrites},
			$v = v_i$ for some $i \ge 1$.
		So $w$ writes $(\done, v_{i})$ into $R_{wp}$ in line~\ref{wpC} of $\wu(v_i)$ before $p$ reads $R_{wp} = (\done, v_{0})$ in line~\ref{pCommit} of $\ru_p(v_k)$.
		By Observation~\ref{monowpwr}(\ref{monowpwr2.5}),
			$i \le 0$ --- a contradiction.
		So no $\wu(-)$ operation~precedes~$\ru_p(v_0)$.

		\item $k>0$.
		Then $w$ writes $(\done,v_k)$ into $R_{wp}$ in line~\ref{wpC} of $\wu(v_k)$ 
		before $p$ reads $R_{wp}= (\done,v_k)$ in $\ru_p(v_k)$.
		So the $\wu(v_k)$ operation precedes $\ru_p(v_k)$ or is concurrent with $\ru_p(v_k)$.
		We now show that if $\wu(v_k)$ precedes $\ru_p(v_k)$,
		then $\wu(v_k)$ immediately precedes $\ru_p(v_k)$.
		Suppose, for contradiction, 
		that $\wu(v_k)$ precedes $\ru_p(v_k)$ but does not immediately precede $\ru_p(v_k)$.
		Then there is a $\wu(v_i)$ operation that immediately precedes $\ru_p(v_k)$.
		Clearly, the $\wu(v_k)$ operation precedes the $\wu(v_i)$ operation, and so $i > k$.
		Furthermore, $w$ writes $(\done, v_{i})$ into $R_{wp}$ in line~\ref{wpC} of $\wu(v_i)$ before $p$ reads $R_{wp} = (\done, v_{k})$ in line~\ref{pCommit} of $\ru_p(v_k)$.
		By Observation~\ref{monowpwr}(\ref{monowpwr2.5}),
			$i \le k$ --- a contradiction.
		 Therefore the $\wu(v_k)$ operation immediately precedes $\ru_p(v_k)$ or is concurrent with $\ru_p(v_k)$.
		\end{enumerate}

	\item $p$ reads $R_{wp} =(\ready, v_{k}, v_{k+1})$ in line~\ref{pPre} of $\ru_p(v_k)$.
	Then this read occurs \emph{after}
	$w$ writes $(\ready, v_{k}, v_{k+1})$ in $R_{wp}$ in line~\ref{wpP} of the $\wu(v_{k+1})$ operation.
	Furthermore, by Observation~\ref{monowpwr}(\ref{monowpwr3}), 
	this read occurs \emph{before} $w$ writes $(\done, v_{k+1})$ in $R_{wp}$
	in line~\ref{wpC} of the $\wu(v_{k+1})$ operation.
	Therefore the $\wu(v_{k+1})$ operation is concurrent with $\ru_p(v_k)$.
	There are two cases:

	\begin{enumerate}[i.]
	\item $k=0$.
	Since $\wu(v_1)$ is concurrent with $\ru_p(v_0)$,
	no $\wu(-)$ operation precedes $\ru_p(v_0)$.

	\item $k>0$. 
	Since $\wu(v_{k+1})$ is concurrent with $\ru_p(v_k)$, 
		$\wu(v_{k})$ immediately precedes~$\ru_p(v_k)$ or is concurrent with $\ru_p(v_k)$.
	\end{enumerate}
\end{enumerate}

\item Case 2: $r = q \in Q$.
	By Lemma~\ref{qread},
	there are three cases:
\begin{enumerate}[1)]
\item $q$ reads $R_{wQ} = (\done, v_{k})$ in line~\ref{qCommit} of $\ru_q(v_k)$.
	There are two cases:
		\begin{enumerate}[i.]
		
		\item $k=0$.
			Suppose, for contradiction, 
			that there is a $\wu(v)$ operation that precedes $\ru_p(v_0)$.
			By Observation~\ref{wwrites},
			$v = v_i$ for some $i \ge 1$.
		So $w$ writes $(\done, v_{i})$ into $R_{wQ}$ in line~\ref{wqC} of $\wu(v_i)$ before $q$ reads $R_{wQ} = (\done, v_{0})$ in line~\ref{qCommit} of $\ru_q(v_k)$.
		By Observation~\ref{monowpwr}(\ref{monowpwr2.5}),
			$i \le 0$ --- a contradiction.
		So no $\wu(-)$ operation~precedes~$\ru_q(v_0)$.
		
		\item $k>0$.
		Then $w$ writes $(\done,v_k)$ into $R_{wQ}$ in line~\ref{wqC} of $\wu(v_k)$
		 before $q$ reads $R_{wQ}= (\done,v_k)$ in line~\ref{qCommit} of $\ru_q(v_k)$.
		So the $\wu(v_k)$ operation precedes $\ru_q(v_k)$ or is concurrent with $\ru_q(v_k)$.
		We now show that if $\wu(v_k)$ precedes $\ru_q(v_k)$,
		then $\wu(v_k)$ immediately precedes $\ru_q(v_k)$.
		Suppose, for contradiction, 
			that $\wu(v_k)$ precedes $\ru_q(v_k)$ but does not immediately precede $\ru_q(v_k)$.
		Then there is a $\wu(v_i)$ operation
		that immediately precedes $\ru_p(v_k)$.
		Clearly, the $\wu(v_k)$ operation precedes the $\wu(v_i)$ operation, and so $i > k$.		
		Furthermore, $w$ writes $(\done, v_{i})$ into $R_{wQ}$ in line~\ref{wqC} of $\wu(v_i)$ before $q$ reads $R_{wQ} = (\done, v_{k})$ in line~\ref{qCommit} of $\ru_q(v_k)$.
		By Observation~\ref{monowpwr}(\ref{monowpwr2.5}),
			$i \le k$ --- a contradiction.
		 Therefore the $\wu(v_k)$ operation immediately precedes $\ru_q(v_k)$ or is concurrent with $\ru_q(v_k)$.
		\end{enumerate}
		
\item $q$ reads $R_{wQ} =(\ready, v_{k-1}, v_{k})$ in line~\ref{qPre} of $\ru_q(v_k)$.
	Then this read occurs \emph{after}
	$w$ writes $(\ready, v_{k-1}, v_{k})$ in $R_{wQ}$ in line~\ref{wqP} of the $\wu(v_k)$ operation.
	Furthermore, by Observation~\ref{monowpwr}(\ref{monowpwr3}),
	this read occurs \emph{before} $w$ writes $(\done,v_k)$ in $R_{wQ}$
	in line~\ref{wqC} of the $\wu(v_k)$ operation.
	Therefore the $\wu(v_k)$ operation is concurrent with $\ru_q(v_k)$.

\item $q$ reads $R_{wQ} =(\ready, v_{k}, v_{k+1})$ in line~\ref{qPre} of $\ru_q(v_k)$.
	Then this read occurs after
	$w$ writes $(\ready, v_{k}, v_{k+1})$ in $R_{wQ}$ in line~\ref{wqP} of the $\wu(v_{k+1})$ operation.
	Furthermore, by Observation~\ref{monowpwr}(\ref{monowpwr3}),
	this read occurs before $w$ writes
	$(\done,v_{k+1})$ in $R_{wQ}$ in line~\ref{wqC} of the $\wu(v_{k+1})$ operation.
	Therefore the $\wu(v_{k+1})$ operation is concurrent with $\ru_q(v_k)$.
	There are two cases:
	\begin{enumerate}[i.]
	
	\item $k=0$.
	Since $\wu(v_1)$ is concurrent with $\ru_q(v_0)$,
	no $\wu(-)$ operation precedes $\ru_q(v_0)$.

	\item $k>0$.
	Since $\wu(v_{k+1})$ is concurrent with $\ru_q(v_k)$, 
		$\wu(v_{k})$ is concurrent with $\ru_q(v_k)$ or immediately precedes $\ru_q(v_k)$.
	\end{enumerate}
\end{enumerate}
\end{itemize}
\end{proof}

We now prove that the write and read operations
	of the high-level procedures $\Wu()$ and $\Ru()$
	satisfy Property~1 of Definition~\ref{LinearizableByz}.

By Observation~\ref{vk}, Lemmas~\ref{pread} and~\ref{qread}, and the code of the procedure $\Ru()$,
	we have:
\begin{observation}\label{integrityR}
If $\Ru(u)$ is an operation by a non-{\ml} process $r \in \{p\}\cup Q$,
	then $u=u_k$ for some $k \ge 0$.
\end{observation}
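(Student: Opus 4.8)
The plan is to unwind the definitions, since this observation is a direct consequence of the code of $\Ru()$ together with Lemmas~\ref{pread} and~\ref{qread}. First I would recall that a non-{\ml} reader $r \in \{p\}\cup Q$ performs a $\Ru()$ operation by invoking $\ru_r()$ in line~\ref{callru}; if that nested call returns a tuple $\langle k,u\rangle$, then $\Ru()$ strips the counter and returns $u$, and otherwise $\Ru()$ returns $\bot$. So it suffices to show that, when $r$ is not {\ml}, the call to $\ru_r()$ always returns a tuple of the form $\langle k,u_k\rangle$ with $k \ge 0$ (in particular, never $\bot$).

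Next I would split on whether $r = p$ or $r \in Q$ and apply the corresponding lemma. By Lemma~\ref{pread} (if $r=p$) and Lemma~\ref{qread} (if $r \in Q$), since $r$ is not {\ml}, every read operation $\ru_r(v)$ by $r$ satisfies $v = v_k$ for some $k \ge 0$; in particular $\ru_r()$ returns some $v_k$ and never returns $\bot$. By Observation~\ref{vk}, $v_k = \langle k, u_k \rangle$. Substituting back into the code of $\Ru()$, the tuple returned by $\ru_r()$ is $\langle k, u_k \rangle$, so $\Ru()$ returns $u_k$. Hence the value $u$ returned by $\Ru(u)$ equals $u_k$ for some $k \ge 0$, which is exactly the claim.

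There is essentially no obstacle here: the argument is a two-step dereference of facts already in hand. The only point worth a sentence of care is confirming that a non-{\ml} reader never takes the "$\bot$" branch of $\ru_p()$ or $\ru_q()$ (line~\ref{qrtbot}), and this is already subsumed by Lemmas~\ref{pread} and~\ref{qread}, which in turn rest on Observations~\ref{regcontentstrongp} and~\ref{regcontentstrongq} constraining the shape of the values a non-{\ml} reader can read from $R_{wp}$ and $R_{wQ}$.
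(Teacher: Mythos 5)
Your proposal is correct and matches the paper's own justification, which is exactly the one-line citation "By Observation~\ref{vk}, Lemmas~\ref{pread} and~\ref{qread}, and the code of the procedure $\Ru()$" that you have simply unpacked. The only detail worth the care you gave it — that a non-{\ml} reader never takes the $\bot$ branch — is indeed already subsumed by Lemmas~\ref{pread} and~\ref{qread}, so nothing is missing.
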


\begin{observation}\label{Rr}
If $\Ru(u_k)$ is an operation by a non-{\ml} process $r\in\{p\}\cup Q$,
then $r$ invokes and completes a $\ru_r(v_k)$ operation in $\Ru(u_k)$.
\end{observation}

\begin{observation}\label{Ww}
If $\Wu(u_k)$ is a completed operation by $w$,
	then $w$ invokes and completes a $\wu(v_k)$ operation in $\Wu(u_k)$.
\end{observation}

We now prove that the $\Wu(-)$ and $\Ru(-)$ operations satisfy Property~1 of Definition~\ref{LinearizableByz}.

\begin{lemma}\label{Cp1} \emph{[\textbf{Property 1}: Reading a ``current'' value]}

If $\Ru(u)$ is a read operation by a non-{\ml} process $r\in\{ p \} \cup Q$
	then:
\begin{compactitem}	
	\item there is a $\Wu(u)$ operation that immediately precedes $\Ru(u)$ or is concurrent with $\Ru(u)$, or	
	\item $u = u_0$ and no $\Wu(-)$ operation precedes $\Ru(u)$.
\end{compactitem}
\end{lemma}

\begin{proof}
Let $\Ru(u)$ be any read operation by a non-{\ml} process $r\in\{ p \} \cup Q$.
By Observation~\ref{integrityR},
	$u=u_k$ for some $k\ge 0$.
There are two cases:
	\begin{enumerate}[(1)]
	\item $k=0$.
		Suppose, for contradiction, that a $\Wu(u_i)$ operation precedes $\Ru_r(u_0)$.
		Note that $i \ge 1$. 
		By Observations~\ref{Ww} and~\ref{Rr}, 
		a $\wu(v_i)$ operation precedes a $\ru_r(v_0)$ operation.
		Since process $r$ is not {\ml},
			by Lemma~\ref{p1r},
		 there is no $\wu(-)$ operation that precedes $\ru_r(v_0)$ --- a contradiction.

	\item $k>0$.
		By Observation~\ref{Rr},
		$r$ invokes and completes a $\ru_r(v_k)$ operation in $\Ru(u_k)$.
		Since $k>0$,
		 by Lemma~\ref{p1r},
		 there is a $\wu(v_{k})$ operation that immediately precedes $\ru_r(v_k)$ or is concurrent with $\ru_r(v_k)$.
		 Let $\Wu(u_{k})$ be the operation in which $w$ invokes the $\wu(v_{k})$ operation.
		Since $\wu(v_k)$ immediately precedes $\ru_r(v_k)$ or is concurrent with $\ru_r(v_k)$,
		the $\Wu(u_{k})$ operation immediately precedes $\Ru(u_k)$ or is concurrent with $\Ru(u_k)$.
	\end{enumerate}
\end{proof}

\textsc{Proof of linearizability Property 2.}
We now prove that the write and read operations of the register that $\I{n}$ implements satisfy Property~2 of Definition~\ref{LinearizableByz},
	i.e., we prove that there are no ``new-old'' inversions in the values that processes read.
 To do so,
 	we first prove this for the writes and reads of the lower-level procedures $\wu()$ and $\ru_r()$ for all readers $r$
	(Lemma~\ref{p2r}),
 	and then prove it for the writes and reads of the high-level procedures $\Wu()$ and $\Ru()$ (Lemma~\ref{Cp2}).

We first show that there are no ``new-old'' inversions in the consecutive reads of process $p$.

\begin{lemma}\label{monopread}
Suppose $p$ is not {\ml}.
If $\ru_p(v_k)$ and $\ru_p(v_{k'})$ are read operations by $p$,
	and $\ru_p(v_k)$ precedes $\ru_p(v_{k'})$, then \mbox{$k \le k'$}.
\end{lemma}

\begin{proof}
Suppose $p$ is not {\ml}.
Let $\ru_p(v_k)$ and $\ru_p(v_{k'})$ be read operations by $p$ such that $\ru_p(v_k)$ precedes $\ru_p(v_{k'})$.
By Lemma~\ref{pread}, the following occurs:

\begin{compactenum}
	\item $p$ reads $R_{wp} = (\done, v_{k})$ \ilns{pCommit} $\ru_p(v_k)$,
	or
	\item $p$ reads $R_{wp} =(\ready, v_k, v_{k+1})$ \ilns{pPre} $\ru_p(v_k)$,
\end{compactenum}

\emph{before} the following occurs:

\begin{compactenum}
	\item $p$ reads $R_{wp} = (\done, v_{k'})$ \ilns{pCommit} $\ru_p(v_{k'})$,
	or
	\item $p$ reads $R_{wp} =(\ready, v_{k'}, v_{k'+1})$ \ilns{pPre} $\ru_p(v_{k'})$.
\end{compactenum}

\smallskip
So there are four possible cases:

\begin{compactenum}
\item $p$ reads $R_{wp} = (\done, v_{k})$ \iln{pCommit} $\ru_p(v_k)$
	before
	$p$ reads $R_{wp} = (\done, v_{k'})$ \iln{pCommit} $\ru_p(v_{k'})$. 
 By Observation~\ref{monowp}(\ref{monowp2.5}),
 	 \mbox{$k \le k'$}.

\item $p$ reads $R_{wp} = (\done, v_{k})$ \iln{pCommit} $\ru_p(v_k)$
	before
	$p$ reads $R_{wp} =(\ready, v_{k'}, v_{k'+1})$ \iln{pPre} $\ru_p(v_{k'})$.
 By Observation~\ref{monowp}(\ref{monowp3}),
 	 $k < k'+1$. 
 	 So \mbox{$k \le k'$}.

\item $p$ reads $R_{wp} =(\ready, v_k, v_{k+1})$ \iln{pPre} $\ru_p(v_k)$
	before
	$p$ reads $R_{wp} = (\done, v_{k'})$ \iln{pCommit} $\ru_p(v_{k'})$.
 By Observation~\ref{monowp}(\ref{monowp1}),
 	 $k+1 \le k'$.
 	 So \mbox{$k \le k'$}.

\item $p$ reads $R_{wp} =(\ready, v_k, v_{k+1})$
	\iln{pPre} $\ru_p(v_k)$
	before $p$ reads $R_{wp} =(\ready, v_{k'}, v_{k'+1})$
	\iln{pPre} $\ru_p(v_{k'})$.
By Observation~\ref{monowp}(\ref{monowp2}), 
	$k+1 \le k'+1$. 
	So \mbox{$k \le k'$}. 
\end{compactenum}
\end{proof}

To prove that there are no ``new-old'' inversions between the reads of $p$ and those of any reader $q \in Q$, and also between the reads of any pair of readers $q, q' \in Q$, we first make some straightforward observations that are clear from the code of $\I{n}$.
We first note that the counters of the tuples in registers $R_{pQ}$ and $R_{qq'}$ do not decrease.

\begin{observation}\label{monopq}
Suppose $p$ is not {\ml}.
	If $p$ writes $v_k$ in $R_{pQ}$ before $p$ writes $v_{k'}$ in $R_{pQ}$,
	then \mbox{$k \le k'$.}
\end{observation}

\begin{observation}\label{monopqwr}
Suppose $p$ and $q\in Q$ are not {\ml}.
	If $p$ writes $v_k$ in $R_{pQ}$ before $q$ reads $R_{pQ}=v_{k'}$,
	then \mbox{$k \le k'$.}
\end{observation}

\begin{observation}\label{monorq}
Suppose $q\in Q$ is not {\ml}.
For all processes $q'\in Q$,
	if $q$ writes $v_k$ in $R_{qq'}$ before $q$ writes $v_{k'}$ in $R_{qq'}$,
	then \mbox{$k \le k'$.}
\end{observation}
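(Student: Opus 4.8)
The plan is to show that every tuple $q$ writes into $R_{qq'}$ is the ``new'' tuple $q$ obtained from its most recent read of $R_{wQ}$, and then to appeal to the monotonicity of $q$'s successive reads of $R_{wQ}$ (Observation~\ref{monowp}).

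First I would pin down where and what $q$ writes into $R_{qq'}$. Since $q$ is not {\ml}, it follows the code, so it writes into $R_{qq'}$ only by executing line~\ref{qtor} or line~\ref{qtorp}; both lines lie in \textsc{Thread 2} of some $\ru_q()$ operation, in mutually exclusive branches (the \textbf{if}/\textbf{elseif} at lines~\ref{qfromp} and~\ref{rtoq}), and each simply writes, to every $R_{qq'}$ with $q'\in Q$, the tuple $\langle k,u\rangle$ bound when $q$ read $R_{wQ}=(\ready,\lw,\langle k,u\rangle)$ at line~\ref{qPre} at the start of that $\ru_q()$ operation. By Observation~\ref{regcontentstrongq}, that read returns $(\ready, v_{\ell-1}, v_\ell)$ for some $\ell\ge 1$, so $\langle k,u\rangle=v_\ell$. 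Hence within a single $\ru_q()$ operation every write into $R_{qq'}$ carries the \emph{same} tuple $v_\ell$, and $q$ read $R_{wQ}=(\ready, v_{\ell-1}, v_\ell)$ in that operation.

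Next, suppose $q$ writes $v_k$ into $R_{qq'}$ and later writes $v_{k'}$ into $R_{qq'}$; let these writes occur in $\ru_q()$ operations $o$ and $o'$, respectively. Since $q$ is not {\ml}, it executes its operations on $\REG$ sequentially, so either $o=o'$ or $o$ precedes $o'$. If $o=o'$, the previous paragraph gives $v_k=v_{k'}$, hence $k=k'$. If $o$ precedes $o'$, let $\ell,\ell'$ be such that in $o$ process $q$ read $R_{wQ}=(\ready,v_{\ell-1},v_\ell)$ at line~\ref{qPre} (so $k=\ell$) and in $o'$ it read $R_{wQ}=(\ready,v_{\ell'-1},v_{\ell'})$ at line~\ref{qPre} (so $k'=\ell'$); since $o$ precedes $o'$, the former read precedes the latter, and Observation~\ref{monowp}(\ref{monowp2}) with $r=r'=q$ yields $\ell\le\ell'$, i.e.\ $k\le k'$.

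The only step requiring any care is the first one --- checking that $R_{qq'}$ is written by $q$ \emph{only} in lines~\ref{qtor} and~\ref{qtorp} and always with the tuple read at line~\ref{qPre} of the current operation; once that is established, the conclusion is immediate from the monotonicity of $q$'s reads of $R_{wQ}$. The rest is a routine case split on whether the two writes fall in the same $\ru_q()$ operation.
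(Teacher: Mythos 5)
Your proof is correct and is exactly the argument the paper intends: the observation is stated without proof, and the implicit justification is precisely that a non-malicious $q$ writes into $R_{qq'}$ only in lines~\ref{qtor} and~\ref{qtorp} of \textsc{Thread 2}, always with the tuple $v_\ell$ obtained from its line-\ref{qPre} read of $R_{wQ}=(\ready,v_{\ell-1},v_\ell)$, so monotonicity reduces to Observation~\ref{monowp}(\ref{monowp2}) applied to $q$'s successive reads of $R_{wQ}$. No gaps.
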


\begin{observation}\label{monorqwr}
Suppose $q\in Q$ and $q'\in Q$ are not {\ml}.
	If $q$ writes $v_k$ in $R_{qq'}$ before $q'$ reads $R_{qq'}=v_{k'}$,
	then \mbox{$k \le k'$.}
\end{observation}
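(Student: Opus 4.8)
The plan is to derive Observation~\ref{monorqwr} from two facts already available: the atomicity of the base register $R_{qq'}$ and the monotonicity of $q$'s own writes into it (Observation~\ref{monorq}). Since the claim concerns a single atomic $\Reg{1}{1}$ shared between two non-{\ml} processes, the argument should be short and essentially mechanical.

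First I would pin down the values that $R_{qq'}$ can hold. The register $R_{qq'}$ is an atomic $\Reg{1}{1}$ written only by $q$ and read only by $q'$, with initial value $v_0 = \langle 0,u_0\rangle$. Because $q$ is not {\ml}, $q$ writes into $R_{qq'}$ only by executing line~\ref{qtor} or line~\ref{qtorp}, and in both cases it writes the tuple $\langle k,u\rangle$ it extracted from $R_{wQ}$ in line~\ref{qPre}; by Observation~\ref{regcontentstrongq} and Lemma~\ref{qread}, that tuple is $v_j = \langle j,u_j\rangle$ for some $j \ge 1$. Hence every value $R_{qq'}$ ever holds is $v_j$ for some $j \ge 0$, so the read value ``$v_{k'}$'' in the statement is well defined.

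Next I would apply atomicity of $R_{qq'}$: the read in which $q'$ obtains $v_{k'}$ returns the value written by the last write to $R_{qq'}$ preceding it, or $v_0$ if no write precedes it. By hypothesis $q$ writes $v_k$ into $R_{qq'}$ before that read, so some write precedes it; hence $v_{k'}$ is the value written by $q$'s last write $W^{\ast}$ into $R_{qq'}$ preceding the read. Since $q$ is the unique writer of $R_{qq'}$ and both the write of $v_k$ and $W^{\ast}$ precede the read, these writes are totally ordered, and because $W^{\ast}$ is the last of them, either $W^{\ast}$ is the write of $v_k$, so $k' = k$, or the write of $v_k$ precedes $W^{\ast}$, in which case Observation~\ref{monorq} gives $k \le k'$. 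In both cases $k \le k'$, which is the claim.

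I expect no real difficulty; the only point needing care is the first step --- verifying that a non-{\ml} $q$ writes into $R_{qq'}$ only well-formed tuples $v_j$, so that atomicity yields a read value that is either $v_0$ or one of $q$'s own writes and Observation~\ref{monorq} applies. This is clerical, which is presumably why the result is stated as an observation rather than a lemma.
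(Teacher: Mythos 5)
Your argument is correct and is exactly the routine reasoning the paper leaves implicit: the paper states Observation~\ref{monorqwr} without proof, relying on atomicity of the single-writer single-reader register $R_{qq'}$ together with the write-monotonicity of Observation~\ref{monorq}, which is precisely your decomposition. The preliminary well-formedness check (that a non-{\ml} $q$ only ever writes tuples of the form $v_j$ into $R_{qq'}$) is a sensible bit of diligence, though it is already essentially guaranteed by the hypotheses of the statement.
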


The following observations relate the counters of the tuples that $w$ succesively writes in registers $R_{wp}$ 
and $R_{wQ}$.

\begin{observation}\label{wpwQw}
\begin{enumerate}[(1)]
\item \label{wpwQw1} If
	$w$ writes $(\ready, v_{k-1}, v_k)$ in $R_{wp}$ before $w$ writes $(\ready, v_{k'-1}, v_{k'})$ in $R_{wQ}$,
	then \mbox{$k \le k'$}.
\item \label{wpwQw2} If $w$ writes $(\ready, v_{k-1}, v_k)$ in $R_{wp}$ before $w$ writes $(\done,v_{k'})$ in $R_{wQ}$, 
	then \mbox{$k \le k'$.}

\item \label{wpwQw1.5} If $w$ writes $(\done,v_k)$ in $R_{wp}$ before $w$ writes $(\ready, v_{k'-1}, v_{k'})$ in $R_{wQ}$,
	then \mbox{$k < k'$.}

\item \label{wpwQw3} If $w$ writes $(\done,v_k)$ in $R_{wp}$ before $w$ writes $(\done,v_{k'})$ in $R_{wQ}$,
	then \mbox{$k \le k'$}.

\item \label{wpwQw4} If $w$ writes $(\ready, v_{k-1}, v_k)$ in $R_{wQ}$ before $w$ writes $(\ready, v_{k'-1}, v_{k'})$ in $R_{wp}$,
	then \mbox{$k < k'$.}

\item \label{wpwQw4.5} If $w$ writes $(\ready, v_{k-1}, v_k)$ in $R_{wQ}$ before $w$ writes $(\done,v_{k'})$ in $R_{wp}$,
	then \mbox{$k \le k'$.}

\item \label{wpwQw6} If $w$ writes $(\done,v_k)$ in $R_{wQ}$ before $w$ writes $(\ready, v_{k'-1}, v_{k'})$ in $R_{wp}$,
	then \mbox{$k < k'$.}

\item \label{wpwQw5} If $w$ writes $(\done,v_k)$ in $R_{wQ}$ before $w$ writes $(\done,v_{k'})$ in $R_{wp}$, 
	then \mbox{$k < k'$.}
\end{enumerate}
\end{observation}

The next observations relate the counters of the tuples that $p$ and processes $q \in Q$ read
	from $R_{wp}$ and $R_{wQ}$, respectively.

\begin{observation}\label{wpwQ}
Suppose $p$ and $q \in Q$ are not {\ml}.
\begin{enumerate}[(1)]
\item \label{wpwQ1} If $p$ reads $R_{wp}=(\ready, v_{k-1}, v_k)$ before $q$ reads $R_{wQ}=(\ready, v_{k'-1}, v_{k'})$,
	then \mbox{$k \le k'$.}
\item \label{wpwQ2} If $p$ reads $R_{wp}=(\ready, v_{k-1}, v_k)$ before $q$ reads $R_{wQ}=(\done,v_{k'})$, 
	then \mbox{$k-1 \le k'$.}

\item \label{wpwQ1.5} If $p$ reads $R_{wp}=(\done,v_k)$ before $q$ reads $R_{wQ}=(\ready, v_{k'-1}, v_{k'})$,
	then \mbox{$k \le k'$.}

\item \label{wpwQ3} If $p$ reads $R_{wp}=(\done,v_k)$ before $q$ reads $R_{wQ}=(\done,v_{k'})$,
	then \mbox{$k \le k'$.}

\item \label{wpwQ4} If $q$ reads $R_{wQ}=(\ready, v_{k-1}, v_k)$ before $p$ reads $R_{wp}=(\ready, v_{k'-1}, v_{k'})$,
	then \mbox{$k \le k'$.}

\item \label{wpwQ4.5} If $q$ reads $R_{wQ}=(\ready, v_{k-1}, v_k)$ before $p$ reads $R_{wp}=(\done,v_{k'})$,
	then \mbox{$k \le k'$.}

\item \label{wpwQ6} If $q$ reads $R_{wQ}=(\done,v_k)$ before $p$ reads $R_{wp}=(\ready, v_{k'-1}, v_{k'})$,
	then \mbox{$k+1 \le k'$.}

\item \label{wpwQ5} If $q$ reads $R_{wQ}=(\done,v_k)$ before $p$ reads $R_{wp}=(\done,v_{k'})$, 
	then \mbox{$k \le k'$.}
\end{enumerate}
\end{observation}

Now we prove that there is no ``new-old'' inversion for a read by $p$ that precedes a read by a process $q \in Q$.

\begin{lemma} \label{nopq}
If $\ru_p(v_k)$ and $\ru_q(v_{k'})$ are read operations by non-{\ml} processes $p$ and $q\in Q$ respectively,
	and $\ru_p(v_k)$ precedes $\ru_q(v_{k'})$, then \mbox{$k \le k'$}.
\end{lemma}

\begin{proof}
Suppose processes $p$ and $q \in Q$ are not {\ml}.
Let $\ru_p(v_k)$ and $\ru_q(v_{k'})$ be read operations by $p$ and $q$ respectively,
	 such that $\ru_p(v_k)$ precedes $\ru_q(v_{k'})$.
By Lemmas~\ref{pread} and~\ref{qread}, the following occurs:

\begin{compactenum}
	\item $p$ reads $R_{wp} = (\done, v_{k})$ \ilns{pCommit} $\ru_p(v_k)$,
	or
	\item $p$ reads $R_{wp} =(\ready, v_k, v_{k+1})$ \ilns{pPre} $\ru_p(v_k)$
\end{compactenum}

\emph{before} the following occurs:

\begin{compactenum}
	\item $q$ reads $R_{wQ} = (\done, v_{k'})$ \ilns{qCommit} $\ru_q(v_{k'})$, or
	
	\item $q$ reads $R_{wQ} =(\ready, v_{k'-1}, v_{k'})$ \ilns{qPre} $\ru_q(v_{k'})$, or

	\item $q$ reads $R_{wQ} =(\ready, v_{k'}, v_{k'+1})$ \ilns{qPre} $\ru_q(v_{k'})$.
\end{compactenum}

\smallskip
So there are six possible cases:

\begin{compactenum}

\item $p$ reads $R_{wp} = (\done, v_{k})$
	\iln{pCommit} $\ru_p(v_k)$
	before $q$ reads $R_{wQ} = (\done, v_{k'})$
	\iln{qCommit} $\ru_q(v_{k'})$.
 By Observation~\ref{wpwQ}(\ref{wpwQ3}),
 	 \mbox{$k \le k'$}.

\item $p$ reads $R_{wp} = (\done, v_{k})$
	\iln{pCommit} $\ru_p(v_k)$
	before $q$ reads $R_{wQ} =(\ready, v_{k'-1}, v_{k'})$
	\iln{qPre} $\ru_q(v_{k'})$.
 By Observation~\ref{wpwQ}(\ref{wpwQ1.5}),
 	 \mbox{$k \le k'$}.

\item $p$ reads $R_{wp} = (\done, v_{k})$
	\iln{pCommit} $\ru_p(v_k)$
	before $q$ reads $R_{wQ} =(\ready, v_{k'}, v_{k'+1})$
	\iln{qPre} $\ru_q(v_{k'})$.
 By Observation~\ref{wpwQ}(\ref{wpwQ1.5}),
 	 $k \le k'+1$.
 
\begin{compactenum}[i.]
\item $k < k'+1$. Then $k\le k'$.

\item $k = k'+1$. 
	We now show that this case is impossible.
	Since $k = k'+1$,
	$q$ reads $R_{wQ} =(\ready, v_{k-1}, v_{k})$ in line~\ref{qPre} of $\ru_q(v_{k-1})$,
	 and $\ru_q(v_{k-1})$ returns in line~\ref{qPrtlw}.
So $q$ read $R_{pQ}$ in line~\ref{qfromp} of $\ru_q(v_{k-1})$ before $\ru_q(v_{k-1})$ returns in line~\ref{qPrtlw}.

Since $p$ reads $R_{wp} = (\done, v_{k})$ in $\ru_p(v_k)$,
	$p$ writes $v_{k}$ in $R_{pQ}$ in line~\ref{ptoq} of $\ru_p(v_k)$.
Since $\ru_p(v_k)$ precedes $\ru_q(v_{k-1})$, 
	 $p$ writes $v_{k}$ in $R_{pQ}$ in $\ru_p(v_k)$ before $q$ reads $R_{pQ}$ in line~\ref{qfromp} of $\ru_q(v_{k-1})$.
By Observation~\ref{monopqwr}, 
	$q$ reads $R_{pQ}=v_{\ell}$,
	 for some $\ell \ge k$,
	 in line~\ref{qfromp} of $\ru_q(v_{k-1})$.
So $\ru_q(v_{k-1})$ returns $v_k$ in line~\ref{qPrt2}, rather than in line~\ref{qPrtlw} --- a contradiction.
\end{compactenum}

\item $p$ reads $R_{wp} =(\ready, v_{k}, v_{k+1})$
	\iln{pPre} $\ru_p(v_k)$
	before $q$ reads $R_{wQ} = (\done, v_{k'})$
	\iln{qCommit} $\ru_q(v_{k'})$.
 By Observation~\ref{wpwQ}(\ref{wpwQ2}), $(k+1)-1 \le k'$. So \mbox{$k \le k'$}.

\item $p$ reads $R_{wp} =(\ready, v_{k}, v_{k+1})$
	\iln{pPre} $\ru_p(v_k)$
	before $q$ reads $R_{wQ} =(\ready, v_{k'-1}, v_{k'})$
	\iln{qPre} $\ru_q(v_{k'})$.
By Observation~\ref{wpwQ}(\ref{wpwQ1}), $k+1 \le k'$. So \mbox{$k \le k'$}.

\item $p$ reads $R_{wp} =(\ready, v_{k}, v_{k+1})$
	\iln{pPre} $\ru_p(v_k)$
	before $q$ reads $R_{wQ} =(\ready, v_{k'}, v_{k'+1})$
	\iln{qPre} $\ru_q(v_{k'})$.
By Observation~\ref{wpwQ}(\ref{wpwQ1}), $k+1 \le k'+1$. So \mbox{$k \le k'$}.
\end{compactenum}
\end{proof}

Now we prove that there is no ``new-old'' inversion for a read by a process $q \in Q$ that precedes a read by $p$.

\begin{lemma}\label{noqp}
	If $\ru_q(v_k)$ and $\ru_p(v_{k'})$ are read operations by non-{\ml} processes $q\in Q$ and $p$ respectively,
	and $\ru_q(v_k)$ precedes $\ru_p(v_{k'})$, then \mbox{$k \le k'$}.
\end{lemma}

\begin{proof}
Suppose processes $q\in Q$ and $p$ are not {\ml}.
Let $\ru_q(v_k)$ and $\ru_p(v_{k'})$ be two read operations by $q$ and $p$ respectively,
	such that $\ru_q(v_k)$ precedes $\ru_p(v_{k'})$
By Lemmas~\ref{pread} and~\ref{qread}, the following occurs:
\begin{compactenum}
	\item $q$ reads $R_{wQ} = (\done, v_{k})$ \ilns{qCommit} $\ru_q(v_k)$,
	or
	
	\item $q$ reads $R_{wQ} =(\ready, v_{k-1}, v_{k})$ \ilns{qPre} $\ru_q(v_k)$,
	or

	\item $q$ reads $R_{wQ} =(\ready, v_{k}, v_{k+1})$ \ilns{qPre} $\ru_q(v_k)$
\end{compactenum}

\emph{before} the following occurs:

\begin{compactenum}
	\item $p$ reads $R_{wp} = (\done, v_{k'})$ \ilns{pCommit} $\ru_p(v_{k'})$,
	or
	\item $p$ reads $R_{wp} =(\ready, v_{k'}, v_{k'+1})$ \ilns{pPre} $\ru_p(v_{k'})$.
\end{compactenum}

\smallskip
So there are six possible cases:

\begin{compactenum}

\item $q$ reads $R_{wQ} = (\done, v_{k})$ \iln{qCommit} $\ru_q(v_k)$
	before $p$ reads $R_{wp} = (\done, v_{k'})$ \iln{pCommit} $\ru_p(v_{k'})$.
 By Observation~\ref{wpwQ}(\ref{wpwQ5}),
 	 \mbox{$k \le k'$}.

 \item $q$ reads $R_{wQ} = (\done, v_{k})$ \iln{qCommit} $\ru_q(v_k)$
 	before $p$ reads $R_{wp} =(\ready, v_{k'}, v_{k'+1})$ \iln{pPre} $\ru_p(v_{k'})$.
 By Observation~\ref{wpwQ}(\ref{wpwQ6}),
 	 $k+1 \le k'+1$. So \mbox{$k \le k'$}.

 \item $q$ reads $R_{wQ} =(\ready, v_{k-1}, v_{k})$ \iln{qPre} $\ru_q(v_k)$
 	before $p$ reads $R_{wp} = (\done, v_{k'})$ \iln{pCommit} $\ru_p(v_{k'})$.
 By Observation~\ref{wpwQ}(\ref{wpwQ4.5}),
  \mbox{$k \le k'$}.

\item $q$ reads $R_{wQ} =(\ready, v_{k-1}, v_{k})$ \iln{qPre} $\ru_q(v_k)$
	before $p$ reads $R_{wp} =(\ready, v_{k'}, v_{k'+1})$ \iln{pPre} $\ru_p(v_{k'})$.
By Observation~\ref{wpwQ}(\ref{wpwQ4}), 
	$k \le k'+1$. 

\begin{compactenum}[i.]
\item $k < k'+1$. Then $k\le k'$.

\item $k = k'+1$.
	We now show that this case is impossible.
	Since $k = k'+1$,
	 $p$ reads $R_{wp} =(\ready, v_{k-1}, v_{k})$ in $\ru_p(v_{k-1})$, and $\ru_p(v_{k-1})$ returns in line~\ref{pPrt}.
Since $q$ reads $R_{wQ} =(\ready, v_{k-1}, v_{k})$ in line~\ref{qPre} of $\ru_q(v_k)$,
	$\ru_q(v_k)$
	 returns 
	in line~\ref{qPrt4},~\ref{qPrt5},~\ref{qPrt2}, or~\ref{qPrt3}.
We now consider each one of these cases.
	\begin{compactenum}[a.]
	\item $\ru_q(v_k)$
	 returns in line~\ref{qPrt4}.
	Then $q$ reads $R_{wQ} =(\done, v_{\ell})$ for some $\ell \ge k$ in line~\ref{qCommitp} of $\ru_q(v_k)$.
	Since $\ru_q(v_k)$
	 precedes $\ru_p(v_{k-1})$,
		$q$ reads $R_{wQ} =(\done, v_{\ell})$
		before $p$ reads $R_{wp} =(\ready, v_{k-1}, v_{k})$ in $\ru_p(v_{k-1})$.
By Observation~\ref{wpwQ}(\ref{wpwQ6}), $\ell<k$ --- a contradiction.

	\item $\ru_q(v_k)$
	 returns in line~\ref{qPrt5}.
	 So $q$ read $R_{wQ}=(\ready, - ,v_{\ell})$ for some $\ell > k$ in line~\ref{qPrep} of $\ru_q(v_k)$.
	 Since $\ru_q(v_k)$
	 precedes $\ru_p(v_{k-1})$,
		$q$ reads $R_{wQ}=(\ready, - ,v_{\ell})$ in $\ru_q(v_k)$
		before $p$ reads $R_{wp} =(\ready, v_{k-1}, v_{k})$ in $\ru_p(v_{k-1})$.
By Observation~\ref{wpwQ}(\ref{wpwQ4}), $\ell \le k$ --- a contradiction.
		 
	\item $\ru_q(v_k)$
	 returns
		in line~\ref{qPrt2} or \ref{qPrt3}.
	Then $q$ reads $R_{pQ}=v_{\ell}$ for some $\ell \ge k$ in line~\ref{qfromp} or~\ref{qfrompp} of $\ru_q(v_k)$.
So $p$ writes $v_{\ell}$ to $R_{pQ}$
	in line~\ref{ptoq} of some $\ru_p(-)$ operation
	before $q$ reads $R_{pQ}$ in $\ru_q(v_k)$.
Thus, $p$ read $R_{wp} =(\done, v_{\ell})$ in line~\ref{pCommit}
	before $q$ reads $R_{pQ}$ in $\ru_q(v_k)$.
Since $\ru_q(v_k)$
	 precedes $\ru_p(v_{k-1})$, 
	 $q$ read $R_{pQ}$ in $\ru_q(v_k)$ before
	$p$ reads $R_{wp} =(\ready, v_{k-1}, v_{k})$ in $\ru_p(v_{k-1})$.
So $p$ read $R_{wp} =(\done, v_{\ell})$ before $p$ reads $R_{wp} =(\ready, v_{k-1}, v_{k})$ in $\ru_p(v_{k-1})$.
By Observation~\ref{wpwQ}(\ref{wpwQ6}), $\ell < k$ --- a contradiction.

	\end{compactenum}

\end{compactenum}

\item $q$ reads $R_{wQ} =(\ready, v_{k}, v_{k+1})$ \iln{qPre} $\ru_q(v_k)$
	before $p$ reads $R_{wp} = (\done, v_{k'})$ \iln{pCommit} $\ru_p(v_{k'})$.
 By Observation~\ref{wpwQ}(\ref{wpwQ4.5}), 
 	$k+1 \le k'$. So \mbox{$k \le k'$}.

\item $q$ reads $R_{wQ} =(\ready, v_{k}, v_{k+1})$ \iln{qPre} $\ru_q(v_k)$
	before $p$ reads $R_{wp} =(\ready, v_{k'}, v_{k'+1})$ \iln{pPre} $\ru_p(v_{k'})$.
By Observation~\ref{wpwQ}(\ref{wpwQ4}), $k+1 \le k'+1$. So \mbox{$k \le k'$}.
\end{compactenum}
\end{proof}

Finally, we prove that there are no ``new-old'' inversions between the reads of processes~in~$Q$.

\begin{lemma} \label{norq}
		If $\ru_q(v_k)$ and $\ru_{q'}(v_{k'})$ are read operations by non-{\ml} processes $q\in Q$ and $q'\in Q$ respectively,
	and $\ru_q(v_k)$ precedes $\ru_{q'}(v_{k'})$, then \mbox{$k \le k'$}.
\end{lemma}

\begin{proof}
Suppose processes $q\in Q$ and $q'\in Q$ are not {\ml}.
Let $\ru_q(v_k)$ and $\ru_{q'}(v_{k'})$ be read operations by $q$ and $q'$ respectively,
	 such that $\ru_q(v_k)$ precedes $\ru_{q'}(v_{k'})$.
By Lemma~\ref{qread}, the following occurs:
\begin{compactenum}
	\item $q$ reads $R_{wQ} = (\done, v_{k})$ \ilns{qCommit} $\ru_q(v_k)$, or
	
	\item $q$ reads $R_{wQ} =(\ready, v_{k-1}, v_{k})$ \ilns{qPre} $\ru_q(v_k)$, or

	\item $q$ reads $R_{wQ} =(\ready, v_{k}, v_{k+1})$ \ilns{qPre} $\ru_q(v_k)$
\end{compactenum}

\emph{before} the following occurs:

\begin{compactenum}
	\item $q'$ reads $R_{wQ} = (\done, v_{k'})$ \ilns{qCommit} $\ru_{q'}(v_{k'})$, or
	
	\item $q'$ reads $R_{wQ} =(\ready, v_{k'-1}, v_{k'})$ \ilns{qPre} $\ru_{q'}(v_{k'})$, or

	\item $q'$ reads $R_{wQ} =(\ready, v_{k'}, v_{k'+1})$ \ilns{qPre} $\ru_{q'}(v_{k'})$.
\end{compactenum}

\smallskip
So there are nine possible cases:

\begin{compactenum}

\item $q$ reads $R_{wQ} = (\done, v_{k})$ \iln{qCommit} $\ru_q(v_k)$
	before
	$q'$ reads $R_{wQ} = (\done, v_{k'})$ \iln{qCommit} $\ru_{q'}(v_{k'})$.
  By Observation~\ref{monowp}(\ref{monowp2.5}),
  	 \mbox{$k \le k'$}.

 \item $q$ reads $R_{wQ} = (\done, v_{k})$ \iln{qCommit} $\ru_q(v_k)$
 	before
	$q'$ reads $R_{wQ} =(\ready, v_{k'-1}, v_{k'})$ \iln{qPre} $\ru_{q'}(v_{k'})$.
  By Observation~\ref{monowp}(\ref{monowp3}),
  	 $k < k'$. So \mbox{$k \le k'$}.

 \item $q$ reads $R_{wQ} = (\done, v_{k})$ \iln{qCommit} $\ru_q(v_k)$
 	before
	$q'$ reads $R_{wQ} =(\ready, v_{k'}, v_{k'+1})$ \iln{qPre} $\ru_{q'}(v_{k'})$.
 By Observation~\ref{monowp}(\ref{monowp3}),
 	 $k < k'+1$. So \mbox{$k \le k'$}.

 \item $q$ reads $R_{wQ} =(\ready, v_{k-1}, v_{k})$ \iln{qPre} $\ru_q(v_k)$
 	before
	$q'$ reads $R_{wQ} = (\done, v_{k'})$ \iln{qCommit} $\ru_{q'}(v_{k'})$.
 By Observation~\ref{monowp}(\ref{monowp1}),
 	 \mbox{$k \le k'$}.

\item $q$ reads $R_{wQ} =(\ready, v_{k-1}, v_{k})$ \iln{qPre} $\ru_q(v_k)$
	before
	$q'$ reads $R_{wQ} =(\ready, v_{k'-1}, v_{k'})$ \iln{qPre} $\ru_{q'}(v_{k'})$.
By Observation~\ref{monowp}(\ref{monowp2}),
	 \mbox{$k \le k'$}.

 \item $q$ reads $R_{wQ} =(\ready, v_{k-1}, v_{k})$ \iln{qPre} $\ru_q(v_k)$
 	before
	$q'$ reads $R_{wQ} =(\ready, v_{k'}, v_{k'+1})$ \iln{qPre} $\ru_{q'}(v_{k'})$.
 By Observation~\ref{monowp}(\ref{monowp2}),
 	 $k \le k'+1$.
	\begin{compactenum}[i.]
	\item $k < k'+1$. Then $k\le k'$.

	\item $k = k'+1$.
	We now show that this case is impossible.
	Since $k = k'+1$, 
		$q'$ reads $R_{wQ} =(\ready, v_{k-1}, v_{k})$ in $\ru_{q'}(v_{k-1})$,
	and $\ru_{q'}(v_{k-1})$ returns in line~\ref{qPrtlw}.
	So $q'$ reads
	$R_{qq'}$ in line~\ref{rtoq} of $\ru_{q'}(v_{k-1 })$ before $\ru_{q'}(v_{k-1 })$ returns in line~\ref{qPrtlw}.
Since $q$ reads $R_{wQ} =(\ready, v_{k-1}, v_{k})$ in $\ru_q(v_k)$,
	$\ru_q(v_k)$ returns
	in line~\ref{qPrt4},~\ref{qPrt5},~\ref{qPrt2}, or~\ref{qPrt3}.
We now consider each one of these cases.

		\begin{compactenum}[a.]

		\item  $\ru_q(v_k)$ returns in line~\ref{qPrt4}.
	Then $q$ reads $R_{wQ} =(\done, v_{\ell})$ for some $\ell \ge k$ in line~\ref{qCommitp} of $\ru_q(v_k)$.
	Since $\ru_q(v_k)$ precedes $\ru_{q'}(v_{k-1})$,
		$q$ read $R_{wQ} =(\done, v_{\ell})$ in $\ru_q(v_k)$
		before $q'$ reads $R_{wQ} =(\ready, v_{k-1}, v_{k})$ in $\ru_{q'}(v_{k-1})$.
By Observation~\ref{monowp}(\ref{monowp3}), $\ell <k$ --- a contradiction.

	\item $\ru_q(v_k)$
	 returns in line~\ref{qPrt5}.
	 Then $q$ reads $R_{wQ}=(\ready, - ,v_{\ell})$  for some $\ell > k$ in line~\ref{qPrep} of $\ru_q(v_k)$.
	 Since $\ru_q(v_k)$
	 precedes $\ru_{q'}(v_{k-1})$,
		$q$ read $R_{wQ}=(\ready, - ,v_{\ell})$ in $\ru_q(v_k)$
		before $q'$ reads $R_{wQ} =(\ready, v_{k-1}, v_{k})$ in $\ru_{q'}(v_{k-1})$.
By Observation~\ref{monowp}(\ref{monowp2}), $\ell \le k$ --- a contradiction.

		\item $\ru_q(v_k)$ returns in line~\ref{qPrt2} or \ref{qPrt3}.
		Then $q$ writes $v_{k}$ in $R_{qq'}$ in line~\ref{qtor} or \ref{qtorp} of $\ru_q(v_k)$.
Since $\ru_q(v_k)$ precedes $\ru_{q'}(v_{k-1})$,
	$q$ writes $v_{k}$ in $R_{qq'}$ in $\ru_q(v_k)$ before $q'$ reads 
	$R_{qq'}$ in line~\ref{rtoq} of $\ru_{q'}(v_{k-1 })$.
	Thus, by Observation~\ref{monorqwr}, 
	$q'$ reads $R_{qq'}=v_{\ell}$ for some $\ell \ge k$ in line~\ref{rtoq} of $\ru_{q'}(v_{k-1 })$.
So $\ru_{q'}(v_{k-1 })$ returns in line~\ref{qPrt3}, rather than in line~\ref{qPrtlw} --- a contradiction

		\end{compactenum}

	\end{compactenum}

\item $q$ reads $R_{wQ} =(\ready, v_{k}, v_{k+1})$ \iln{qPre} $\ru_q(v_k)$
	before
	$q'$ reads $R_{wQ} = (\done, v_{k'})$ \iln{qCommit} $\ru_{q'}(v_{k'})$.
 By Observation~\ref{monowp}(\ref{monowp1}), $k+1 \le k'$. So \mbox{$k \le k'$}.

\item $q$ reads $R_{wQ} =(\ready, v_{k}, v_{k+1})$ \iln{qPre} $\ru_q(v_k)$
	before
	$q'$ reads $R_{wQ} =(\ready, v_{k'-1}, v_{k'})$ \iln{qPre} $\ru_{q'}(v_{k'})$.
By Observation~\ref{monowp}(\ref{monowp2}), $k+1 \le k'$. So \mbox{$k \le k'$}.

\item $q$ reads $R_{wQ} =(\ready, v_{k}, v_{k+1})$ \iln{qPre} $\ru_q(v_k)$
	before
	$q'$ reads $R_{wQ} =(\ready, v_{k'}, v_{k'+1})$ \iln{qPre} $\ru_{q'}(v_{k'})$.
By Observation~\ref{monowp}(\ref{monowp2}), $k+1 \le k'+1$. So \mbox{$k \le k'$}.
\end{compactenum}
\end{proof}

We now prove that the writes and reads of the lower-level procedures $\wu()$ and $\ru_r()$ for all readers $r$ satisfy
	Property~2 of Definition~\ref{LinearizableByz}.

\begin{lemma}\label{p2r}
Let $\ru_r(v_k)$ and $\ru_{r'}(v_{k'})$ be any read operations by some non-{\ml} processes $r$ and $r'$
	in $\{p\} \cup Q$.
If $\ru_r(v_k)$ precedes $\ru_{r'}(v_{k'})$ then \mbox{$k \le k'$}.
\end{lemma}

\begin{proof}
Immediate from Lemmas~\ref{monopread},~\ref{nopq},~\ref{noqp},~\ref{norq}.
\end{proof}

Finally, we prove that the write and read operations
	of the high-level procedures $\Wu()$ and $\Ru()$
	satisfy Property~2 of Definition~\ref{LinearizableByz}.

\begin{lemma}\label{Cp2} \emph{[\textbf{Property 2}: No ``new-old'' inversion]}

Let $\Ru(u_k)$ and $\Ru(u_{k'})$ be any read operations by some non-{\ml} processes in $\{p\}\cup Q$.
If $\Ru(u_k)$ precedes $\Ru(u_{k'})$ then \mbox{$k \le k'$}.
\end{lemma}

\begin{proof}
Let $\Ru(u_k)$ and $\Ru(u_{k'})$ be any read operations by some non-{\ml} processes $r$ and $r'$ in $\{p\}\cup Q$.
Suppose $\Ru(u_k)$ precedes $\Ru(u_{k'})$.
By Observation~\ref{Rr},
	in the $\Ru(u_k)$ and $\Ru(u_{k'})$ operations,
	processes $r$ and $r'$ invoke and complete a $\ru_r(v_k)$ and $\ru_{r'}(v_{k'})$ operation, respectively.
Since $\Ru(u_k)$ precedes $\Ru(u_{k'})$,
	$\ru_r(v_k)$ precedes $\ru_{r'}(v_{k'})$.
By Lemma~\ref{p2r},
	\mbox{$k \le k'$}.
\end{proof}

By Lemmas~\ref{Cp1} and~\ref{Cp2}, the $\Wu(-)$ and $\Ru(-)$ operations of the register implementation $\I{n}$
 	satisfy the linearizability Properties~1 and~2 of Definition~\ref{LinearizableByz}.
Therefore:

\begin{theorem}\label{Theo-linearizability}
For all $n\ge2$,
	the implementation $\I{n}$ is linearizable. 
\end{theorem}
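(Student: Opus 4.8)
The plan is to obtain the theorem as an immediate consequence of Definition~\ref{LinearizableByz} together with Lemmas~\ref{Cp1} and~\ref{Cp2}, using the two-case split already introduced in Section~\ref{algo1linear}. Recall that Definition~\ref{LinearizableByz} imposes \emph{no} constraint whatsoever on an execution of the implemented register in which its writer is {\ml}: the definition begins ``if the writer is not {\ml}, then \ldots''. Hence, for any execution $E$ of $\I{n}$ in which $w$ is {\ml}, the implementation $\I{n}$ is vacuously linearizable, and it remains only to handle executions in which the writer $w$ is not {\ml} (i.e., $w$ is correct or only crashes). For those executions, linearizability of $\I{n}$ amounts, by Definition~\ref{LinearizableByz}, precisely to the conjunction of Property~1 and Property~2 for the $\Ru(-)$ operations performed by readers that are not {\ml}.

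For this remaining case the two required properties are exactly what the preceding lemmas deliver. Lemma~\ref{Cp1} states that for every read operation $\Ru(u)$ by a non-{\ml} reader, either some $\Wu(u)$ operation immediately precedes or is concurrent with $\Ru(u)$, or $u=u_0$ and no $\Wu(-)$ precedes $\Ru(u)$; this is verbatim Property~1 of Definition~\ref{LinearizableByz} with $w$ not {\ml}. Lemma~\ref{Cp2} states that if $\Ru(u_k)$ precedes $\Ru(u_{k'})$ for non-{\ml} readers then $k\le k'$; this is verbatim Property~2. So, in the case $w$ is not {\ml}, the $\Wu(-)$ and $\Ru(-)$ operations of the register implemented by $\I{n}$ satisfy both properties, and combining the two cases shows that $\I{n}$ is linearizable for every $n\ge 2$. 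The step is a two- or three-line assembly, valid uniformly in $n$ (the case $n=2$, where $Q$ is a singleton, is covered by the same lemmas).

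Since virtually all of the technical content is already packaged into Lemmas~\ref{Cp1} and~\ref{Cp2} — which in turn rest on the lower-level Lemmas~\ref{p1r} and~\ref{p2r} about $\wu()$, $\ru_p()$, $\ru_q()$ and on the monotonicity Observations~\ref{monowpw}--\ref{monorqwr} and~\ref{wpwQw}--\ref{wpwQ} — there is no real obstacle at this final step. The only point requiring a little care is the bookkeeping of the case split: one must read ``not {\ml}'' as covering both correct processes and processes that only crash, so that the quantifiers in the hypotheses of Lemmas~\ref{Cp1} and~\ref{Cp2} match exactly the quantifiers (over the writer and over readers) appearing in Definition~\ref{LinearizableByz}. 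Once this matching is noted, the proof is complete.
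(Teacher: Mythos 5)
Your proposal is correct and follows exactly the paper's own argument: the same case split on whether the writer $w$ is {\ml} (with the {\ml} case handled vacuously by Definition~\ref{LinearizableByz}), and the non-{\ml} case discharged by citing Lemmas~\ref{Cp1} and~\ref{Cp2} as Properties~1 and~2. Nothing further is needed.
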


{\large \textbf{Termination of \texorpdfstring{$\I{n}$}{In}}.}
We now prove the 
Termination
property of the implementation $\I{n}$.
As in the previous section, we assume that the implementations of the registers $R_{wQ}$ and $R_{pQ}$ that $\I{n}$ uses
	are {\vd} (so they are linearizable).

\smallskip
Note that if $w$ is {\ml}, it could write $(\done,\tp{k,-})$ followed by $(\done,\tp{k',-})$ in $R_{wp}$ such that $k' < k$.
To prevent $p$ from ``acting on'' a $\done$ tuple that is out of order,
	$p$ remembers in the variable $\lc$ the value $k$ of the last $(\done,\tp{k,-})$ tuple that it accepted.
Using this variable in the guard of line~\ref{pCommit} ensures
	that the $\tp{k,-}$ tuples that $p$ writes in $R_{pQ}$ in line~\ref{ptoq}  have non-decreasing values of $k$ even if the writer $w$ is {\ml}.
So if $p$ is not {\ml}, correct processes that read $R_{pQ}$, read tuples $\tp{k,-}$ with non-decreasing values of $k$.
More precisely:

\begin{observation}\label{monopqrr}
Suppose $p$, $q\in Q$, and $q'\in Q$ are not {\ml}.
	If $q$ reads $R_{pQ}=\tp{k,-}$ before $q'$ reads $R_{pQ}=\tp{k',-}$,
	then \mbox{$k \le k'$}.
\end{observation}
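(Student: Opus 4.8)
The plan is to derive this from the linearizability of the implemented register $R_{pQ}$ together with Observation~\ref{monopq}. By Assumption~\ref{a0} the implementation of $R_{pQ}$ is valid, hence linearizable, and its writer $p$ is not {\ml}, so Properties~1 and~2 of Definition~\ref{LinearizableByz} hold for $R_{pQ}$; as noted at the start of this section, we may therefore treat the operations on $R_{pQ}$ as instantaneous, so that each read of $R_{pQ}$ returns the value written by the most recent preceding write, or the initial value $\tp{0,u_0}$ if no write precedes it.

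First I would record that, since $p$ is not {\ml}, the only place $p$ writes $R_{pQ}$ is line~\ref{ptoq} of $\ru_p()$, where it writes the tuple $\tp{k,u}$ it just read as $(\done,\tp{k,u})$ from $R_{wp}$ subject to the guard $k\ge\lc$ in line~\ref{pCommit}; the update in line~\ref{lc} then makes $\lc$ non-decreasing, so the tuples that $p$ successively writes into $R_{pQ}$ carry non-decreasing counters --- this is exactly Observation~\ref{monopq}. Moreover, $p$ performs its read operations (and hence these writes) one at a time, so they occur in a well-defined real-time order, which the linearization order of $R_{pQ}$ must respect.

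Then I would conclude as follows. Assume $q$ reads $R_{pQ}=\tp{k,-}$ before $q'$ reads $R_{pQ}=\tp{k',-}$, with $q,q'\in Q$ not {\ml}; since linearization respects real-time precedence, $q$'s read is linearized before $q'$'s read. If $q'$'s read returns the initial value, no write of $R_{pQ}$ is linearized before it, hence none is linearized before $q$'s earlier read, so $k=k'=0$. Otherwise $q'$'s read returns the value of some write $W'$ of $p$, and $q$'s read either returns the initial value (so $k=0\le k'$) or the value of a write $W$ of $p$; this $W$ is linearized before $q$'s read and hence before $q'$'s read, so it is not linearized after $W'$, and since $W$ and $W'$ are both writes by the sequential process $p$, $W$ occurs no later than $W'$ in real time, whence Observation~\ref{monopq} yields $k\le k'$. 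The only delicate point is purely notational: Property~2 of Definition~\ref{LinearizableByz} is phrased via the ordinal of $p$'s write, whereas the statement concerns the counter carried inside the written tuple, and Observation~\ref{monopq} is precisely what converts monotonicity of these ordinals into monotonicity of the counters.
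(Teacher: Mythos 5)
Your proposal is correct and follows essentially the same route as the paper's (implicit) justification: the guard $k\ge\lc$ in line~\ref{pCommit} together with the update in line~\ref{lc} makes the counters of $p$'s successive writes to $R_{pQ}$ non-decreasing, and the validity (hence linearizability) of $R_{pQ}$ then transfers this monotonicity to the reads of non-{\ml} processes in real-time order. The paper records this as an unproved observation preceded by exactly this explanation, so your write-up simply makes the same argument explicit.
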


\begin{theorem}\label{Theo-wait-freedom}
 For all $n\ge2$,
 	the implementation $\I{n}$ 
 	satisfies the Termination property
 	if the writer is correct or no reader is {\ml}.
 \end{theorem}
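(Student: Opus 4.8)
The plan is to fix an execution $E$ of $\I{n}$ in which $w$ is correct or no reader is {\ml}, and to show that every operation invoked by a correct process in $E$ returns; throughout I use Assumption~\ref{a0}, that the implementations of $R_{wQ}$ and $R_{pQ}$ are valid. The first step is a book-keeping lemma about when the (possibly non-instantaneous) operations on these two implemented registers are guaranteed to return in $E$. Since the readers of $R_{wQ}$ are exactly the processes of $Q$, if $w$ is correct then the writer of $R_{wQ}$ is correct, and if no reader of $\I{n}$ is {\ml} then no reader of $R_{wQ}$ is {\ml}; so, by validity of $R_{wQ}$, in $E$ every operation on $R_{wQ}$ invoked by a correct process returns. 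Likewise, every operation on $R_{pQ}$ invoked by a correct process returns whenever $p$ (its writer) is correct or no process of $Q$ is {\ml}.

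With this in hand the easy parts are immediate. If $w$ is correct, then $\Wu(u)$ calls $\wu(\langle c,u\rangle)$, which performs two instantaneous writes to the atomic register $R_{wp}$ and two writes to $R_{wQ}$ (the latter returning by the book-keeping lemma) and then returns; so $w$'s writes complete, and if $w$ is not correct it has no liveness obligation (and then, by the hypothesis on $E$, no reader is {\ml}). For a correct reader $p$: $\ru_p()$ reads $R_{wp}$ (instantaneous); in the branch of line~\ref{pCommit}/\ref{pCrt} it first writes to $R_{pQ}$ at line~\ref{ptoq}, which returns because $p$ is the (correct) writer of $R_{pQ}$; the other branches return at once; hence $\Ru()$ completes for $p$.

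The heart of the proof is the read operation $\ru_q()$ of a correct $q\in Q$. After its read of $R_{wQ}$ (which returns), $q$ returns at line~\ref{qCrt} or line~\ref{qrtbot} unless it reads a tuple $(\ready,\lw,\langle k,u\rangle)$ at line~\ref{qPre}, in which case it runs Thread~1 and Thread~2 in parallel and returns as soon as one of them does. I split on whether $w$ is correct. \emph{If $w$ is correct}, I argue Thread~1 returns: since $q$ and $w$ are not {\ml}, $R_{wQ}$ is linearizable, so $q$'s read implies $w$ wrote $(\ready,v_{k-1},v_k)$ at line~\ref{wqP} of $\wu(v_k)$; as $w$ is correct it completes $\wu(v_k)$, writing $(\done,v_k)$ at line~\ref{wqC}; after that write takes effect, Observation~\ref{monowpwr} forces every later read of $R_{wQ}$ by $q$ to return $(\done,v_{k'})$ with $k'\ge k$ or $(\ready,v_{k'-1},v_{k'})$ with $k'>k$, and such a read in the loop triggers a return at line~\ref{qPrt4} or line~\ref{qPrt5}; one shows $q$ cannot keep avoiding this. (Thread~2 may block here while reading $R_{pQ}$, which is harmless since Thread~1 runs in parallel.) \emph{If $w$ is not correct}, then no reader is {\ml}, and I argue Thread~2 returns: its reads of $R_{pQ}$ return (no reader of $R_{pQ}$ is {\ml}) and its accesses to the atomic $R_{q'q}$/$R_{qq'}$ registers are instantaneous, so Thread~2 terminates; it returns at line~\ref{qPrt2} if line~\ref{qfromp} sees counter $\ge k$, and returns $\lw$ at line~\ref{qPrtlw} if line~\ref{qfromp} sees counter $<k$ and line~\ref{rtoq} finds no $R_{q'q}$ with counter $\ge k$. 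The only remaining possibility is that line~\ref{qfromp} sees counter $<k$, some $R_{q'q}=\langle k_0,-\rangle$ with $k_0\ge k$ at line~\ref{rtoq}, yet line~\ref{qfrompp} again sees $R_{pQ}$ with counter $<k$; I rule this out. Here $k\ge 1$ (if $k=0$, line~\ref{qfromp} succeeds at once) and $R_{q'q}$ has initial counter $0$, so $q'$ wrote $\langle k_0,-\rangle$ to $R_{q'q}$; since $q'$ is not {\ml} it did so at line~\ref{qtor} or~\ref{qtorp}, immediately after reading $R_{pQ}$ and finding counter $\ge k_0\ge k$; that read by $q'$ precedes $q'$'s write to $R_{q'q}$, hence $q$'s read at line~\ref{rtoq}, hence $q$'s read at line~\ref{qfrompp}, so by Observation~\ref{monopqrr} (with $p,q',q$ not {\ml}) the latter returns counter $\ge k$ — a contradiction. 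Hence Thread~2 returns, and combining the cases every operation by a correct process in $E$ returns.

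I expect the main obstacle to be the step ruling out that Thread~2 ``falls through'' without returning when no reader is {\ml}: it hinges on the non-decreasing behaviour of the counters that non-{\ml} processes observe in $R_{pQ}$ (Observation~\ref{monopqrr}), which itself depends on $p$'s use of the local variable $\lc$ at line~\ref{pCommit} to discard out-of-order $\done$ tuples produced by a possibly-{\ml} writer. A secondary subtlety is making the Thread~1 argument fully rigorous — showing $q$ cannot keep catching $\ready$ tuples at line~\ref{qCommitp} while catching $\done$ tuples at line~\ref{qPrep} — for which the monotonicity observations for $R_{wQ}$ (Observation~\ref{monowpwr}) are exactly what is needed.
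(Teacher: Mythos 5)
Your proposal is correct and follows essentially the same route as the paper's proof: write operations and $p$'s reads terminate trivially, Thread~1 terminates when $w$ is correct (via the monotonicity of the counters $w$ writes to $R_{wQ}$), and when $w$ is faulty and no reader is {\ml}, Thread~2 cannot block on $R_{pQ}$ and cannot fall through at line~\ref{qfrompp}, by exactly the chain $q' \to R_{pQ} \to$ Observation~\ref{monopqrr} that the paper uses in its Claim~\ref{tenen}. The only (cosmetic) differences are that you argue directly by cases on whether $w$ is correct rather than by contradiction, and that you explicitly dispose of the $k=0$ corner case at line~\ref{rtoq}, which the paper glosses over.
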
 

\begin{proof}
We must show that if the writer is correct or no reader is {\ml},
	then every correct process completes each operation that it invokes
	in a finite number of steps.
If the writer $w$ is correct, it is clear from the code of the write procedures $\Wu()$ and $\wu()$
	that $w$ completes every $\Wu()$ invocation with a response.
If the reader $p$ is correct, it is also clear from the code of the procedures $\Ru()$ and $\ru_p()$
	that $p$ completes every $\Ru()$ invocation with a response.
Let $q$ be a correct process in $Q$.
It remains to show that if the writer is correct or no reader is {\ml} then $q$ completes every $\Ru()$ invocation
	with a response.
Consider any execution of $\Ru()$ by $q$.
Note that  in line~\ref{callru} of $\Ru()$ process $q$ calls $\ru_q()$,
	and if $\ru_q()$ returns a response
	then $\Ru()$ also returns a response.
We now show that $\ru_q()$ returns a response. To do so we first show the following.

\begin{claim}\label{ouahed}
If the writer $w$ is correct, then $q$ does not loop forever in \textsc{Thread 1} (lines~\ref{rp}-\ref{qPrt5}).
\end{claim}

\begin{proof}
Suppose, for contradiction, that the writer $w$ is correct but $q$ loops forever in \mbox{lines~\ref{rp}-\ref{qPrt5}.}
Thus, $q$ reads $R_{wQ}=(\ready,\lw,\langle k,u \rangle)$ for some $\lw$ and some $\langle k,u \rangle$ in line~\ref{qPre}.
Since $w$ is correct: (1) $w$ previously wrote $(\ready,\lw,\langle k,u \rangle)$ into $R_{wQ}$,
	and (2) $w$ eventually writes $(\done, \langle k,u \rangle)$ into $R_{wQ}$.
Furthermore, by Observation~\ref{monowpw}, if $w$ writes into $R_{wQ}$ after writing $(\done, \langle k,u \rangle)$ into $R_{wQ}$,
	then it writes $(\ready,-,\langle k',- \rangle)$ or $(\done, \langle k',- \rangle)$ into $R_{wQ}$ with $k'>k$.
Since $q$ spins forever in the loop of lines~\ref{rp}-\ref{qPrt5}, $q$ reads $R_{wQ}$ infinitely many times.
From the above, it is clear that eventually $q$ reads $R_{wQ}=(\done, \langle k',- \rangle)$ for some $k' \ge k$ in lines~\ref{qCommitp} or $R_{wQ}=(\ready,-,\langle k',- \rangle)$
	for some $k'>k$ in lines~\ref{qPrep}, and then $q$ exits the loop by returning a tuple in lines~\ref{qPrt4} or~\ref{qPrt5} --- a contradiction.
\end{proof}

\begin{claim}\label{tenen}
Suppose no reader is {\ml}.
	Then (1) $q$ does not block in \textsc{Thread~2}, and
		(2) if $q$ evaluates the condition of
		line~\ref{qfrompp} in \textsc{Thread~2}, then $q$ finds that this condition holds.
\end{claim}

\begin{proof}
Suppose no reader is {\ml}.
So in particular no reader in $Q$ is {\ml}.
Thus, since the implementation of $R_{pQ}$ is valid,
		$q$'s read operations of $R_{pQ}$ in lines~\ref{qfromp} and~\ref{qfrompp} do not block.
So $q$ cannot block in \textsc{Thread~2}.

Suppose $q$ evaluates the condition of
		line~\ref{qfrompp} in \textsc{Thread~2}.
Since $q$	reaches line~\ref{qfrompp} in \textsc{Thread~2},
	process $q$ \emph{previously} read:
	(1) $R_{wQ}=(\ready,\lw,\langle k,u \rangle)$
	for some $\lw$ and some $\langle k,u \rangle$ in line~\ref{qPre},
	and
	(2) $R_{q'q}=\langle k',- \rangle$ \text{for some} $q' \in Q$ and some $k'\ge k$ in line~\ref{rtoq}.
Thus, since reader $q'$ is not {\ml},
	$q'$ writes $\tp{k',-}$ into $R_{q'q}$ before $q$ reached line~\ref{qfrompp}.
Note that $q'$ can write $\tp{k',-}$ into $R_{q'q}$ only 
	in line~\ref{qtor} or~\ref{qtorp} in some execution of $\ru_{q'}()$ by $q'$.
Before doing so, $q'$ must have read $R_{pQ}=\tp{k'',-}$ in line~\ref{qfromp} or~\ref{qfrompp} for some $k''\ge k'$
	(in that execution of $\ru_{q'}()$).
So this reading of $\tp{k'',-}$ from register $R_{pQ}$ occurred \emph{before} $q$ reached line~\ref{qfrompp}. 
Thus, by Observation~\ref{monopqrr},
	when $q$ reads $R_{pQ}$ in line~\ref{qfrompp},
	$q$ must read $R_{pQ} =\tp{k''',-}$ for some $k'''\ge k''\ge k'\ge k$.
So, $q$ finds that the condition of line~\ref{qfrompp} holds.
\end{proof}

We now prove that
	if the writer is correct or no reader is {\ml}, then
	the execution of $\ru_q()$ by the correct process $q$ returns.
Suppose, for contradiction, that:
	(1) writer is correct or no reader is {\ml}, but
	(2) the execution of $\ru_q()$ by $q$ does not return.
So $q$ does not return in line~\ref{qCrt} or line~\ref{qrtbot} of $\ru_q()$.
Thus $q$ enters the {cobegin}-{coend} section of the code of $\ru_q()$,
	and it executes \textsc{Thread~1} and \textsc{Thread~2} in parallel.

Consider the forever loop in \textsc{Thread~1}.
From the code of $\ru_q()$, it is clear that
	$q$ stops executing this loop if and only if
	either $q$ returns a value in lines~\ref{qPrt4} or~\ref{qPrt5} of this loop,
	or $q$ exits $\ru_q()$ altogether by returning some value in \textsc{Thread~2}.
Thus, since the execution of $\ru_q()$ by $q$ does not return,
	$q$ loops forever in \textsc{Thread~1}.

By Claim~\ref{ouahed}, this implies that the writer $w$ is not correct.
So, by the assumption~(1) on process failures,
	no reader is {\ml}.
By Claim~\ref{tenen}, $q$ does not block inside \textsc{Thread~2}.
By the code of \textsc{Thread~2}, either $q$ returns a value
	in line~\ref{qPrt2} or~\ref{qPrtlw} of \textsc{Thread~2},
	or $q$ reaches line~\ref{qfrompp} and evaluates the condition in this line.
In the latter case, by Claim~\ref{tenen}, the condition in line~\ref{qfrompp} evaluates to \textsc{true},
	and so $q$ returns a value in line~\ref{qPrt3} of  \textsc{Thread~2}.
So in all cases, $q$ exits $\ru_q()$ by returning a value in \textsc{Thread~2} --- a contradiction to assumption~(2).
\end{proof}

By Theorems~\ref{Theo-linearizability} and~\ref{Theo-wait-freedom}, we have that if
	the implementations of $R_{wQ}$ and $R_{pQ}$ are valid (Assumption~\ref{a0}), then the implementation $\I{n}$ is also valid. So we have:

\begin{restatable}{theorem}{Iisvalid}
\label{I-is-valid}
For all $n \ge2$,
	$\I{n}$ is a valid implementation of a $\Reg{1}{n}$ from
	implemented $\Reg{1}{n-1}$s and atomic $\Reg{1}{1}$s,
	provided that the implementations of the $\Reg{1}{n-1}$s that it uses
	(namely, $R_{wQ}$ and $R_{pQ}$) are also valid.
\end{restatable}

\subsection{Implementing a \texorpdfstring{\Reg{1}{n}}{} from atomic \texorpdfstring{\Reg{1}{1}s}{}}\label{n-from-1}
	
We now prove that
	in a system with Byzantine process failures,
	there is an implementation of a $\Reg{1}{n}$ from atomic $\Reg{1}{1}$s
	that is linearizable (always) and 
	satisfies the Termination property
	if the writer
	\emph{or} any number of readers, but \emph{not both}, can fail.
	This matches the impossibility result given by Theorem~\ref{Theo-Impossibility-Result} in Section~\ref{Impossibility-Result}.
	More precisely:

\begin{theorem}\label{Theo-Main-Possibility}
For all $n \ge 2$, in a system of $n+1$ processes that are subject to Byzantine failures,
	there is an implementation ${\cal{I}}_n$ of a $\Reg{1}{n}$ from atomic \mbox{$\Reg{1}{1}$s}
	such~that:
\begin{compactitem}
\item \label{regpropertieslX}  \emph{${\cal{I}}_n$ is linearizable.}
\item \label{regpropertieswX} \emph{if the writer is correct or no reader is {\ml}, ${\cal{I}}_n$ satisfies the Termination property.}
\end{compactitem}
\end{theorem} 

\begin{proof}
We must show that for all $n \ge 2$, there is a \emph{valid} implementation ${\cal{I}}_n$ of a $\Reg{1}{n}$ from atomic \mbox{$\Reg{1}{1}$s}.
We prove this by induction on $n$.

\noindent
\textsc{Base Case.} Let $n=2$.
Consider the implementation $\I{2}$ of Theorem~\ref{I-is-valid}.
Since $n=2$, the set $Q$ now contains only one process.
So each register $R_{wQ}$ and $R_{pQ}$ in $\I{2}$
	can be implemented directly by an \emph{atomic} $\Reg{1}{1}$.
Since these are \emph{valid} implementations of $R_{wQ}$ and $R_{pQ}$,
	there is a valid implementation ${\cal{I}}_2$ of a $\Reg{1}{2}$ from
	atomic $\Reg{1}{1}$s.

\noindent
\textsc{Induction Step.} Let $n >2$.
Suppose there is a valid implementation ${\cal{I}}_{n-1}$
	of a $\Reg{1}{n-1}$ that uses only atomic $\Reg{1}{1}$s.
We must show there is a valid implementation ${\cal{I}}_{n}$
	of a $\Reg{1}{n}$ that uses only atomic $\Reg{1}{1}$s.

\smallskip
\noindent
By Theorem~\ref{I-is-valid},
	there is an implementation $\I{n}$ of a $\Reg{1}{n}$
	that uses:
\begin{compactenum}
\item two implemented $\Reg{1}{n-1}$s (namely, registers $R_{wQ}$ and $R_{pQ}$),  and
\item some atomic $\Reg{1}{1}$s
\end{compactenum}

such that $\I{n}$ is valid if the implementations of the $\Reg{1}{n-1}$s $R_{wQ}$ and $R_{pQ}$ are~valid.
Implement $R_{wQ}$ and $R_{pQ}$ in $\I{n}$ using the valid implementation ${\cal{I}}_{n-1}$ 
  	(${\cal{I}}_{n-1}$ exists by our induction hypothesis).
This gives an implementation ${\cal{I}}_{n}$ of a $\Reg{1}{n}$
	that uses only atomic $\Reg{1}{1}$s (because ${\cal{I}}_{n-1}$ uses only atomic $\Reg{1}{1}$s).
Since the implementations of $R_{wQ}$ and $R_{pQ}$ are valid,
	${\cal{I}}_{n}$ is valid.
\end{proof}

For the special case that $n=2$ (i.e., there are only two readers), there is a simple implementation $\Iup{2}$
	that is stronger than the implementation $\Iu{2}$ given by Theorem~\ref{Theo-Main-Possibility}:
	in contrast to~${\cal{I}}_2$, 
	which satisfies Termination if the writer is correct or no reader is {\ml},
	$\Iup{2}$ satisfies Termination \emph{unconditionally};
	in other words $\Iup{2}$ is \emph{wait-free}, and in fact it is \emph{bounded} wait-free (Definition~\ref{bwfdef}).

\begin{restatable}{theorem}{Stwofromone}
\label{S2from1}
The implementation $\Iup{2}$ (given by Algorithm~\ref{1w2r} in Appendix~\ref{whatever})
	is a bounded wait-free linearizable implementation
	of a $\Reg{1}{2}$ from
	atomic $\Reg{1}{1}$s.
\end{restatable}

%
% !TEX root = main.tex

% !TEX root = main.tex

\begin{algorithm}[!t]
\caption{Implementation {\Is} of a $\Reg{1}{n}$
	writable by process~$w$ and readable by a set $P$ of $n$ processes in a system with unforgeable signatures.
	{\Is} uses atomic $\Reg{1}{1}$s.}
\label{1wnrsig}

\vspace{-3mm}
{\footnotesize
\begin{multicols}{2}

\textsc{Atomic Registers}
\vspace{.7mm}

~~~~For all processes $i$ and $j $ in $\{w\} \cup P$:

\hspace{1cm}
$R_{ij}$: atomic $\Reg{1}{1}$;
	initially $\langle 0,u_0 \rangle_w$.
\vspace{2mm}

\columnbreak

\textsc{Local variables}
\vspace{.7mm}

~~~~$c$: variable of $w$; initially $0$

~~~~$\Vals$: variable of each $p$ in $P$; initially $\emptyset$.
\end{multicols}
\vspace{-3mm}
\hrule
\vspace{-2mm}
\begin{algorithmic}[1]
\begin{multicols}{2}
\Statex
\textsc{\Wu($u$):}  ~~~~~~~~$\triangleright$ executed by the writer $w$
\Indent
\State $c \gets c+1$  
\State \label{sign} \textbf{call} \textsc{\wu($\langle c,u \rangle_w$)} 
\State \Return $\Done$
\EndIndent

\columnbreak

\noindent
\textsc{$\Ru$():}\Comment{executed by any reader $p$ in $P$}
\Indent
\State \textbf{call} $\ru$()
\State \textbf{if} this call returns some tuple $\langle k,u \rangle_w$ \textbf{then}
\Indent
\State $\Return$ $u$ 
\EndIndent
\State \textbf{else} \Return $\bot$
\EndIndent
\end{multicols}
\vspace{-1mm}
\hrule
\vspace{4mm}
\Statex

\textsc{\wu($\langle k,u \rangle_w$):} \Comment{executed by $w$ to do its $k$-th write}
\Indent
\State \textbf{for every} process $i \in P$ \textbf{do}
\Indent
\State \label{wi} $R_{wi} \gets \langle k,u \rangle_w$ \Comment{$\langle k,u \rangle$ signed by $w$}
\EndIndent
\State \Return $\Done$
\EndIndent

\vspace*{4mm}

\noindent
\textsc{$\ru$():}\Comment{executed by any reader $p$ in $P$}
\Indent
\State $\Vals \gets \emptyset$
\State \textbf{for every} process $i \in \{w\} \cup P$ \textbf{do}
\Indent
\State \label{readall} \textbf{if} $R_{ip}=\langle \ell,\val \rangle_w$ for some $\langle \ell,\val \rangle$ validly signed by $w$ \textbf{then}
\Indent
\State \label{val}$\Vals \gets \Vals \cup \{\langle \ell,\val \rangle_w\}$
\EndIndent
\EndIndent
\State \label{max} $\langle k,u\rangle_w \gets \mbox{tuple $\langle \ell,\val \rangle_w$ with maximum sequence number $\ell$ in $\Vals$}$
\State \textbf{for every} process $i \in P$ \textbf{do}
\Indent
\State \label{pi}$R_{pi} \gets \langle k,u \rangle_w$
\EndIndent
\State \Return $\langle k,u \rangle_w$
\EndIndent

\end{algorithmic}
}
\end{algorithm} 

\vspace{-3mm}

\section{Register implementation for systems with digital signatures}\label{withsignatures}

We now consider systems where processes are subject to Byzantine failures, but they can use unforgeable signatures.
Algorithm~\ref{1wnrsig} gives a wait-free linearizable
	implementation {\Is}
	of a $\Reg{1}{n}$ that is writable by process $w$
	and readable by a set $P$ of $n$ processes.
This implementation tolerates any combination and number of faulty processes, and it~works as~follows.

To write $u$, the writer $w$ calls $\Wu(u)$.
	In this procedure, $w$ adds a sequence number $k$ to form a tuple~$\tp{k,u}$,
	then it signs $\tp{k,u}$ with $w$ (the signed tuple is denoted $\tp{k,u}_w$),
	and finally it executes the lower-level write procedure $\wu(\tp{k,u}_w)$.
It is worth noting that in this algorithm, the writer $w$ is the \emph{only} process that signs values.

To read a value, a reader $p \in P$ calls $\Ru()$.
	This procedure calls a lower-level read procedure $\ru()$ that reads \emph{signed} tuples written by the $\wu()$ procedure.
If $\ru()$ returns a tuple of the form $\tp{k,u}_w$ for some $k$ and $u$,
	$\Ru()$ strips the signature $w$ and sequence number~$k$
	from the tuple, and then it returns the value $u$ as the value read
	(otherwise $\Ru()$ returns~$\bot$ to indicate a read failure).

The lower-level procedures $\wu()$ and $\ru()$ work as follows (in these procedures, $R_{ij}$ denotes an atomic $\Reg{1}{1}$ that is writable by process $i$ and readable by process $j$):

$\bullet$\hspace{1mm} To execute $\wu(\tp{k,u}_w)$,
	the writer $w$ simply writes $\tp{k,u}_w$ in
	$R_{wi}$ for every process $i \in P$.

$\bullet$\hspace{1mm} To execute $\ru()$,
	a reader $p \in P$ first reads the $\Reg{1}{1}$
	$R_{ip}$ fo every process $i\in \{w\}\cup P$ to form
	the set $\Vals$ of all the tuples validly signed by $w$ that it reads.
Then $p$ selects the tuple $\tp{k,u}_w$ with the maximum sequence number $k$ in $\Vals$,
	and returns this tuple; but before doing so
	$p$ writes $\tp{k,u}_w$ into the $\Reg{1}{1}$ $R_{pi}$ for every reader $i \in P$
	to notify them that it read $\tp{k,u}_w$.

% !TEX root = main.tex

We now prove that the implementation {\Is} given by Algorithm~\ref{1wnrsig} is wait-free and linearizable.

\textbf{Wait-freedom.} This is trivial: the code of Algorithm~\ref{1wnrsig} does not contain any loop or wait statement, so 
	every call to the \Wu() and \Ru() procedures by any correct process terminates with a return value in a bounded number of its own steps.
	Thus:

\begin{observation}\label{sigwaitfree}
For all $n\ge 2$, the implementation {\Is} is bounded wait-free.
\end{observation}

\textbf{Linearizability.}  To prove that the implementation {\Is} given by Algorithm~\ref{1wnrsig} is linearizable,
	we must show that \emph{if the writer $w$ of the register implemented by {\Is} is not {\ml}}
	then Properties 1 and 2 of Definition~\ref{LinearizableByz} hold.
So for the rest of this section we assume that the writer $w$ is not {\ml}, and
	henceforth we omit to repeat this assumption in our observations, lemmas, and theorem.

As in the previous section:

\begin{compactitem}
\item $u_0$ is the initial value of the register that Algorithm~\ref{1wnrsig} implements.
\item For $k\ge1$, $u_k$ denotes the $k$-th value written by $w$ using
	the procedure
	$\Wu()$.
	More precisely, if $w$ calls $\Wu()$ with a value $u$ and this is its $k$-th call of $\Wu()$,
	then~$u_k$~is~$u$.
\item $v_0$ is $\tp{0,u_0}_w$.

\item For $k \ge1$, $v_k$ denotes the $k$-th value written by $w$ using
	the procedure
	$\wu()$. 
\end{compactitem}

We first prove the linearizability Properties~1 and~2 of Definition~\ref{LinearizableByz}
are satisfied by the writes and reads of the lower-level procedures $\wu()$ and $\ru()$
	(Lemmas~\ref{sigProp0} and~\ref{sigProp1}),
 	and then prove they are also satisfied by the writes and reads of the high-level procedures $\Wu()$ and $\Ru()$ (Lemmas~\ref{sigProp2r} and~\ref{sigProp2}).

\begin{observation}\label{vks}
For all $k\ge 0$, $v_k = \tp{k,u_k}_w$.
\end{observation}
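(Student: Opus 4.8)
The plan is to read the claim essentially straight off the code of the high-level write procedure $\Wu()$ in Algorithm~\ref{1wnrsig}, using the standing assumption of this subsection that the writer $w$ is not {\ml} (so $w$ faithfully executes its code and, in particular, maintains its local counter $c$ correctly).

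First I would recall that $c$ is initialized to $0$ and that the only statement modifying $c$ is ``$c \gets c+1$'', executed once at the very start of each $\Wu()$ invocation. Since $w$ is not {\ml} and is a single (sequential) process, its successive $\Wu()$ invocations are performed one after another; hence an easy induction on $k \ge 1$ shows that immediately after the increment performed by the $k$-th invocation of $\Wu()$, the variable $c$ equals $k$. By definition $u_k$ is the argument of this $k$-th invocation, so right after the increment $w$ executes ``$\textbf{call } \wu(\tp{k, u_k}_w)$''. Because each $\Wu()$ invocation calls $\wu()$ exactly once, the $k$-th value that $w$ writes using the procedure $\wu()$ is exactly $\tp{k,u_k}_w$; that is, $v_k = \tp{k,u_k}_w$ for all $k \ge 1$. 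The case $k = 0$ is immediate, since $v_0$ and $u_0$ were defined precisely so that $v_0 = \tp{0,u_0}_w$.

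There is no real obstacle here: the only point requiring any care is that the argument relies on $w$ not being {\ml}, so that the counter $c$ is incremented exactly as prescribed and the tuples handed to $\wu()$ carry the correct sequence numbers. This is precisely the hypothesis under which linearizability is being established in this subsection, so it is available for free.
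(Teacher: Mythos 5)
Your proof is correct and matches the paper's reasoning: the paper states this as an Observation with no explicit proof, treating it as immediate from the code of $\Wu()$ (the counter $c$ starts at $0$, is incremented once per invocation, and each $\Wu()$ calls $\wu()$ exactly once), together with the standing assumption that $w$ is not {\ml} and the definitional choice $v_0 = \tp{0,u_0}_w$. Your write-up simply makes that implicit argument explicit.
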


\begin{observation}\label{snwrites}
Let $\wu(v)$ be any write operation by $w$.
Then there is a $k \ge 1$ such that $v=v_k$.
\end{observation}

Note that a correct reader enters a value $v$ into its set $\Vals$
	only if $v$ is a tuple $\langle \ell,\val \rangle$ validly signed by $w$, 
	i.e., $\langle \ell,\val \rangle_w$.
Since $w$ is not {\ml}, and signatures are unforgeable, it must be that $\langle \ell,\val \rangle_w$ is $v_{\ell}$.
So we have:

\begin{observation}\label{tuplesonlyvk}
For every non-{\ml} reader $p \in P$, if $v \in \Vals$
	then $v= v_k$ for some $k\ge 0$.
\end{observation}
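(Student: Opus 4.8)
The plan is to derive the claim directly from the text of the procedure $\ru()$ in Algorithm~\ref{1wnrsig}, using only two ingredients: the standing assumption that the writer $w$ is not {\ml}, and the unforgeability of signatures. First I would note that a non-{\ml} reader $p$ executes $\ru()$ faithfully, and the only place where $\ru()$ inserts an element into $\Vals$ is line~\ref{val}, which is guarded by the test in line~\ref{readall}: an element $v$ is added to $\Vals$ only if $v$ has the form $\langle \ell,\val \rangle_w$, i.e., it is a tuple carrying a \emph{valid} signature of $w$. Hence it suffices to prove the following: every tuple appearing anywhere in the system that bears a valid signature of $w$ equals $v_k$ for some $k \ge 0$.

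To prove this, I would show that the set of tuples validly signed by $w$ is precisely $\{v_0,v_1,v_2,\dots\}$. The tuple $v_0 = \tp{0,u_0}_w$ is, by definition, the initial value loaded into every register $R_{ij}$. For $k \ge 1$, since $w$ is not {\ml} it follows the code of $\Wu()$: on its $k$-th invocation it sets its counter $c$ to $k$ and then, in line~\ref{sign}, calls $\wu$ with the signed tuple $\tp{k,u_k}_w$, which by Observation~\ref{vks} is exactly $v_k$. Thus the only tuples that $w$ ever signs are $v_1,v_2,\dots$. By the unforgeability of signatures, no other process (even a {\ml} one) can produce a tuple carrying a valid signature of $w$; consequently, every tuple with a valid $w$-signature that is ever written into a register --- whether by $w$ in line~\ref{wi}, or copied by some reader in line~\ref{pi} --- is one of $v_0,v_1,v_2,\dots$. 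Combining this with the reduction of the previous paragraph gives the claim: if $v \in \Vals$ for a non-{\ml} reader $p$, then $v = v_k$ for some $k \ge 0$.

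I do not expect a real obstacle here; the argument is essentially immediate from unforgeability and the faithful behaviour of the non-{\ml} writer. The one point worth stating explicitly, rather than passing over silently, is that the tuples a reader writes back into the $R_{pr}$ registers in line~\ref{pi} are exactly the tuple it selected in line~\ref{max} from its own set $\Vals$, which was already filtered through the validity test of line~\ref{readall}; so readers never introduce \emph{new} validly signed tuples, and the supply of valid $w$-signed tuples never grows beyond $\{v_0,v_1,v_2,\dots\}$. This makes a time-induction unnecessary, but the observation is what keeps the reduction honest.
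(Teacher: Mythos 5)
Your proposal is correct and follows essentially the same route as the paper: a non-{\ml} reader only adds to $\Vals$ tuples that pass the signature check of line~\ref{readall}, and since the non-{\ml} writer $w$ only ever signs the tuples $v_1,v_2,\dots$ (with $v_0$ being the signed initial value) and signatures are unforgeable, any such tuple must be some $v_k$. The extra remark that readers merely re-copy already-validated tuples in line~\ref{pi} is a harmless elaboration of what the paper leaves implicit.
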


\begin{observation}\label{ru}
For every non-{\ml} reader $p \in P$,
	if $\ru(v)$ is an operation by $p$,
	then $v=v_k$ for some $k\ge 0$.
Furthermore, if $k>0$ then $w$ invokes $\wu(v_k)$ before $\ru(v_k)$ returns.
So the operation $\wu(v_k)$ precedes $\ru(v_k)$ or is concurrent with $\ru(v_k)$.
\end{observation}

From Observation~\ref{tuplesonlyvk} and lines~\ref{max}-\ref{pi} of the procedure $\ru()$,
	if a non-{\ml} reader $p \in P$ writes a value $v$ in $R_{pi}$,
	then $v=v_k$ for some $k\ge 0$.
Furthermore, By Observation~\ref{snwrites}, if the non-{\ml} writer $w$
	writes a value $v$ in $R_{wi}$,
	then $v=v_k$ for some $k\ge 1$.
So:

\begin{observation}\label{onlyvk}
Suppose a process $i\in \{w\}\cup P$ is not {\ml}.
For every process $j\in \{w\}\cup P$,
	if $i$ writes $v$ in $R_{ij}$,
	then $v=v_k$ for some $k\ge 0$.
\end{observation}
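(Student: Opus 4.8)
The plan is to prove Observation~\ref{onlyvk} by a direct inspection of Algorithm~\ref{1wnrsig}, splitting on the two possibilities for the non-{\ml} process $i$: either $i$ is the writer $w$, or $i$ is one of the readers in $P$.

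First I would handle the case $i=w$. In the code, the only statement in which $w$ writes into a register of the form $R_{wj}$ (with $j\in\{w\}\cup P$) is line~\ref{wi}, inside $\wu(\tp{k,u}_w)$. Since $w$ is not {\ml}, it executes $\wu()$ only as called from line~\ref{sign} of $\Wu()$; on its $k$-th write it calls $\wu(\tp{k,u_k}_w)$, so the value it writes into $R_{wj}$ in line~\ref{wi} is $\tp{k,u_k}_w$, which equals $v_k$ by Observation~\ref{vks} (with $k\ge1$). Hence the claim holds in this case.

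Next I would handle the case $i=p$ for some non-{\ml} reader $p\in P$. The only statement in which $p$ writes into a register $R_{pj}$ is line~\ref{pi}, inside $\ru()$, and the value written there is the tuple $\tp{k,u}_w$ selected in line~\ref{max} as the element of $\Vals$ with maximum sequence number. Since $p$ is not {\ml}, it only adds to $\Vals$ (in lines~\ref{readall}--\ref{val}) tuples bearing a valid signature of $w$; as $w$ is not {\ml} and signatures are unforgeable, every such tuple equals $v_k$ for some $k\ge0$ --- this is exactly Observation~\ref{tuplesonlyvk}. Thus the tuple chosen in line~\ref{max}, and hence the value written in line~\ref{pi}, is $v_k$ for some $k\ge0$.

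Combining the two cases gives the statement. I do not anticipate any real obstacle; the argument is essentially a code walk together with Observation~\ref{tuplesonlyvk}. The only point needing a moment of care --- rather than a purely syntactic reading of the code --- is that line~\ref{max} must be well-defined, i.e., $\Vals\ne\emptyset$ when $p$ reaches it. This is immediate because $R_{wp}$ is initialised to $\langle 0,u_0\rangle_w=v_0$ and is written only by $w$ (which is not {\ml}), so the read of $R_{wp}$ performed in line~\ref{readall} always contributes at least $v_0$ to $\Vals$.
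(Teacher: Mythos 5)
Your proof is correct and follows essentially the same route as the paper, which justifies this observation by exactly the same case split: the writer only writes via line~\ref{wi} of $\wu(v_k)$, and a non-malicious reader only writes via line~\ref{pi} the maximum tuple of $\Vals$, whose elements are all of the form $v_k$ by Observation~\ref{tuplesonlyvk}. Your added remark that $\Vals\ne\emptyset$ at line~\ref{max} is a small extra point of care the paper leaves implicit.
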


\begin{observation}\label{onlyvkr}
Suppose processes $i \in \{w\}\cup P$ and $j \in \{w\}\cup P$ are not {\ml}.
If $j$ reads $R_{ij}=v$,
	then $v=v_k$ for some $k\ge 0$.
\end{observation}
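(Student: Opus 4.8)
The plan is to derive Observation~\ref{onlyvkr} directly from the atomicity of the base register $R_{ij}$ together with Observation~\ref{onlyvk}. First I would recall that $R_{ij}$ is an atomic $\Reg{1}{1}$ whose \emph{only} writer is process $i$, and whose declared initial value is $\langle 0,u_0 \rangle_w$, which by the definition of $v_0$ (and Observation~\ref{vks}) is exactly $v_0$. Hence, by atomicity, whenever $j$ reads $R_{ij}=v$, the returned value $v$ is either the initial value $v_0$ or the value written by the most recent write into $R_{ij}$ performed by $i$ prior to that read.

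In the first case we are immediately done, since $v_0 = v_k$ for $k=0$. In the second case, $i$ has written $v$ into $R_{ij}$; since $i$ is not {\ml}, Observation~\ref{onlyvk} applies and yields $v = v_k$ for some $k\ge 0$. Combining the two cases, $v=v_k$ for some $k\ge 0$, as claimed.

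The only point requiring a bit of care is the bookkeeping for the initial value --- making explicit that the declared initial content $\langle 0,u_0 \rangle_w$ of $R_{ij}$ coincides with $v_0$ --- so that the ``no write has yet occurred'' case is subsumed under the stated conclusion. Beyond that there is no real obstacle: the observation is an immediate corollary of Observation~\ref{onlyvk} and the atomicity of $R_{ij}$, and the hypothesis that $j$ is not {\ml} is not even needed (it is kept only for uniformity with the surrounding observations).
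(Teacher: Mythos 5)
Your proposal is correct and matches the paper's (implicit) justification: the paper states this as an unproved observation precisely because it follows immediately from the atomicity of $R_{ij}$, whose sole writer is $i$ and whose initial value is $v_0$, combined with Observation~\ref{onlyvk}. Your bookkeeping of the initial-value case and your remark that the hypothesis on $j$ is not strictly needed are both accurate.
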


\begin{observation}\label{monow}
      Suppose a process $i\in \{w\}\cup P$ is not {\ml}.
For every process $j\in \{w\}\cup P$,
	if $i$ writes $v_k$ in $R_{ij}$ before $i$ writes $v_{k'}$ in $R_{ij}$,
	then \mbox{$k \le k'$}.
\end{observation}

\begin{observation}\label{monowwr}
Suppose processes $i \in \{w\}\cup P$ and $j\in \{w\}\cup P$ are not {\ml}.
If $i$ writes $v_k$ in $R_{ij}$ before $j$ reads $R_{ij}=v_{k'}$,
	then \mbox{$k \le k'$}.
\end{observation}

\begin{lemma}\label{wuru}
Suppose a reader $p\in P$ is not {\ml}.
If a $\wu(v_k)$ operation precedes a $\ru(v_{k'})$ operation by $p$,
	then $k'\ge k$.
\end{lemma}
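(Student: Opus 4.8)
The plan is to prove Lemma~\ref{wuru} by reasoning about \emph{where} the signed tuple $v_k$ must have been written before $p$ can observe it, combined with the already-established monotonicity observations (Observations~\ref{monow} and~\ref{monowwr}). Recall $v_k = \tp{k,u_k}_w$ by Observation~\ref{vks}, and by hypothesis the writer $w$ is not {\ml} (the standing assumption of this section). First I would unpack what the $\ru(v_{k'})$ operation by the non-{\ml} reader $p$ actually does: in lines~\ref{readall}--\ref{val} it collects into $\Vals$ every validly signed tuple it reads from the registers $R_{ip}$ for $i\in\{w\}\cup P$, and in line~\ref{max} it picks the tuple with \emph{maximum} sequence number in $\Vals$; that maximum is $v_{k'}$. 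Since $w$ is not {\ml} and the $\wu(v_k)$ operation has \emph{completed} before $\ru(v_{k'})$ even starts, $w$ has already written $v_k$ into $R_{wp}$ in line~\ref{wi} of $\wu(v_k)$, and this write precedes the moment $p$ reads $R_{wp}$ in line~\ref{readall} of $\ru(v_{k'})$.

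The key step is then: by Observation~\ref{monowwr} (applied with $i=w$, $j=p$), since $w$ writes $v_k$ in $R_{wp}$ before $p$ reads $R_{wp}=v_\ell$ in line~\ref{readall} of $\ru(v_{k'})$, we get $\ell \ge k$. Moreover $v_\ell = \tp{\ell,u_\ell}_w$ is validly signed by $w$, so $p$ inserts $v_\ell$ into $\Vals$ in line~\ref{val}. Hence $\Vals$ contains a tuple with sequence number $\ell \ge k$ at the time line~\ref{max} executes. Since $\ru(v_{k'})$ returns the tuple of \emph{maximum} sequence number in $\Vals$, namely $v_{k'}$ with sequence number $k'$, it follows that $k' \ge \ell \ge k$, which is exactly what we want to show.

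I expect the main (and essentially only) obstacle to be a bookkeeping subtlety: one must be careful that the read of $R_{wp}$ in line~\ref{readall} genuinely occurs \emph{after} the completion of the $\wu(v_k)$ operation — this is where the word ``precedes'' in the hypothesis is used, and it is what licenses the application of Observation~\ref{monowwr}. A secondary point to handle cleanly is the edge case where $k=0$: then the claim $k'\ge 0$ is trivial since by Observation~\ref{ru} any $\ru(v_{k'})$ has $k'\ge 0$, so one may assume $k\ge 1$ without loss of generality (though the argument above does not actually require this split). The rest is a direct chain of the monotonicity observations, so I would present it compactly as a short paragraph-style proof rather than a multi-case analysis.
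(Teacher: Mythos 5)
Your proof is correct and follows essentially the same route as the paper's: both argue that the completed $\wu(v_k)$ has written $v_k$ into $R_{wp}$ before $p$'s read in line~\ref{readall}, invoke Observations~\ref{onlyvkr} and~\ref{monowwr} to conclude $p$ reads some $v_\ell$ with $\ell\ge k$ and adds it to $\Vals$, and then use the maximum-selection in line~\ref{max} to get $k'\ge\ell\ge k$. The only cosmetic difference is your handling of $k=0$; the paper simply notes $k\ge 1$ since $\wu(v_0)$ operations do not occur.
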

\begin{proof}
Suppose a $\wu(v_k)$ operation precedes a $\ru(v_{k'})$ operation by some non-{\ml}~reader~$p$.
So $w$ writes $v_k$ into $R_{wp}$ in line~\ref{wi} of $\wu(v_k)$ before $p$ reads $R_{wp}$ in line~\ref{readall} of $\ru(v_{k'})$.
Note that $k\ge 1$.
By Observations~\ref{onlyvkr} and~\ref{monowwr},
	$p$ reads $R_{wp} = v_{k''}$ for some $k'' \ge k$ in line~\ref{readall} of $\ru(v_{k'})$.
	 Then $p$ adds $v_{k''}$ to $\Vals$ in line~\ref{val} of $\ru(v_{k'})$.
By line~\ref{max} of $\ru(v_{k'})$,
	$v_{k'}$ is the tuple with the maximum sequence number in $\Vals$ and so $k' \ge k''$.
Since $k'' \ge k$,
	$k' \ge k$.
\end{proof}

\begin{lemma}\label{sigProp0}
	If $\ru(v)$ is an operation by a non-{\ml} reader $p\in P$
	then
\begin{compactitem}	
	\item there is a $\wu(v)$ operation that immediately precedes $\ru(v)$ or is concurrent with $\ru(v)$, or	
	\item  $v=v_0$ and no $\wu(-)$ operation precedes $\ru(v)$.
\end{compactitem}
\end{lemma}

\begin{proof}
Suppose $p\in P$ is not {\ml}.
Let $\ru(v)$ be any read operation by $p$.
By Observation~\ref{ru}, $v = v_k$ for some $k \ge 0$.
There are two cases: 

Case $k=0$.
Suppose, for contradiction, 
			that there is a $\wu(v)$ operation that precedes $\ru(v_0)$.
			By Observation~\ref{snwrites},
			$v = v_i$ for some $i \ge 1$.
Since $\wu(v_i)$ precedes~$\ru(v_0)$,
	by Lemma~\ref{wuru},
	$i \le 0$ --- a contradiction.
So no $\wu(-)$ operation precedes $\ru(v_0)$.

Case $k>0$.
By Observation~\ref{ru},
the operation $\wu(v_k)$ precedes $\ru(v_k)$ or is concurrent with~$\ru(v_k)$.
We now show that if $\wu(v_k)$ precedes~$\ru(v_k)$,
	then $\wu(v_k)$ immediately precedes~$\ru(v_k)$.
Suppose, for contradiction,
	that $\wu(v_k)$ precedes~$\ru(v_k)$ but does not immediately precede~$\ru(v_k)$.
Then there is a $\wu(v_i)$ operation that immediately precedes~$\ru(v_k)$.
Clearly, the $\wu(v_k)$ operation precedes the $\wu(v_i)$ operation, and so $i > k$.
Since $\wu(v_i)$ precedes~$\ru(v_k)$,
	by Lemma~\ref{wuru},
		$i \le k$ --- a contradiction.
		Therefore the $\wu(v_k)$ operation immediately precedes $\ru(v_k)$ or is concurrent with $\ru(v_k)$.
\end{proof}

By Observations~\ref{vks} and~\ref{ru}, and the code of procedure $\Ru()$:
\begin{observation}\label{Ru}
If $\Ru(u)$ is an operation by a non-{\ml} process $p \in P$,
then $u=u_k$ for some $k \ge 0$.
\end{observation}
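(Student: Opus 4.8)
The plan is to unwind the notation and chain the two cited observations with a direct reading of the code of $\Ru()$ in Algorithm~\ref{1wnrsig}. Recall that, by the notational convention in force, writing $\Ru(u)$ for an operation of $p$ means precisely that this execution of $\Ru()$ returns $u$; and recall that throughout this section the writer $w$ is assumed to be non-{\ml}. First I would invoke Observation~\ref{ru}: since $p$ is non-{\ml}, every $\ru()$ operation it completes returns some $v_k$ with $k \ge 0$ --- in particular it never returns $\bot$ in this setting.

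Next I would trace the body of $\Ru()$: a reader $p$ executing $\Ru()$ first calls $\ru()$; if that call returns a tuple of the form $\langle k,u \rangle_w$ it returns $u$, and otherwise it returns $\bot$. For the operation $\Ru(u)$ by non-{\ml} $p$, the nested call to $\ru()$ returns $v_k$ for some $k \ge 0$ by the previous step, and $v_k = \langle k,u_k \rangle_w$ by Observation~\ref{vks}. Hence the \textbf{if}-guard in $\Ru()$ holds with the tuple $\langle k,u_k \rangle_w$, so $\Ru()$ strips off the signature and sequence number and returns $u_k$. Therefore $u = u_k$ for some $k \ge 0$, which is the claim.

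I expect no real obstacle here: the statement is a bookkeeping consequence of Observations~\ref{vks} and~\ref{ru} together with the structure of the code. The only point requiring a moment's care is ruling out the \textbf{else} branch (a return of $\bot$) for a non-{\ml} reader when $w$ is non-{\ml}, and this is exactly what Observation~\ref{ru} guarantees, since it asserts that the inner $\ru()$ call always returns a well-formed tuple $v_k$ rather than a malformed value.
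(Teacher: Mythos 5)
Your proposal is correct and matches the paper exactly: the paper states this observation with no separate proof beyond the preface ``By Observations~\ref{vks} and~\ref{ru}, and the code of procedure $\Ru()$,'' which is precisely the chain you unwind. Your extra care in ruling out the $\bot$ branch via Observation~\ref{ru} is the right (and only) point of substance.
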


\begin{observation}\label{Ruru}
If $\Ru(u_k)$ is an operation by a non-{\ml} process $p \in P$,
then $p$ invokes and completes a $\ru(v_k)$ operation in $\Ru(u_k)$.
\end{observation}

\begin{observation}\label{Wuwu}
If $\Wu(u_k)$ is a completed operation by $w$,
	then $w$ invokes and completes a $\wu(v_k)$ operation in $\Wu(u_k)$.
\end{observation}

\begin{lemma}\label{sigProp1} \emph{[\textbf{Property 1}: Reading a ``current'' value]}

~
If $\Ru(u)$ is an operation by a non-{\ml} process $p \in P$
	then:
\begin{compactitem}	
	\item there is a $\Wu(u)$ operation that immediately precedes $\Ru(u)$ or is concurrent with $\Ru(u)$, or	

	\item $u=u_0$ and no $\Wu(-)$ operation precedes $\Ru(u)$.

\end{compactitem}
\end{lemma}

\begin{proof}
Let $\Ru(u)$ be any read operation by a non-{\ml} process $p\in P$.
By Observation~\ref{Ru}, $u = u_k$ for some $k \ge 0$.
There are two cases: 

Case $k=0$.
	Suppose, for contradiction, that 
	a $\Wu(u_i)$ operation precedes $\Ru(u_0)$.
	Note that $i \ge 1$.
	By Observations~\ref{Ruru} and~\ref{Wuwu},
	a $\wu(v_i)$ operation precedes a $\ru(v_0)$ operation.
	Since process $p$ is not {\ml},
			by Lemma~\ref{sigProp0},
		 there is no $\wu(-)$ operation that precedes $\ru(v_0)$ --- a contradiction.

Case $k>0$.
	By Observation~\ref{Ruru},
	$p$ invokes and completes a $\ru(v_k)$ operation in $\Ru(u_k)$.
	Since $k>0$,
	$v_k\ne v_0$.
	So,
	by Lemma~\ref{sigProp0},
	there is a $\wu(v_k)$ operation that immediately precedes $\ru(v_k)$ or is concurrent with $\ru(v_k)$.
	Let $\Wu(u_k)$ be the operation in which $w$ invokes the $\wu(v_k)$ operation.
	Since $\wu(v_k)$ operation immediately precedes $\ru(v_k)$ or is concurrent with $\ru(v_k)$,
	the $\Wu(u_k)$ operation immediately precedes $\Ru(u_k)$ or is concurrent with $\Ru(u_k)$.
\end{proof}

\begin{lemma}\label{sigProp2r}
Let $\ru(v_k)$ and $\ru(v_{k'})$ be any read operations by non-{\ml} processes $p$ and $p'$ in $P$, respectively.
If $\ru(v_k)$ precedes $\ru(v_{k'})$, then \mbox{$k \le k'$}.
\end{lemma}

\begin{proof}
Let $\ru(v_k)$ and $\ru(v_{k'})$ be any read operations by non-{\ml} processes $p$ and $p'$ in $P$,
	respectively.
Suppose that $\ru(v_k)$ precedes $\ru(v_{k'})$.
Then $p$ writes $v_k$ in $R_{pp'}$ in line~\ref{pi} of $\ru(v_k)$ before $p'$ reads $R_{pp'}$ in line~\ref{readall} of $\ru(v_{k'})$.
By Observations~\ref{onlyvkr} and~\ref{monowwr},
	$p'$ reads $R_{pp'}=v_{k''}$ for some $k''\ge k$ in line~\ref{readall} of $\ru(v_{k'})$.
	Then $p'$
	adds $v_{k''}$ to $\Vals$ in line~\ref{val} of $\ru(v_{k'})$.
By line~\ref{max} of $\ru(v_{k'})$,
	$v_{k'}$ is the tuple with the maximum sequence number in $\Vals$ and so
		$k'\ge k''$.
Since $k'' \ge k$, $k' \ge k$.  
\end{proof}

\begin{lemma}\label{sigProp2} \emph{[\textbf{Property 2}: No ``new-old'' inversion]}

Let $\Ru(u_k)$ and $\Ru(u_{k'})$ be any read operations by non-{\ml} processes in $P$.
If~$\Ru(u_k)$ precedes $\Ru(u_{k'})$ then \mbox{$k \le k'$}.
\end{lemma}

\begin{proof}
Let $\Ru(u_k)$ and $\Ru(u_{k'})$ be any read operations by non-{\ml} processes $p$ and $p'$ in $P$.
Suppose $\Ru(u_k)$ precedes $\Ru(u_{k'})$.
By Observation~\ref{Ruru},
 in the $\Ru(u_k)$ and $\Ru(u_{k'})$ operations,
	processes $p$ and $p'$ invoke and complete a $\ru(v_k)$ and $\ru(v_{k'})$ operation, respectively.
Since $\Ru(u_k)$ precedes $\Ru(u_{k'})$,
	$\ru(v_k)$ precedes $\ru(v_{k'})$.
By Lemma~\ref{sigProp2r},
	\mbox{$k \le k'$}.
\end{proof}

By Lemmas~\ref{sigProp1} and~\ref{sigProp2},
 the $\Wu(-)$ and $\Ru(-)$ operations of the register implementation {\Is}
 	satisfy the linearizability Properties~1 and~2 of Definition~\ref{LinearizableByz}.
This proves:

\begin{theorem}\label{siglinear}
For all $n\ge 2$, 
	the implementation {\Is} is linearizable.
\end{theorem}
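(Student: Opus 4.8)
The plan is to verify {\Is} directly against Definition~\ref{LinearizableByz}. That definition only constrains executions in which the writer $w$ is not malicious, so I would split on this: if $w$ is malicious there is nothing to prove and {\Is} is (trivially) linearizable, while if $w$ is not malicious I must establish Property~1 (every read by a non-malicious process returns a ``current'' value) and Property~2 (no ``new-old'' inversion among reads by non-malicious processes). These are precisely Lemmas~\ref{sigProp1} and~\ref{sigProp2}, so the theorem is their conjunction and follows in one line; wait-freedom, recorded separately in Observation~\ref{sigwaitfree}, plays no role here.

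The substance therefore lies in those two lemmas, which I would build from a chain of observations about which values can appear in which registers. The decisive simplification afforded by signatures is that a non-malicious reader only ever admits tuples of the form $v_k = \tp{k,u_k}_w$ into its set $\Vals$: any validly signed value it reads must, by unforgeability and the honesty of $w$, be one of the tuples $w$ actually produced (Observations~\ref{tuplesonlyvk},~\ref{ru},~\ref{onlyvk},~\ref{onlyvkr}). Combined with per-register monotonicity of the sequence numbers written and read by honest processes (Observations~\ref{monow},~\ref{monowwr}), this gives Lemma~\ref{wuru}: a $\ru(-)$ that starts after $\wu(v_k)$ completes reads $v_k$ from $R_{wp}$ and so returns some $v_{k'}$ with $k' \ge k$. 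Lemma~\ref{sigProp0}, and then Lemma~\ref{sigProp1} after lifting to the high-level procedures via Observations~\ref{Ruru} and~\ref{Wuwu}, follow by the standard ``cannot read ahead / cannot read too old'' argument. For Property~2 the crux is Lemma~\ref{sigProp2r}, which exploits the write-back in line~\ref{pi}: before returning $v_k$, reader $p$ writes $v_k$ into $R_{pp'}$ for every $p'$, so if $\ru(v_k)$ by $p$ precedes $\ru(v_{k'})$ by $p'$, then $p'$ reads $R_{pp'}=v_{k''}$ with $k'' \ge k$, adds it to $\Vals$, and hence returns the maximum $v_{k'}$ with $k' \ge k'' \ge k$; lifting to $\Ru(-)$ via Observation~\ref{Ruru} then gives Lemma~\ref{sigProp2}.

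The main obstacle I anticipate is not assembling the theorem --- that step is a one-liner --- but keeping the bookkeeping of the supporting observations exact: every ``process $x$ reads or writes $v_k$'' claim must be justified by unforgeability of $w$'s signature rather than merely assumed, and each monotonicity observation must be invoked only for registers whose writer (and, where relevant, reader) is non-malicious. Once that scaffolding is in place, Theorem~\ref{siglinear} is simply the conjunction of Lemmas~\ref{sigProp1} and~\ref{sigProp2}.
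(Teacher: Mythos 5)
Your proposal follows the paper's own proof exactly: the case split on whether $w$ is malicious, the chain of signature/unforgeability observations leading to Lemma~\ref{wuru} and Lemma~\ref{sigProp0}, the write-back argument via line~\ref{pi} for Lemma~\ref{sigProp2r}, and the lifting to $\Wu(-)$/$\Ru(-)$ via Observations~\ref{Ruru} and~\ref{Wuwu}, with the theorem obtained as the conjunction of Lemmas~\ref{sigProp1} and~\ref{sigProp2}. The approach and decomposition are the same as the paper's, and the argument is correct.
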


By Observation~\ref{sigwaitfree} and Theorem~\ref{siglinear}, 
we have the following:

\begin{restatable}{theorem}{sigmaintheorem}
\label{sigmaintheorem}
Consider a system where processes are subject to Byzantine failures and can use unforgeable signatures.
For every $n\ge 2$, 
	{\Is} is a bounded wait-free linearizable implementation of a $\Reg{1}{n}$
	from atomic $\Reg{1}{1}$s.
\end{restatable}

%
 % !TEX root = main.tex

\section{Register implementations with bounded termination}

The linearizable register implementations given in Section~\ref{withsignatures}
	(Algorithm~\ref{1wnrsig}, Theorem~\ref{sigmaintheorem})
	and in Appendix~\ref{whatever} (Algorithm~\ref{1w2r}, Theorem~\ref{S2from1})
	guarantee that
	every correct process completes every operation in a \emph{bounded} number of steps (regardless of which processes fail or how~they~fail).

In contrast, the linearizable register implementation
	given in Section~\ref{n-from-1}
	(Algorithm~\ref{1wnr}, Theorem~\ref{Theo-Main-Possibility})
	satisfies the Termination property, namely every correct process completes
	every operation in a \emph{finite} number of steps,
	and it does so under the assumption that
	the writer of the register is correct or no reader is {\ml}.
This raises the question of whether, under the same failure assumption,
	there is a register implementation that satisfies the following stronger termination property:

\begin{definition}[Bounded Termination]\label{Btermination}
Every correct process completes every operation
	in a bounded number of its steps.
\end{definition}

It turns out that the answer is ``No'', even if we assume that
	the writer is not faulty and at most one of the readers can fail.
	More precisely:

\begin{restatable}{theorem}{BoundedImpoResult}
\label{Bounded-Theo-Impossibility-Result}
	For all $n \ge 3$,
	in a system with $n+1$ processes that are subject to Byzantine failures,
	there is \emph{no}
	linearizable implementation of a 
	$\Reg{1}{n}$ from atomic \mbox{$\Reg{1}{n-1}$}s
	that satisfies the Bounded Termination property,
	 even if we assume that the writer of the implemented
	$\Reg{1}{n}$ is correct and at most one reader can be {\ml}. 
\end{restatable}

This is in sharp contrast to Theorem~\ref{Theo-Main-Possibility} which implies that,
	if we assume that the writer~is~correct, 
	there \emph{is} a linearizable implementation of a 
	$\Reg{1}{n}$ from atomic \mbox{$\Reg{1}{1}$}s that satisfies the Termination property
	and tolerates any number of {\ml} readers.
Thus, Theorems~\ref{Theo-Main-Possibility} and~\ref{Bounded-Theo-Impossibility-Result}
	imply that there is an inherent difference
	between achieving Termination and achieving Bounded Termination
	in systems with Byzantine failures.

To prove Theorem~\ref{Bounded-Theo-Impossibility-Result},
	it is easy to modify the impossibility
	proof of Theorem~\ref{Theo-Impossibility-Result} given in Section~\ref{Impossibility-Result}, as we now explain.
First note that in the successive runs that we construct in the proof of Theorem~\ref{Theo-Impossibility-Result},
	we leverage the fact that a correct reader
	that starts a read operation
	cannot wait for the writer to complete a concurrent write,
	\emph{even if the reader is aware that this write is in progress}:
	in our runs, the writer actually crashes so such waiting is not possible.
So a correct reader must complete its read operation even if the writer stops taking~steps.

However, if \emph{the writer is assumed to be correct} and we require ``only'' Termination,
	a reader that is aware
	that a write operation is in progress \emph{can} wait for the writer to complete this operation
	to determine what value to read (so the impossibility proof breaks down in this case, as it should).
But if we require the register implementation to satisfy Bounded Termination, then we are back to a situation
	where a reader must complete its operation without waiting for the writer to take steps.
So the proof of Theorem~\ref{Bounded-Theo-Impossibility-Result}
	can use this fact exactly as the proof of Theorem~\ref{Theo-Impossibility-Result} does.

Thus,
	we can modify the proof of Theorem~\ref{Theo-Impossibility-Result} to obtain
	one for Theorem~\ref{Bounded-Theo-Impossibility-Result} as follows.
Roughly speaking,
	whenever the writer $w$ crashes
	in the proof of Theorem~\ref{Theo-Impossibility-Result},
	the writer is correct but just \emph{pauses}
	in the proof of Theorem~\ref{Bounded-Theo-Impossibility-Result};
	the writer later resumes taking steps and completes its write operation,
	but it does so only \emph{after} the readers complete their read operations.
Since the delayed steps of the writer are not seen by the readers,
	they~behave as in the proof of Theorem~\ref{Theo-Impossibility-Result}.

More precisely,
	in the successive runs that we~construct~in~the~proof:
	\begin{compactitem}
	\item If, in the proof of Theorem~\ref{Theo-Impossibility-Result},
	the writer $w$ crashes after taking a step $s^k$ and before taking further steps,
	\item then, in the proof of Theorem~\ref{Bounded-Theo-Impossibility-Result},
	the correct writer $w$ \emph{temporarily} stops taking steps after taking a step $s^k$ and before taking further steps.
	\end{compactitem}
	
The readers behave the same in the corresponding runs of both proofs.
After the reads by correct readers are completed,
	the writer resumes taking steps and completes its operation.

Since the proof of Theorem~\ref{Bounded-Theo-Impossibility-Result} is mostly a verbatim repetition of the proof of Theorem~\ref{Theo-Impossibility-Result}, we relegate it to Appendix~\ref{bounded-termination-impo}.

% !TEX root = main.tex
\newcommand{\x}{x}
\newcommand{\y}{y}

\section{Implementations from regular registers}

In a seminal work~\cite{Lamport86}, Lamport considered the problem of
	implementing ``atomic'' registers from \emph{regular} registers,
	in systems where processes may \emph{crash}.
Recall that, as with an atomic register, a regular register ensures that a reader reads the ``current'' value of the register,
	but in contrast to an atomic register,
	a regular register allows ``new-old'' inversions in the values read.
In other words, a regular register must satisfy only Property 1 
	of the register linearizability Definition~\ref{LinearizableCrash}.

We now consider this problem for systems where processes are subject to Byzantine failures.
To~do so, we must first define what it means for
	a register implementation to be an implementation of a \emph{regular} register in systems where processes can be {\ml}.
Intuitively, we require that if the writer is not {\ml} then
	non-{\ml} readers must read the ``current'' value of the register (this is Property 1 of Definition~\ref{LinearizableByz}):
	
\begin{definition}[Register Regularity]\label{Regularity}
In a system with Byzantine process failures,
	an implementation of a $\Reg{1}{n}$
	is \emph{a regular register implementation} if and only if, when the writer is not {\ml},
	the following property holds:
	
\emph{[Reading a ``current'' value]}
If a read operation $\ru$ by a process that is not {\ml} returns the value $v$ then:
	\begin{compactitem}
	\item there is a write $v$ operation that immediately precedes $\ru$ or is concurrent with $\ru$, or	
	\item  $v = v_0$ and no write operation precedes $\ru$.\footnote{Recall that $v_0$ is the initial value of the implemented register.}
	\end{compactitem}
\end{definition}

{\large \textbf{An impossibility result.}} Recall that, by Theorem~\ref{Impossibility-Result}, in a system with Byzantine failures there is
	no linearizable implementation of a 
	$\Reg{1}{n}$ from \emph{atomic} \mbox{$\Reg{1}{n-1}$}s
	that satisfies Termination.
This raises the following question: What happens if all processes, \mbox{\emph{including the writer},} are~given regular \mbox{$\Reg{1}{n}$}s instead of atomic \mbox{$\Reg{1}{n-1}$}s? Note that these regular registers \emph{can be read by all the $n$ readers}, as in the desired register implementation, but they are~``only''~regular.
This question is answered by the following:

\begin{theorem}\label{Corollary-Impossibility-Result}
For all $n \ge 3$,
	in a system with $n+1$ processes that are subject to Byzantine failures,
	there is \emph{no}
	linearizable implementation of a 
	$\Reg{1}{n}$ from \emph{regular} \mbox{$\Reg{1}{n}$}s
	that satisfies the Termination property,
	 even if we assume that the writer of the implemented
	$\Reg{1}{n}$ can only crash and at most one reader can be {\ml}.
\end{theorem} 	

The above impossibility result is in sharp contrast to a corresponding possibility result in the case of systems with only crash failures: in such systems it is easy to implement a wait-free linearizable
	$\Reg{1}{n}$ from regular \mbox{$\Reg{1}{n}$}s.
	
\begin{proof}
Let $n \ge 3$.
Consider a system with $n+1$ processes with Byzantine~failures.

\begin{claim}\label{Gianni}
For every process $\x$, there is a wait-free implementation 
	of a \emph{regular}~$\Reg{1}{n}$~{\regx}, 
	writable by $\x$ and readable by the other $n$ processes,
	from atomic $\Reg{1}{1}$s.
\end{claim}

\begin{proof}
The implementation of {\regx}
	is very simple.
For each reader $p$, the writer $\x$ has an atomic $\Reg{1}{1}$ $R_{{\x}p}$ that $\x$ can write and $p$ can read.
\begin{compactitem}
\item To \emph{write a value $v$ into {\regx}}, the writer $\x$ successively writes $v$ into $R_{{\x}p}$ for every reader~$p$.
\item To \emph{read a value from {\regx}}, a reader $p$ reads register $R_{{\x}p}$ and returns the value read.
\end{compactitem}
It is clear that if the writer $\x$ is not {\ml}, then
	any non-{\ml} reader $p$ that reads~{\regx}, reads
	the value written into {\regx} by a write operation by $x$ that
	immediately precedes the read of {\regx} or is concurrent with this read;
	more precisely, this implementation of {\regx} satisfies the property of regular registers, namely, Property~1 of Definition~\ref{Regularity}.
\end{proof}

Let $w$ be any process.
Assume $w$ can only crash and at most one of the remaining $n$ processes can be {\ml}.
For contradiction, suppose that
	using regular \mbox{$\Reg{1}{n}$}s
	there is an implementation $\Iu{n}$
	of a $\Reg{1}{n}$, writable by $w$ and readable by the other $n$ processes,
	such that:
	(1) $\Iu{n}$ is linearizable, and
	(2) $\Iu{n}$ satisfies the Termination property.
	
By Claim~\ref{Gianni}, every regular $\Reg{1}{n}$
	used by $\Iu{n}$ has a \emph{wait-free} implementation
	from atomic $\Reg{1}{1}$s.
Thus, by replacing every regular $\Reg{1}{n}$ used by $\Iu{n}$ with its corresponding wait-free implementation, 
	we obtain an implementation $\Iup{n}$ of a $\Reg{1}{n}$, writable by $w$ and readable by the other $n$ processes, 	\emph{from atomic $\Reg{1}{1}$s}.
It is clear that
 	like~$\Iu{n}$:
	(1) $\Iup{n}$ is linearizable, and
	(2) $\Iup{n}$ satisfies the Termination property.
Therefore $\Iup{n}$ contradicts Theorem~\ref{Theo-Impossibility-Result}.
\end{proof}

%
% !TEX root = main.tex

\vspace{-5mm}

\section{Concluding remarks}
\vspace{-2mm}

The implementation of registers from weaker registers is a basic problem in distributed computing
	that has been extensively studied
	in the context of processes with crash failures.
In this paper, we investigated this problem in the context of Byzantine processes failures, with and without process signatures.

We first proved that, without signatures, there is no wait-free
	linearizable
	implementation of a $\Reg{1}{n}$ from atomic $\Reg{1}{n-1}$s.
In fact, we showed a stronger result, namely,
	even under the assumption that
	the writer can only crash and at most one reader can be {\ml},
	there is no linearizable
	implementation of a $\Reg{1}{n}$ from atomic $\Reg{1}{n-1}$s
	that ensures that every
	correct process eventually completes its operations.

In light of this strong impossibility result,
	we gave an implementation of a $\Reg{1}{n}$ from atomic $\Reg{1}{1}$s
	that is ``safe'' (i.e, it is linearizable)
	under any combination of Byzantine process failures,
	but it is ``live'' (i.e., it ensures that every
	correct process eventually completes its operations)
	only under the assumption that
	the writer is correct or no reader is {\ml}; this
	matches the impossibility result.

If we assume that the writer is correct,
	with the above implementation (which tolerates any number of {\ml} readers)
	every reader completes each read in a \emph{finite} number of steps.
We showed that is impossible to ensure they do so in a \emph{bounded} number of steps, even if we make the additional assumption that at most one reader can be {\ml}.
	
In sharp contrast with the above results,
	for the case that processes can use signatures,
	we gave a \emph{bounded wait-free} linearizable implementation
	of a $\Reg{1}{n}$ from atomic $\Reg{1}{1}$s which does not rely on any failure assumptions.

Perhaps surprisingly, none of the above results
	refers to a ratio of faulty vs. correct processes,
	such as $n/3$ or $n/2$, that we typically encounter in results that involve Byzantine processes.
For example, Most\'efaoui \emph{et al.}~\cite{Mostefaoui2016}
	prove that one can implement a linearizable $f$-resilient $\Reg{1}{n}$ in \emph{message-passing}
	systems with Byzantine process failures
	if and only if $f < n/3$.
As an other example, Cohen and Keidar~\cite{CohenKeidar2021}
	show that if $f <n/2$, one can use atomic $\Reg{1}{n}$s to get a linearizable
	$f$-resilient implementations of {reliable broadcast},
	{atomic snapshot}, and {asset transfer} objects in systems with Byzantine process failures.

It is worth noting that,
	since atomic $\Reg{1}{1}$s can simulate message-passing channels,
	one can use the $f$-resilient implementation of a $\Reg{1}{n}$ for \emph{message-passing} systems given in~\cite{Mostefaoui2016},
	to obtain an \emph{$f$-resilient} implementation of a $\Reg{1}{n}$ using atomic $\Reg{1}{1}$s.
But $f$-resilient implementations
	 (such as the ones given in~\cite{CohenKeidar2021,Mostefaoui2016})
	require every correct process
	to help the execution of every operation,
	even the operations of \emph{other} processes.
In~contrast, with 
	object implementations in shared-memory systems,
	a common assumption is that
	processes that do not have ongoing operations 
	take no steps;
	so a process that executes an operation
	cannot count on getting help from any process that is
	\emph{not} currently executing its own operation.

\section*{Acknowledgments}
We thank Vassos Hadzilacos for his helpful comments on this paper.
This work was partially funded by the Natural Sciences and Engineering Research Council of Canada (Grant number: RGPIN-2014-05296).
\newpage
\bibliography{biblio}

\appendix

% !TEX root = main.tex
\section{A wait-free linearizable implementation of a \texorpdfstring{\Reg{1}{2}}{} from
	atomic \texorpdfstring{\Reg{1}{1}s}{}}\label{whatever}

 Algorithm~\ref{1w2r} gives a wait-free linearizable implementation $\Iup{2}$ of a $\Reg{1}{2}$ from
	atomic $\Reg{1}{1}$s.
This algorithm is a simpler version of Algorithm~\ref{1wnr} for the \emph{valid} implementation $\I{n}$ of a $\Reg{1}{n}$ (Section~\ref{impIn}):
	$\Iup{2}$ has only two readers, namely $p$ and $q$,
	so preventing new-old inversions among readers is easier.
In contrast to Algorithm~\ref{1wnr}, the code of Algorithm~\ref{1w2r} has no parallel threads.
We now prove the correctness of~$\Iup{2}$.

% !TEX root = main.tex

\begin{algorithm}[t]
\caption{Implementation $\Iup{2}$ of a $\Reg{1}{2}$
	writable by~$w$ and readable by $p$ and $q$.
$\Iu{2}$~uses atomic $\Reg{1}{1}$s.
}\label{1w2r} 
\ContinuedFloat

\vspace{-3mm}
{\footnotesize
\begin{multicols}{2}

\textsc{Atomic Registers}
\vspace{.7mm}

~~~$R_{wp}$: 
	$\Reg{1}{1}$;
	initially $(\done,\langle 0,u_0 \rangle)$

~~~$R_{wq}$: 
	$\Reg{1}{1}$; initially $(\done,\langle 0,u_0 \rangle)$

~~~$R_{pq}$: 
	$\Reg{1}{1}$;
	initially $\langle 0 , \rinit \rangle$

\columnbreak

\textsc{Local variables}

\vspace{.7mm}

~~~~$c$: variable of $w$; initially $0$

~~~~$\lw$: variable of $w$; initially $\langle 0 , \rinit \rangle$

~~~~$\lr$: variable of $q$
	 initially $\langle 0 , \rinit \rangle$

\end{multicols}
\hrule
\vspace{-2mm}
\begin{algorithmic}[1]
\begin{multicols}{2}
\Statex
\textsc{\Wu($u$):}  ~~~~~~~~$\triangleright$ executed by the writer $w$
\Indent
\State \label{Sc} $c \gets c+1$  
\State \label{ScallW} \textbf{call} \textsc{\wu($\langle c,u \rangle$)} 
\State \Return $\Done$
\EndIndent

\columnbreak

\noindent
\textsc{$\Ru$():}\Comment{executed by any reader $r \in \{p,q\}$}
\Indent
\State \label{Scallru}\textbf{call} $\ru_r$()
\State \textbf{if} this call returns some tuple $\langle k,u \rangle$ \textbf{then}
\Indent
\State $\Return$ $u$ 
\EndIndent
\State \textbf{else} \Return $\bot$ 
\EndIndent
\end{multicols}
\vspace{-1mm}
\hrule
\vspace{4mm}
\Statex

\textsc{\wu($\langle k,u \rangle$):} \Comment{executed by $w$ to do its $k$-th write}
\Indent
\State \label{SwpP} $R_{wp}\gets (\ready,\lw,\langle k,u \rangle)$
\State \label{SwqP} $R_{wq}\gets (\ready,\lw,\langle k,u \rangle)$
\State \label{SwpC} $R_{wp}\gets (\done,\langle k,u \rangle)$
\State \label{SwqC} $R_{wq}\gets (\done,\langle k,u \rangle)$
\State $\lw \gets \langle k,u \rangle$
\State \Return $\Done$
\EndIndent

\vspace*{2mm}

\noindent
\textsc{$\ru_p$():}\Comment{executed by reader $p$}
\Indent
\State \label{SpCommit}\textbf{if} $R_{wp}=(\done,\langle k,u \rangle)$ for some $\langle k,u \rangle$ \textbf{then}
\Indent
\State \label{Stellq} \label{Sptoq}$R_{pq}\gets \langle k,u \rangle$
\State \label{SpCrt} \Return $\langle k,u \rangle$
\EndIndent
\State \label{SpPre}\textbf{elseif} $R_{wp}=(\ready, \lw, - )$  for some $\lw$ \textbf{then}
\Indent
\State \label{SpPrt}\Return $\lw$
\EndIndent
\State \textbf{else} \Return $\bot$ 
\EndIndent

\vspace*{2mm}

\noindent
\textsc{$\ru_q()$:}\Comment{executed by reader $q$}
\Indent
\State \label{SqCommit} \textbf{if} $R_{wq}=(\done,\langle k,u \rangle)$ for some $\langle k,u \rangle$ \textbf{then}
\Indent
\State \label{SqCrt} \Return $\langle k,u \rangle$
\EndIndent

\State \label{SqPre} \textbf{elseif} $R_{wq}=(\ready,\lw,\langle k,u \rangle)$ for some $\lw$ and some $\langle k,u \rangle$ \textbf{then}

\Indent

\EndIndent

\Indent
\State \label{Sqfromp}\textbf{if} $R_{pq}=\langle k',- \rangle$ \text{for some} $k'\ge k$ \textbf{then}
\Indent
\State \label{Sqtor} $\lr\gets \langle k,u \rangle$
\State \label{SqPrt2}\Return $\langle k,u \rangle$
\EndIndent

\State \label{Srtoq}\textbf{elseif} $\lr=\langle k',- \rangle$ and some $k'\ge k$ \textbf{then}
\Indent

\State \label{SqPrt3}\Return $\langle k,u \rangle$

\EndIndent

\State \label{Set2} \textbf{else}
\Indent
\State\Return \label{SqPrtlw} $\lw$
\EndIndent
\EndIndent

\State \label{Sqrtbot} \textbf{else} \Return $\bot$ 
\EndIndent

\end{algorithmic}
}
\end{algorithm}

% !TEX root = main.tex

Since the code of Algorithm~\ref{1w2r} does not contain any loop or wait statement, it is clear that
	every call to the \Wu() and \Ru() procedures by any correct process terminates
	with a return value in a bounded number of its own steps.
	Thus:

\begin{observation}\label{Obs-algo2iswf}
 The implementation $\Iup{2}$ is bounded wait-free.
\end{observation}

The proof that $\Iup{2}$ is linearizable is in many parts similar (or even identical) to the proof that
	that the register implementation $\I{n}$ is linearizable (Theorem~\ref{Theo-linearizability} in Section~\ref{impIn}).
It is given here for completeness.

\newpage
To prove that $\Iup{2}$ is linearizable, we consider two cases:
 
 \smallskip
 \textsc{Case 1:} \emph{The writer $w$ of the register implemented by $\Iup{2}$ is {\ml}.}
 By Definition~\ref{LinearizableByz}, $\Iup{2}$ is (trivially) linearizable in this case.
 
 \textsc{Case 2:} \emph{The writer $w$ of the register implemented by $\Iup{2}$ is \emph{not} {\ml}.}
 
 For this case, we now prove that the read and write operations of the implemented register
 	satisfy the linearizability Properties~1 and~2 of Definition~\ref{LinearizableByz}.
	
In the following:
 
\begin{compactitem}
\item $u_0$ is the initial value of the register that $\Iup{2}$ implements.
\item For $k\ge1$, $u_k$ denotes the $k$-th value written by $w$ using
	the procedure
	$\Wu()$.
	More precisely, if $w$ calls $\Wu()$ with a value $u$ and this is its $k$-th call of $\Wu()$,
	then~$u_k$~is~$u$.
\item $v_0$ is $\langle 0, u_0 \rangle$.
\item For $k\ge1$, $v_k$ denotes the $k$-th value written by $w$ using
	the procedure
	$\wu()$. 
\end{compactitem}

\begin{observation}\label{Svk}
For all $k\ge 0$, $v_k = \tp{k,u_k}$.
\end{observation}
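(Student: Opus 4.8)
The plan is to derive Observation~\ref{Svk} directly from the code of the $\Wu()$ and $\wu()$ procedures of Algorithm~\ref{1w2r}, using the fact that in Case~2 the writer $w$ is not {\ml} and hence faithfully follows this code. For $k=0$ the claim is immediate from the definitions: $v_0 = \langle 0, u_0\rangle = \tp{0,u_0}$.

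For $k \ge 1$, I would first observe that $w$'s local counter $c$ is initialized to $0$, is incremented by exactly one in line~\ref{Sc} at the start of each $\Wu()$ invocation, and is modified nowhere else. A straightforward induction on the number of $\Wu()$ invocations performed by $w$ then shows that when $w$ executes line~\ref{ScallW} during its $k$-th invocation of $\Wu()$ --- namely $\Wu(u_k)$ --- the counter $c$ holds the value $k$. Consequently the $k$-th call to $\wu()$ performed by $w$ is $\wu(\langle k, u_k\rangle)$, so the $k$-th value written by $w$ via $\wu()$ is $v_k = \langle k, u_k\rangle = \tp{k, u_k}$, as claimed.

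There is no genuine obstacle here; the only point deserving a word of care is that the quantities $v_k$ (for $k\ge1$) are well-defined precisely because $w$ is not {\ml} in Case~2, so that ``the $k$-th value written by $w$ using $\wu()$'' refers to an actual execution of the $\wu()$ procedure with a determined argument. Once this is noted, the observation is an immediate consequence of reading off the two procedures; indeed it is the exact analogue of Observation~\ref{vk}, established earlier for $\I{n}$, and the same one-paragraph argument applies verbatim.
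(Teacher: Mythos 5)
Your proof is correct and matches the paper's intent: the paper states this as an unproved observation precisely because it follows immediately from the initialization and increment of $c$ in $\Wu()$ and the call in line~\ref{ScallW}, which is exactly the induction you spell out. Your remark that the writer being non-{\ml} is what makes $v_k$ well-defined is also the right caveat.
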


\begin{observation}\label{Swwrites}
Let $\wu(v)$ be any write operation by $w$.
Then there is a $k \ge 1$ such that $v=v_k$.
\end{observation}

\begin{observation}\label{Sregcontentstrongw1}
Let $R\in\{R_{wp}, R_{wq}\}$.
If $w$ writes $x$ in $R$,
	then $x=(\done, v_k)$ for some $k \ge 1$ or
	$x=(\ready, v_k , v_{k+1})$ 
	for some $k \ge 0$.
\end{observation}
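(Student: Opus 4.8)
The plan is to prove Observation~\ref{Sregcontentstrongw1} by a direct inspection of Algorithm~\ref{1w2r}, exploiting the Case~2 hypothesis that the writer $w$ is not {\ml}, so that $w$ faithfully follows the procedures $\Wu()$ and $\wu()$. First I would note that $w$ writes into a register $R\in\{R_{wp},R_{wq}\}$ only in lines~\ref{SwpP}, \ref{SwqP}, \ref{SwpC}, and~\ref{SwqC}, and each of these writes occurs inside some invocation $\wu(\langle j,u\rangle)$ made in line~\ref{ScallW} of $\Wu()$. Since the counter $c$ of $w$ is initialized to $0$ and incremented in line~\ref{Sc} before each such call, the $j$-th invocation of $\wu()$ carries the tuple $\langle j,u_j\rangle$ with $j\ge1$, which by Observation~\ref{Svk} is exactly $v_j$.

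Next I would track the local variable $\lw$ of $w$: it is initialized to $\langle 0,u_0\rangle=v_0$ and, at the end of each invocation $\wu(\langle j,u\rangle)$, it is assigned $\langle j,u\rangle=v_j$. Because $w$ is a single process executing its write invocations one after another, the value of $\lw$ read inside the $j$-th invocation is precisely the tuple written by the $(j-1)$-th invocation, i.e.\ $v_{j-1}$. Consequently, the writes in lines~\ref{SwpP} and~\ref{SwqP} put $(\ready,v_{j-1},v_j)$ into $R$, which has the form $(\ready,v_{k},v_{k+1})$ with $k=j-1\ge0$; and the writes in lines~\ref{SwpC} and~\ref{SwqC} put $(\done,v_j)$ into $R$ with $j\ge1$. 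Since these are the only ways $w$ can write into $R$, this yields the claimed dichotomy.

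This is essentially a code walk-through, so I do not expect a genuine obstacle; the only point deserving a sentence of care is the bookkeeping on $\lw$ --- namely that the invocations of $\wu()$ by $w$ are executed sequentially (they are, since $\Wu()$ is their only caller and $w$ is a single process), so that $\lw$ indeed holds $v_{j-1}$ during the $j$-th invocation. I would phrase this as a one-line remark rather than elaborate it. Finally, I would note that Observation~\ref{Sregcontentstrongw1} is the exact analogue, for Algorithm~\ref{1w2r}, of Observation~\ref{regcontentstrongw1} for Algorithm~\ref{1wnr}, and the two proofs coincide up to the renaming of line labels.
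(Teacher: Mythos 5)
Your proof is correct and matches the paper's (implicit) justification: the paper states Observation~\ref{Sregcontentstrongw1} without proof, treating it as immediate from inspection of the code of $\Wu()$ and $\wu()$ under the Case~2 hypothesis that $w$ is not {\ml}, which is exactly the walk-through you give. The one detail worth making explicit --- that $\lw$ equals $v_{j-1}$ during the $j$-th invocation of $\wu()$ because $w$ executes its writes sequentially --- is handled correctly in your remark.
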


\begin{observation}\label{Sregcontentstrongp}
Suppose $p$ is not {\ml}.
If $p$ reads $R_{wp}=x$,
	then $x=(\done, v_k)$ or
	$x=(\ready, v_k , v_{k+1})$, for some $k \ge 0$. 
\end{observation}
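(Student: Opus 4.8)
The plan is to derive this observation directly from the atomicity of the base register $R_{wp}$ together with Observation~\ref{Sregcontentstrongw1}, which already characterizes the values that the writer $w$ puts into $R_{wp}$. Since $R_{wp}$ is an atomic $\Reg{1}{1}$, any value $x$ that $p$ reads from $R_{wp}$ must be either the initial value of $R_{wp}$ or a value that was written into $R_{wp}$ by some preceding (or concurrent) write operation. The only process that ever writes $R_{wp}$ is $w$, and in the case we are treating $w$ is not {\ml}; hence every such write is performed by $w$ while executing the $\wu()$ procedure, namely in line~\ref{SwpP} or line~\ref{SwpC} of Algorithm~\ref{1w2r}.

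First I would dispose of the ``initial value'' case: $R_{wp}$ is initialized to $(\done,\langle 0,u_0 \rangle)$, which by Observation~\ref{Svk} equals $(\done,v_0)$, so the conclusion holds with $k=0$. Otherwise $x$ was written into $R_{wp}$ by $w$, and I would simply invoke Observation~\ref{Sregcontentstrongw1} (instantiated with $R=R_{wp}$) to conclude that $x=(\done,v_k)$ for some $k\ge1$ or $x=(\ready,v_k,v_{k+1})$ for some $k\ge0$. Combining the two cases yields exactly the statement: $x=(\done,v_k)$ or $x=(\ready,v_k,v_{k+1})$ for some $k\ge0$.

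There is no real obstacle here; the argument is a two-line case split once Observation~\ref{Sregcontentstrongw1} is in hand. The only points deserving a moment's care are (i) noting that $R_{wp}$ has a unique writer, so that the non-{\ml} writer's use of $\wu()$ is the only source of written values, and (ii) the inductive bookkeeping underlying Observation~\ref{Sregcontentstrongw1} itself --- namely that the ``$\lw$'' component that $w$ writes in a $(\ready,\lw,\cdot)$ tuple during its $(k+1)$-th $\wu()$ call is precisely $v_k$, which holds because $w$ sets $\lw \gets \langle k,u_k\rangle = v_k$ at the end of its $k$-th $\wu()$ call and $\lw$ is initialized to $v_0$. Finally, the hypothesis ``$p$ is not {\ml}'' plays no role in the conclusion, since $p$ never writes $R_{wp}$; it is retained only for uniformity with Observation~\ref{regcontentstrongp}, the analogous statement in the correctness proof of $\I{n}$.
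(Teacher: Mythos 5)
Your argument is correct and is exactly the (implicit) justification the paper relies on: the paper states this as an unproved observation, and the intended reasoning is precisely that $R_{wp}$ is an atomic register with unique writer $w$, its initial value is $(\done,v_0)$, and every value the non-{\ml} $w$ writes into it has one of the two stated forms by Observation~\ref{Sregcontentstrongw1}. Your side remarks --- that the $\lw$ bookkeeping makes the $(\ready,\lw,\cdot)$ tuples well-formed, and that the hypothesis on $p$ is inessential here --- are also accurate.
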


\begin{observation}\label{Sregcontentstrongq}
Suppose $q$ is not {\ml}.
If $q$ reads $R_{wq}=x$,
then $x=(\done, v_k)$ or
	$x=(\ready, v_k , v_{k+1})$, for some $k \ge 0$. 
\end{observation}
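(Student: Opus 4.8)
The plan is to prove Observation~\ref{Sregcontentstrongq} in exactly the same way one would prove the companion Observation~\ref{Sregcontentstrongp}, replacing $p$ by $q$ and $R_{wp}$ by $R_{wq}$. The whole argument rests on two facts: $R_{wq}$ is an \emph{atomic} $\Reg{1}{1}$ whose only writer is $w$, and --- since throughout this case the writer $w$ is not {\ml} --- every value that $w$ writes into $R_{wq}$ has the restricted form given by Observation~\ref{Sregcontentstrongw1}.

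Concretely, I would proceed as follows. First, since $q$ is not {\ml} by hypothesis, the statement ``$q$ reads $R_{wq}=x$'' is meaningful: $q$ actually performs a read of the atomic register $R_{wq}$ and this read returns $x$. By atomicity of $R_{wq}$ (recalled in Section~\ref{model}), the value $x$ returned by this read is either the initial value of $R_{wq}$ or the value of some write that precedes it; and the only process that writes $R_{wq}$ is $w$. Second, the initial value of $R_{wq}$ is $(\done,\langle 0,u_0\rangle)$, which by Observation~\ref{Svk} equals $(\done,v_0)$, so in this case $x=(\done,v_k)$ with $k=0$. Third, if instead $x$ was written by $w$, then --- using that $w$ is not {\ml} --- Observation~\ref{Sregcontentstrongw1} gives $x=(\done,v_k)$ for some $k\ge 1$ or $x=(\ready,v_k,v_{k+1})$ for some $k\ge 0$. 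Combining the two cases yields $x=(\done,v_k)$ for some $k\ge 0$ or $x=(\ready,v_k,v_{k+1})$ for some $k\ge 0$, which is the claim.

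I do not expect any real obstacle here: this is a routine bookkeeping observation, and it may well be reasonable to state it with only the one-line justification above. The only points to be careful about are that both hypotheses are genuinely used --- $w$ not {\ml} in order to invoke Observation~\ref{Sregcontentstrongw1} (which is implicitly under the standing ``$w$ is not {\ml}'' assumption of this case), and $q$ not {\ml} so that ``the value $q$ reads'' is well defined rather than an arbitrary value reported by a malicious reader --- and that the appeal to atomicity is the standard ``a read returns the initial value or the value of some preceding write''.
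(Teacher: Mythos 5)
Your proposal is correct and matches the paper's intent exactly: the paper states this as an unproved observation, and the intended justification is precisely your routine argument that $R_{wq}$ is an atomic register written only by the non-{\ml} writer $w$, so any value read by the non-{\ml} reader $q$ is either the initial value $(\done,v_0)$ or one of the values constrained by Observation~\ref{Sregcontentstrongw1}. No gaps.
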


\begin{lemma}\label{Spread}
Suppose $p$ is not {\ml}.
Let $\ru_p(v)$ be any read operation by $p$.
Then there is a $k \ge 0$ such that $v=v_k$, and
\begin{compactitem}
\item $p$ reads $R_{wp} = (\done, v_{k})$ in line~\ref{SpCommit} of $\ru_p(v)$,\footnote{For brevity, we say that
	``a process $r$ reads or writes a register \emph{in line  $x$ of a $\ru_r(-)$ or a $\wu(-)$ operation}'',
	if it reads or writes this register in line $x$ of the $\ru_r()$ or $\wu()$ \emph{procedure}
	executed to do this $\ru_r(-)$ or $\wu(-)$ operation.}
	or
\item $p$ reads $R_{wp} =(\ready, v_{k}, v_{k+1})$ in line~\ref{SpPre} of $\ru_p(v)$. 
\end{compactitem}
\end{lemma}

\begin{proof}
Suppose $p$ is not {\ml}.
Let $\ru_p(v)$ be any read operation by $p$.
Note that $p$ reads $R_{wp}$ in $\ru_p(v)$.
When it does so, by Observation~\ref{Sregcontentstrongp},
	there are two possible cases:
\begin{enumerate}
\item $p$ reads $R_{wp} = (\done, v_{k})$ for some $k\ge 0$ in line~\ref{SpCommit} of $\ru_p(v)$.
 Then $\ru_p(v)$ returns $v_{k}$ in line~\ref{SpCrt},
 	i.e., $v=v_k$.
\item $p$ reads $R_{wp} = (\ready, v_{k}, v_{k+1})$ for some $k\ge 0$ in line~\ref{SpPre} of $\ru_p(v)$.
 Then $\ru_p(v)$ returns $v_{k}$ in line~\ref{SpPrt},
 	i.e., $v=v_k$.
\end{enumerate}
\end{proof}

\begin{lemma}\label{Sqread}
Suppose $q $ is not {\ml}.
Let $\ru_q(v)$ be any read operation by $q$.
Then there is a $k \ge 0$ such that $v=v_k$, and
\begin{compactitem}
\item $q$ reads $R_{wq} = (\done, v_{k})$ in line~\ref{SqCommit} of $\ru_q(v)$, 
\item $q$ reads $R_{wq} =(\ready, v_{k-1}, v_{k})$ in line~\ref{SqPre} of $\ru_q(v)$, or 
\item $q$ reads $R_{wq} =(\ready, v_{k}, v_{k+1})$ in line~\ref{SqPre} of $\ru_q(v)$. 
\end{compactitem}
\end{lemma}

\begin{proof}
Suppose $q $ is not {\ml}.
Let $\ru_q(v)$ be any read operation by $q$.
Note that $q$ reads $R_{wq}$ in $\ru_q(v)$.
When it does so, by Observation~\ref{Sregcontentstrongq},
	there are two possible cases:
\begin{enumerate}
\item $q$ reads $R_{wq} = (\done, v_{k})$ for some $k\ge 0$ in line~\ref{SqCommit} of $\ru_q(v)$.
Then $\ru_q(v)$ returns $v_k$ in line~\ref{SqCrt},
	i.e., $v=v_k$.
\item $q$ reads $R_{wq} = (\ready, v_{k}, v_{k+1})$ for some $k\ge 0$ in line~\ref{SqPre} of $\ru_q(v)$.
Then there are two subcases:
	\begin{compactenum}
	\item $\ru_q(v)$ returns $v_{k+1}$ in line~\ref{SqPrt2} or~\ref{SqPrt3},
		i.e., $v=v_{k+1}$.
	Let $k'=k+1$.
	Then in this case, $q$~reads $R_{wq} =(\ready, v_{k'-1}, v_{k'})$ in line~\ref{SqPre} of $\ru_q(v)$ and $v=v_{k'}$.

	\item $\ru_q(v)$ returns $v_{k}$  in line~\ref{SqPrtlw},
	i.e., $v=v_{k}$.
	\end{compactenum}
\end{enumerate}
\end{proof}

\begin{observation}\label{Smonowpw}
Let $R$ be any register in $\{R_{wp}, R_{wq}\}$.

\begin{enumerate}[(1)]
\item \label{Smonowpw1} If $w$ writes $(\ready, v_{k-1}, v_k)$ in $R$ before $w$ writes $(\done,v_{k'})$ in $R$,
	then \mbox{$k \le k'$}.

\item \label{Smonowpw2} If $w$ writes $(\ready, v_{k-1}, v_k)$ in $R$ before $w$ writes $(\ready, v_{k'-1}, v_{k'})$ in $R$, 
	then $k < k'$.

\item \label{Smonowpw2.5} If $w$ writes $(\done,v_k)$ in $R$ before $w$ writes $(\done,v_{k'})$ in $R$,
	then $k < k'$.

\item \label{Smonowpw3} If $w$ writes $(\done,v_k)$ in $R$ before $w$ writes $(\ready, v_{k'-1}, v_{k'})$ in $R$,
	then $k < k'$.
\end{enumerate}
\end{observation}

\begin{observation}\label{Smonowpwr}
Let $R$ be any register in $\{R_{wp}, R_{wq}\}$.
Suppose $r \in \{p, q \}$ is not {\ml}.

\begin{enumerate}[(1)]
\item \label{Smonowpwr1} If $w$ writes $(\ready, v_{k-1}, v_k)$ in $R$ before $r$ reads $(\done,v_{k'})$ in $R$,
	then \mbox{$k \le k'$}.

\item \label{Smonowpwr2} If $w$ writes $(\ready, v_{k-1}, v_k)$ in $R$ before $r$ reads $(\ready, v_{k'-1}, v_{k'})$ in $R$, 
	then \mbox{$k \le k'$}.

\item \label{Smonowpwr2.5} If $w$ writes $(\done,v_k)$ in $R$ before $r$ reads $(\done,v_{k'})$ in $R$,
	then \mbox{$k \le k'$}.

\item \label{Smonowpwr3} If $w$ writes $(\done,v_k)$ in $R$ before $r$ reads $(\ready, v_{k'-1}, v_{k'})$ in $R$,
	then $k < k'$.

\end{enumerate}
\end{observation}

\begin{observation}\label{Smonowp}
Let $R$ be any register in $\{R_{wp}, R_{wq}\}$.
Suppose $r$ and $r'$ are non-{\ml} processes in $\{p, q\}$.

\begin{enumerate}[(1)]
\item \label{Smonowp1} If $r$ reads $R=(\ready, v_{k-1}, v_k)$ before $r'$ reads $R=(\done,v_{k'})$,
	then \mbox{$k \le k'$}.

\item \label{Smonowp2} If $r$ reads $R=(\ready, v_{k-1}, v_k)$ before $r'$ reads $R=(\ready, v_{k'-1}, v_{k'})$, 
	then \mbox{$k \le k'$}.

\item \label{Smonowp2.5} If $r$ reads $R=(\done,v_k)$ before $r'$ reads $R=(\done,v_{k'})$,
	then \mbox{$k \le k'$}.

\item \label{Smonowp3} If $r$ reads $R=(\done,v_k)$ before $r'$ reads $R=(\ready, v_{k'-1}, v_{k'})$,
	then $k < k'$.
\end{enumerate}
\end{observation}

\textsc{Proof of linearizability Property 1.}
We now prove that the write and read operations of the register that $\Iup{2}$ implements satisfy Property~1 of Definition~\ref{LinearizableByz},
	i.e., processes read the ``current'' value of the register.
 To do so,
 	we first prove this for the writes and reads of the lower-level procedures $\wu()$ and $\ru_r()$ for all readers $r \in \{p, q\}$
	(Lemma~\ref{Sp1r}),
 	and then prove it for the writes and reads of the high-level procedures $\Wu()$ and $\Ru()$ (Lemma~\ref{SCp1}).

\begin{lemma}\label{Sp1r}
If $\ru_r(v)$ is a read operation by a non-{\ml} process $r\in \{p,q\}$
	then:
\begin{compactitem}
	\item there is a $\wu(v)$ operation that immediately precedes $\ru_r(v)$ or is concurrent with $\ru_r(v)$, or
	\item  $v = v_0$ and no $\wu(-)$ operation precedes $\ru_r(v)$.
\end{compactitem}
\end{lemma}

\begin{proof}
Let $\ru_r(v)$ be any read operation by a non-{\ml} process $r\in\{ p,q \}$.
 By Lemmas~\ref{Spread} and~\ref{Sqread}, $v = v_k$ for some $k \ge 0$.
 We now show that:
 \begin{compactitem}
 \item if $k=0$ then no $\wu(-)$ operation precedes $\ru_r(v_k)$, and
 \item if $k>0$ then a $\wu(v_{k})$ operation immediately precedes~$\ru_r(v_k)$ or is concurrent with $\ru_r(v_k)$.
 \end{compactitem}
\vspace{2mm}
There are two cases: $r=p$~or~$r=q$.
\begin{itemize}
\item Case 1: $r=p$.
By Lemma~\ref{Spread},
	there are two cases:
		
	\begin{enumerate}[1)]
 	\item $p$ reads $R_{wp} = (\done, v_{k})$ in line~\ref{SpCommit} of $\ru_p(v_k)$.
	There are two cases:
		\begin{enumerate}[i.]
		\item $k=0$.
			Suppose, for contradiction, 
			that there is a $\wu(v)$ operation that precedes $\ru_p(v_0)$.
			By Observation~\ref{Swwrites},
			$v = v_i$ for some $i \ge 1$.
		So $w$ writes $(\done, v_{i})$ into $R_{wp}$ in line~\ref{SwpC} of $\wu(v_i)$ before $p$ reads $R_{wp} = (\done, v_{0})$ in line~\ref{SpCommit} of $\ru_p(v_k)$.
		By Observation~\ref{Smonowpwr}(\ref{Smonowpwr2.5}),
			$i \le 0$ --- a contradiction.
		So no $\wu(-)$ operation precedes $\ru_p(v_0)$.

		\item $k>0$.
		Then $w$ writes $(\done,v_k)$ into $R_{wp}$ in line~\ref{SwpC} of $\wu(v_k)$ 
		before $p$ reads $R_{wp}= (\done,v_k)$ in $\ru_p(v_k)$.
		So the $\wu(v_k)$ operation precedes $\ru_p(v_k)$ or is concurrent with $\ru_p(v_k)$.
		We now show that if $\wu(v_k)$ precedes $\ru_p(v_k)$, then $\wu(v_k)$ immediately precedes $\ru_p(v_k)$.
		Suppose, for contradiction, 
			that $\wu(v_k)$ precedes $\ru_p(v_k)$ but does not immediately precede $\ru_p(v_k)$.
		Then there is a $\wu(v_i)$ operation that immediately precedes $\ru_p(v_k)$.
		Clearly, the $\wu(v_k)$ operation precedes the $\wu(v_i)$ operation, and so $i > k$.
		Furthermore, $w$ writes $(\done, v_{i})$ into $R_{wp}$ in line~\ref{SwpC} of $\wu(v_i)$ before $p$ reads $R_{wp} = (\done, v_{k})$ in line~\ref{SpCommit} of $\ru_p(v_k)$.
		By Observation~\ref{Smonowpwr}(\ref{Smonowpwr2.5}),
			$i \le k$ --- a contradiction.
		 Therefore the $\wu(v_k)$ operation immediately precedes $\ru_p(v_k)$ or is concurrent with $\ru_p(v_k)$.
		\end{enumerate}

	\item $p$ reads $R_{wp} =(\ready, v_{k}, v_{k+1})$ in line~\ref{SpPre} of $\ru_p(v_k)$.
	Then this read occurs \emph{after}
	$w$ writes $(\ready, v_{k}, v_{k+1})$ in $R_{wp}$ in line~\ref{SwpP} of the $\wu(v_{k+1})$ operation.
	Furthermore, by Observation~\ref{Smonowpwr}(\ref{Smonowpwr3}), 
	this read occurs \emph{before} $w$ writes $(\done, v_{k+1})$ in $R_{wp}$
	in line~\ref{SwpC} of the $\wu(v_{k+1})$ operation.
	Therefore the $\wu(v_{k+1})$ operation is concurrent with $\ru_p(v_k)$.
	There are two cases:

	\begin{enumerate}[i.]
	\item $k=0$.
	Since $\wu(v_1)$ is concurrent with $\ru_p(v_0)$,
	no $\wu(-)$ operation precedes $\ru_p(v_0)$.

	\item $k>0$. 
	Since $\wu(v_{k+1})$ is concurrent with $\ru_p(v_k)$, 
		$\wu(v_{k})$ immediately precedes~$\ru_p(v_k)$ or is concurrent with $\ru_p(v_k)$.
	\end{enumerate}
\end{enumerate}

\item Case 2: $r = q $.
	By Lemma~\ref{Sqread},
	there are three cases:\begin{enumerate}[1)]
\item $q$ reads $R_{wq} = (\done, v_{k})$ in line~\ref{SqCommit} of $\ru_q(v_k)$.
	There are two cases:
		\begin{enumerate}[i.]
		
		\item $k=0$.
			Suppose, for contradiction, 
			that there is a $\wu(v)$ operation that precedes $\ru_p(v_0)$.
			By Observation~\ref{Swwrites},
			$v = v_i$ for some $i \ge 1$.
		So $w$ writes $(\done, v_{i})$ into $R_{wq}$ in line~\ref{SwqC} of $\wu(v_i)$ before $q$ reads $R_{wq} = (\done, v_{0})$ in line~\ref{SqCommit} of $\ru_q(v_k)$.
		By Observation~\ref{Smonowpwr}(\ref{Smonowpwr2.5}),
			$i \le 0$ --- a contradiction.
		So no $\wu(-)$ operation precedes $\ru_q(v_0)$.
		
		\item $k>0$.
		Then $w$ writes $(\done,v_k)$ into $R_{wq}$ in line~\ref{SwqC} of $\wu(v_k)$
		 before $q$ reads $R_{wq}= (\done,v_k)$ in line~\ref{SqCommit} of $\ru_q(v_k)$.
		So the $\wu(v_k)$ operation precedes $\ru_q(v_k)$ or is concurrent with $\ru_q(v_k)$.
		We now show that if $\wu(v_k)$ precedes $\ru_q(v_k)$, then $\wu(v_k)$ immediately precedes $\ru_q(v_k)$.
		Suppose, for contradiction, 
			that $\wu(v_k)$ precedes $\ru_q(v_k)$ but does not immediately precede $\ru_q(v_k)$.
		Then there is a $\wu(v_i)$ operation that immediately precedes $\ru_p(v_k)$.
		Clearly, the $\wu(v_k)$ operation precedes the $\wu(v_i)$ operation, and so $i > k$.		
		Furthermore, $w$ writes $(\done, v_{i})$ into $R_{wq}$ in line~\ref{SwqC} of $\wu(v_i)$ before $q$ reads $R_{wq} = (\done, v_{k})$ in line~\ref{SqCommit} of $\ru_q(v_k)$.
		By Observation~\ref{Smonowpwr}(\ref{Smonowpwr2.5}),
			$i \le k$ --- a contradiction.
		 Therefore the $\wu(v_k)$ operation immediately precedes $\ru_q(v_k)$ or is concurrent with $\ru_q(v_k)$.

		\end{enumerate}
		
\item $q$ reads $R_{wq} =(\ready, v_{k-1}, v_{k})$ in line~\ref{SqPre} of $\ru_q(v_k)$.
	Then this read occurs \emph{after}
	$w$ writes $(\ready, v_{k-1}, v_{k})$ in $R_{wq}$ in line~\ref{SwqP} of the $\wu(v_k)$ operation.
	Furthermore, by Observation~\ref{Smonowpwr}(\ref{Smonowpwr3}),
	this read occurs \emph{before} $w$ writes $(\done,v_k)$ in $R_{wq}$
	in line~\ref{SwqC} of the $\wu(v_k)$ operation.
	Therefore the $\wu(v_k)$ operation is concurrent with $\ru_q(v_k)$.

\item $q$ reads $R_{wq} =(\ready, v_{k}, v_{k+1})$ in line~\ref{SqPre} of $\ru_q(v_k)$.
	Then this read occurs after
	$w$ writes $(\ready, v_{k}, v_{k+1})$ in $R_{wq}$ in line~\ref{SwqP} of the $\wu(v_{k+1})$ operation.
	Furthermore, by Observation~\ref{Smonowpwr}(\ref{Smonowpwr3}),
	this read occurs before $w$ writes
	$(\done,v_{k+1})$ in $R_{wq}$ in line~\ref{SwqC} of the $\wu(v_{k+1})$ operation.
	Therefore the $\wu(v_{k+1})$ operation is concurrent with $\ru_q(v_k)$.
	There are two cases:
	\begin{enumerate}[i.]
	
	\item $k=0$.
	Since $\wu(v_1)$ is concurrent with $\ru_q(v_0)$,
	no $\wu(-)$ operation precedes $\ru_q(v_0)$.

	\item $k>0$.
	Since $\wu(v_{k+1})$ is concurrent with $\ru_q(v_k)$, 
		$\wu(v_{k})$ is concurrent with $\ru_q(v_k)$ or immediately precedes $\ru_q(v_k)$.
	\end{enumerate}
\end{enumerate}
\end{itemize}
\end{proof}

We now prove that the write and read operations
	of the high-level procedures $\Wu()$ and $\Ru()$
	satisfy Property~1 of Definition~\ref{LinearizableByz}.

By Observation~\ref{Svk}, Lemmas~\ref{Spread} and~\ref{Sqread}, and the code of the procedure $\Ru()$,
	we have:
\begin{observation}\label{SintegrityR}
If $\Ru(u)$ is an operation by a non-{\ml} process $r \in \{p,q\}$,
	then $u=u_k$ for some $k \ge 0$.
\end{observation}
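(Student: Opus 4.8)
The plan is to derive Observation~\ref{SintegrityR} directly from the code of the high-level read procedure $\Ru()$, together with the characterizations of the low-level procedures $\ru_p()$ and $\ru_q()$ already established in Lemmas~\ref{Spread} and~\ref{Sqread}, and the shape of the tuples written by $w$ given by Observation~\ref{Svk}.

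First I would unfold the code of $\Ru()$: when a non-{\ml} reader $r \in \{p,q\}$ executes $\Ru()$, it invokes $\ru_r()$ in line~\ref{Scallru}; since the operation in question is $\Ru(u)$ --- that is, it returns the value $u$ and not $\bot$ --- the call to $\ru_r()$ must return some tuple of the form $\langle j,u \rangle$, and $\Ru()$ then returns the second component $u$ of that tuple.

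Next I would case-split on the identity of $r$. If $r=p$, then since $p$ is not {\ml}, Lemma~\ref{Spread} yields a $k \ge 0$ such that the tuple returned by $\ru_p()$ equals $v_k$. If $r=q$, then since $q$ is not {\ml}, Lemma~\ref{Sqread} yields a $k \ge 0$ such that the tuple returned by $\ru_q()$ equals $v_k$. In either case the tuple returned by $\ru_r()$ is $v_k$ for some $k \ge 0$. Finally, by Observation~\ref{Svk}, $v_k = \langle k,u_k \rangle$, so the tuple returned by $\ru_r()$ is $\langle k,u_k \rangle$; hence $\Ru(u)$ returns $u = u_k$, as claimed.

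There is essentially no obstacle here: this is a bookkeeping step that simply chains together earlier results. The only points that warrant a moment of care are (i) that the hypothesis ``$r$ is not {\ml}'' is exactly what is needed to invoke the characterization Lemmas~\ref{Spread} and~\ref{Sqread}, and (ii) that writing ``$\Ru(u)$'' implicitly excludes the $\bot$-returning branch of $\Ru()$, so $\ru_r()$ must indeed have returned an actual tuple $\langle j,u \rangle$ rather than failing.
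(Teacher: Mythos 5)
Your proof is correct and matches the paper's justification exactly: the paper derives this observation directly from Observation~\ref{Svk}, Lemmas~\ref{Spread} and~\ref{Sqread}, and the code of $\Ru()$, which is precisely the chain of reasoning you spell out.
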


\begin{observation}\label{SRr}
If $\Ru(u_k)$ is an operation by a non-{\ml} process $r\in\{p,q\}$,
then $r$ invokes and completes a $\ru_r(v_k)$ operation in $\Ru(u_k)$.
\end{observation}

\begin{observation}\label{SWw}
If $\Wu(u_k)$ is a completed operation by $w$,
	then $w$ invokes and completes a $\wu(v_k)$ operation in $\Wu(u_k)$.
\end{observation}

We now prove that the $\Wu(-)$ and $\Ru(-)$ operations satisfy Property~1 of Definition~\ref{LinearizableByz}.

\begin{lemma}\label{SCp1} \emph{[\textbf{Property 1}: Reading a ``current'' value]}

If $\Ru(u)$ is a read operation by a non-{\ml} process $r\in\{p,q\}$
	then:
\begin{compactitem}	
	\item there is a $\Wu(u)$ operation that immediately precedes $\Ru(u)$ or is concurrent with $\Ru(u)$, or	
	\item $u = u_0$ and no $\Wu(-)$ operation precedes $\Ru(u)$.
\end{compactitem}
\end{lemma}

\begin{proof}
Let $\Ru(u)$ be any read operation by a non-{\ml} process $r\in\{ p,q \}$.
By Observation~\ref{SintegrityR},
	$u=u_k$ for some $k\ge 0$.
There are two cases:
	\begin{enumerate}[(1)]
	\item $k=0$.
		Suppose, for contradiction, that a $\Wu(u_i)$ operation precedes $\Ru_r(u_0)$.
		Note that $i \ge 1$. 
		By Observations~\ref{SWw} and~\ref{SRr}, 
		a $\wu(v_i)$ operation precedes a $\ru_r(v_0)$ operation.
		Since process $r$ is not {\ml},
			by Lemma~\ref{Sp1r},
		 there is no $\wu(-)$ operation that precedes $\ru_r(v_0)$ --- a contradiction.

	\item $k>0$.
		By Observation~\ref{SRr},
		$r$ invokes and completes a $\ru_r(v_k)$ operation in $\Ru(u_k)$.
		Since $k>0$,
		 by Lemma~\ref{Sp1r},
		 there is a $\wu(v_{k})$ operation that immediately precedes $\ru_r(v_k)$ or is concurrent with $\ru_r(v_k)$.
		 Let $\Wu(u_{k})$ be the operation in which $w$ invokes the $\wu(v_{k})$ operation.
		Since $\wu(v_k)$ immediately precedes $\ru_r(v_k)$ or is concurrent with $\ru_r(v_k)$,
		the $\Wu(u_{k})$ operation immediately precedes $\Ru(u_k)$ or is concurrent with $\Ru(u_k)$.
	\end{enumerate}
\end{proof}

\textsc{Proof of linearizability Property 2.}
We now prove that the write and read operations of the register that $\Iup{2}$ implements satisfy Property~2 of Definition~\ref{LinearizableByz},
i.e., we prove that there are no ``new-old'' inversions in the values that processes read.
 To do so,
 	we first prove this for the writes and reads of the lower-level procedures $\wu()$, $\ru_p()$,
	and $\ru_q()$,
 	and then prove it for the writes and reads of the high-level procedures $\Wu()$ and $\Ru()$ (Lemma~\ref{SCp2}).

We first show that there are no ``new-old'' inversions in the consecutive reads of process $p$.

\begin{lemma}\label{Smonopread}
Suppose $p$ is not {\ml}.
If $\ru_p(v_k)$ and $\ru_p(v_{k'})$ are read operations by $p$,
	and $\ru_p(v_k)$ precedes $\ru_p(v_{k'})$, then \mbox{$k \le k'$}.
\end{lemma}

\begin{proof}
Suppose $p$ is not {\ml}.
Let $\ru_p(v_k)$ and $\ru_p(v_{k'})$ be read operations by $p$ such that $\ru_p(v_k)$ precedes $\ru_p(v_{k'})$.
By Lemma~\ref{Spread}, the following occurs:

\begin{compactenum}
	\item $p$ reads $R_{wp} = (\done, v_{k})$ \ilns{SpCommit} $\ru_p(v_k)$,
	or
	\item $p$ reads $R_{wp} =(\ready, v_k, v_{k+1})$ \ilns{SpPre} $\ru_p(v_k)$
\end{compactenum}

\emph{before} the following occurs:

\begin{compactenum}
	\item $p$ reads $R_{wp} = (\done, v_{k'})$ \ilns{SpCommit} $\ru_p(v_{k'})$,
	or
	\item $p$ reads $R_{wp} =(\ready, v_{k'}, v_{k'+1})$ \ilns{SpPre} $\ru_p(v_{k'})$.
\end{compactenum}

\smallskip
So there are four possible cases:

\begin{compactenum}
\item $p$ reads $R_{wp} = (\done, v_{k})$ \iln{pCommit} $\ru_p(v_k)$
	before
	$p$ reads $R_{wp} = (\done, v_{k'})$ \iln{pCommit} $\ru_p(v_{k'})$. 
 By Observation~\ref{Smonowp}(\ref{Smonowp2.5}),
 	 \mbox{$k \le k'$}.

\item $p$ reads $R_{wp} = (\done, v_{k})$ \iln{pCommit} $\ru_p(v_k)$
	before
	$p$ reads $R_{wp} =(\ready, v_{k'}, v_{k'+1})$ \iln{pPre} $\ru_p(v_{k'})$.
 By Observation~\ref{Smonowp}(\ref{Smonowp3}),
 	 $k < k'+1$. 
 	 So \mbox{$k \le k'$}.

\item $p$ reads $R_{wp} =(\ready, v_k, v_{k+1})$ \iln{pPre} $\ru_p(v_k)$
	before
	$p$ reads $R_{wp} = (\done, v_{k'})$ \iln{pCommit} $\ru_p(v_{k'})$.
 By Observation~\ref{Smonowp}(\ref{Smonowp1}),
 	 $k+1 \le k'$.
 	 So \mbox{$k \le k'$}.

\item $p$ reads $R_{wp} =(\ready, v_k, v_{k+1})$
	\iln{pPre} $\ru_p(v_k)$
	before $p$ reads $R_{wp} =(\ready, v_{k'}, v_{k'+1})$
	\iln{pPre} $\ru_p(v_{k'})$.
By Observation~\ref{Smonowp}(\ref{Smonowp2}), 
	$k+1 \le k'+1$. 
	So \mbox{$k \le k'$}. 
\end{compactenum}
\end{proof}

To prove that there are no ``new-old'' inversions between the reads of $p$ and $q$, and also between the reads of $q$ itself, we first make some straightforward observations that are clear from the code of $\Iup{2}$.
We first note that the counters of the tuples in the register $R_{pq}$ do not decrease.

\begin{observation}\label{Smonopq}
Suppose $p$ is not {\ml}.
	If $p$ writes $v_k$ in $R_{pq}$ before $p$ writes $v_{k'}$ in $R_{pq}$,
	then \mbox{$k \le k'$}.
\end{observation}

\begin{observation}\label{Smonopqwr}
Suppose $p$ and $q$ are not {\ml}.
	If $p$ writes $v_k$ in $R_{pq}$ before $q$ reads $R_{pq}=v_{k'}$,
	then \mbox{$k \le k'$}.
\end{observation}

The following observations relate the counters of the tuples that $w$ succesively writes in registers $R_{wp}$ 
and $R_{wq}$.
\begin{observation}\label{SwpwQw}
\begin{enumerate}[(1)]
\item \label{SwpwQw1} If $w$ writes $(\ready, v_{k-1}, v_k)$ in $R_{wp}$ before $w$ writes $(\ready, v_{k'-1}, v_{k'})$ in $R_{wq}$,
	then \mbox{$k \le k'$}.
\item \label{SwpwQw2} If $w$ writes $(\ready, v_{k-1}, v_k)$ in $R_{wp}$ before $w$ writes $(\done,v_{k'})$ in $R_{wq}$, 
	then \mbox{$k \le k'$}.

\item \label{SwpwQw1.5} If $w$ writes $(\done,v_k)$ in $R_{wp}$ before $w$ writes $(\ready, v_{k'-1}, v_{k'})$ in $R_{wq}$,
	then \mbox{$k < k'$}.

\item \label{SwpwQw3} If $w$ writes $(\done,v_k)$ in $R_{wp}$ before $w$ writes $(\done,v_{k'})$ in $R_{wq}$,
	then \mbox{$k \le k'$}.

\item \label{SwpwQw4} If $w$ writes $(\ready, v_{k-1}, v_k)$ in $R_{wq}$ before $w$ writes $(\ready, v_{k'-1}, v_{k'})$ in $R_{wp}$,
	then \mbox{$k < k'$}.

\item \label{SwpwQw4.5} If $w$ writes $(\ready, v_{k-1}, v_k)$ in $R_{wq}$ before $w$ writes $(\done,v_{k'})$ in $R_{wp}$,
	then \mbox{$k \le k'$}.

\item \label{SwpwQw6} If $w$ writes $(\done,v_k)$ in $R_{wq}$ before $w$ writes $(\ready, v_{k'-1}, v_{k'})$ in $R_{wp}$,
	then \mbox{$k < k'$}.

\item \label{SwpwQw5} If $w$ writes $(\done,v_k)$ in $R_{wq}$ before $w$ writes $(\done,v_{k'})$ in $R_{wp}$, 
	then \mbox{$k < k'$}.
\end{enumerate}
\end{observation}

The next observations relate the counters of the tuples that $p$ and $q$ read
	from $R_{wp}$ and $R_{wq}$, respectively.

\begin{observation}\label{SwpwQ}
Suppose $p$ and $q $ are not {\ml}.
\begin{enumerate}[(1)]
\item \label{SwpwQ1} If $p$ reads $R_{wp}=(\ready, v_{k-1}, v_k)$ before $q$ reads $R_{wq}=(\ready, v_{k'-1}, v_{k'})$,
	then \mbox{$k \le k'$}.
\item \label{SwpwQ2} If $p$ reads $R_{wp}=(\ready, v_{k-1}, v_k)$ before $q$ reads $R_{wq}=(\done,v_{k'})$, 
	then \mbox{$k-1 \le k'$}.

\item \label{SwpwQ1.5} If $p$ reads $R_{wp}=(\done,v_k)$ before $q$ reads $R_{wq}=(\ready, v_{k'-1}, v_{k'})$,
	then \mbox{$k \le k'$}.

\item \label{SwpwQ3} If $p$ reads $R_{wp}=(\done,v_k)$ before $q$ reads $R_{wq}=(\done,v_{k'})$,
	then \mbox{$k \le k'$}.

\item \label{SwpwQ4} If $q$ reads $R_{wq}=(\ready, v_{k-1}, v_k)$ before $p$ reads $R_{wp}=(\ready, v_{k'-1}, v_{k'})$,
	then \mbox{$k \le k'$}.

\item \label{SwpwQ4.5} If $q$ reads $R_{wq}=(\ready, v_{k-1}, v_k)$ before $p$ reads $R_{wp}=(\done,v_{k'})$,
	then \mbox{$k \le k'$}.

\item \label{SwpwQ6} If $q$ reads $R_{wq}=(\done,v_k)$ before $p$ reads $R_{wp}=(\ready, v_{k'-1}, v_{k'})$,
	then \mbox{$k+1 \le k'$.}

\item \label{SwpwQ5} If $q$ reads $R_{wq}=(\done,v_k)$ before $p$ reads $R_{wp}=(\done,v_{k'})$, 
	then \mbox{$k \le k'$}.
\end{enumerate}
\end{observation}

Now we prove that there is no ``new-old'' inversion for a read by $p$ that precedes a read by $q$.

\begin{lemma} \label{Snopq}
If $\ru_p(v_k)$ and $\ru_q(v_{k'})$ are read operations by non-{\ml} processes $p$ and $q$ respectively,
	and $\ru_p(v_k)$ precedes $\ru_q(v_{k'})$, then \mbox{$k \le k'$}.
\end{lemma}

\begin{proof}
Suppose processes $p$ and $q $ are not {\ml}.
Let $\ru_p(v_k)$ and $\ru_q(v_{k'})$ be read operations by $p$ and $q$ respectively,
	 such that $\ru_p(v_k)$ precedes $\ru_q(v_{k'})$.
By Lemmas~\ref{Spread} and~\ref{Sqread}, the following occurs:

\begin{compactenum}
	\item $p$ reads $R_{wp} = (\done, v_{k})$ \ilns{SpCommit} $\ru_p(v_k)$,
	or
	\item $p$ reads $R_{wp} =(\ready, v_k, v_{k+1})$ \ilns{SpPre} $\ru_p(v_k)$
\end{compactenum}

\emph{before} the following occurs:

\begin{compactenum}
	\item $q$ reads $R_{wq} = (\done, v_{k'})$ \ilns{SqCommit} $\ru_q(v_{k'})$, or
	
	\item $q$ reads $R_{wq} =(\ready, v_{k'-1}, v_{k'})$ \ilns{SqPre} $\ru_q(v_{k'})$, or

	\item $q$ reads $R_{wq} =(\ready, v_{k'}, v_{k'+1})$ \ilns{SqPre} $\ru_q(v_{k'})$.
\end{compactenum}

\smallskip
So there are six possible cases:

\begin{compactenum}

\item $p$ reads $R_{wp} = (\done, v_{k})$
	\iln{pCommit} $\ru_p(v_k)$
	before $q$ reads $R_{wq} = (\done, v_{k'})$
	\iln{qCommit} $\ru_q(v_{k'})$.
 By Observation~\ref{SwpwQ}(\ref{SwpwQ3}),
 	 \mbox{$k \le k'$}.

\item $p$ reads $R_{wp} = (\done, v_{k})$
	\iln{pCommit} $\ru_p(v_k)$
	before $q$ reads $R_{wq} =(\ready, v_{k'-1}, v_{k'})$
	\iln{qPre} $\ru_q(v_{k'})$.
 By Observation~\ref{SwpwQ}(\ref{SwpwQ1.5}),
 	 \mbox{$k \le k'$}.

\item $p$ reads $R_{wp} = (\done, v_{k})$
	\iln{pCommit} $\ru_p(v_k)$
	before $q$ reads $R_{wq} =(\ready, v_{k'}, v_{k'+1})$
	\iln{qPre} $\ru_q(v_{k'})$.
 By Observation~\ref{SwpwQ}(\ref{SwpwQ1.5}),
 	 $k \le k'+1$.
 
\begin{compactenum}[i.]
\item $k < k'+1$. Then $k\le k'$.

\item $k = k'+1$. 
	We now show that this case is impossible.
	Since $k = k'+1$,
	$q$ reads $R_{wq} =(\ready, v_{k-1}, v_{k})$ in line~\ref{SqPre} of $\ru_q(v_{k-1})$,
	 and $\ru_q(v_{k-1})$ returns in line~\ref{SqPrtlw}.
So $q$ read $R_{pq}$ in line~\ref{Sqfromp} of $\ru_q(v_{k-1})$ before $\ru_q(v_{k-1})$ returns in line~\ref{SqPrtlw}.
Thus, since $p$ reads $R_{wp} = (\done, v_{k})$ in $\ru_p(v_k)$,
	$p$ writes $v_{k}$ in $R_{pq}$ in line~\ref{Sptoq} of $\ru_p(v_k)$.
Since $\ru_p(v_k)$ precedes $\ru_q(v_{k-1})$, 
	 $p$ writes $v_{k}$ in $R_{pq}$ in $\ru_p(v_k)$ before $q$ reads $R_{pq}$ in line~\ref{Sqfromp} of $\ru_q(v_{k-1})$.
By Observation~\ref{Smonopqwr}, 
	$q$ reads $R_{pq}=v_{\ell}$,
	 for some $\ell \ge k$,
	 in line~\ref{Sqfromp} of $\ru_q(v_{k-1})$.
So $\ru_q(v_{k-1})$ returns $v_k$ in line~\ref{SqPrt2} --- a contradiction.
\end{compactenum}

\item $p$ reads $R_{wp} =(\ready, v_{k}, v_{k+1})$
	\iln{pPre} $\ru_p(v_k)$
	before $q$ reads $R_{wq} = (\done, v_{k'})$
	\iln{qCommit} $\ru_q(v_{k'})$.
 By Observation~\ref{SwpwQ}(\ref{SwpwQ2}), $(k+1)-1 \le k'$. So \mbox{$k \le k'$}.

\item $p$ reads $R_{wp} =(\ready, v_{k}, v_{k+1})$
	\iln{pPre} $\ru_p(v_k)$
	before $q$ reads $R_{wq} =(\ready, v_{k'-1}, v_{k'})$
	\iln{qPre} $\ru_q(v_{k'})$.
By Observation~\ref{SwpwQ}(\ref{SwpwQ1}), $k+1 \le k'$. So \mbox{$k \le k'$}.

\item $p$ reads $R_{wp} =(\ready, v_{k}, v_{k+1})$
	\iln{pPre} $\ru_p(v_k)$
	before $q$ reads $R_{wq} =(\ready, v_{k'}, v_{k'+1})$
	\iln{qPre} $\ru_q(v_{k'})$.
By Observation~\ref{SwpwQ}(\ref{SwpwQ1}), $k+1 \le k'+1$. So \mbox{$k \le k'$}.
\end{compactenum}
\end{proof}

Now we prove that there is no ``new-old'' inversion for a read by $q$ that precedes a read by $p$.

\begin{lemma}\label{Snoqp}
	If $\ru_q(v_k)$ and $\ru_p(v_{k'})$ are read operations by non-{\ml} processes $q$ and $p$ respectively,
	and $\ru_q(v_k)$ precedes $\ru_p(v_{k'})$, then \mbox{$k \le k'$}.
\end{lemma}

\begin{proof}
Suppose processes $q$ and $p$ are not {\ml}.
Let $\ru_q(v_k)$ and $\ru_p(v_{k'})$ be two read operations by $q$ and $p$ respectively,
	such that $\ru_q(v_k)$ precedes $\ru_p(v_{k'})$
By Lemmas~\ref{Spread} and~\ref{Sqread}, the following occurs:
\begin{compactenum}
	\item $q$ reads $R_{wq} = (\done, v_{k})$ \ilns{SqCommit} $\ru_q(v_k)$,
	or
	
	\item $q$ reads $R_{wq} =(\ready, v_{k-1}, v_{k})$ \ilns{SqPre} $\ru_q(v_k)$,
	or

	\item $q$ reads $R_{wq} =(\ready, v_{k}, v_{k+1})$ \ilns{SqPre} $\ru_q(v_k)$
\end{compactenum}

\emph{before} the following occurs:

\begin{compactenum}
	\item $p$ reads $R_{wp} = (\done, v_{k'})$ \ilns{SpCommit} $\ru_p(v_{k'})$,
	or
	\item $p$ reads $R_{wp} =(\ready, v_{k'}, v_{k'+1})$ \ilns{SpPre} $\ru_p(v_{k'})$.
\end{compactenum}

\smallskip
So there are six possible cases:

\begin{compactenum}

\item $q$ reads $R_{wq} = (\done, v_{k})$ \iln{qCommit} $\ru_q(v_k)$
	before $p$ reads $R_{wp} = (\done, v_{k'})$ \iln{pCommit} $\ru_p(v_{k'})$.
 By Observation~\ref{SwpwQ}(\ref{SwpwQ5}),
 	 \mbox{$k \le k'$}.

 \item $q$ reads $R_{wq} = (\done, v_{k})$ \iln{qCommit} $\ru_q(v_k)$
 	before $p$ reads $R_{wp} =(\ready, v_{k'}, v_{k'+1})$ \iln{pPre} $\ru_p(v_{k'})$.
 By Observation~\ref{SwpwQ}(\ref{SwpwQ6}),
 	 $k+1 \le k'+1$. So \mbox{$k \le k'$}.

 \item $q$ reads $R_{wq} =(\ready, v_{k-1}, v_{k})$ \iln{qPre} $\ru_q(v_k)$
 	before $p$ reads $R_{wp} = (\done, v_{k'})$ \iln{pCommit} $\ru_p(v_{k'})$.
 By Observation~\ref{SwpwQ}(\ref{SwpwQ4.5}),
  \mbox{$k \le k'$}.

\item $q$ reads $R_{wq} =(\ready, v_{k-1}, v_{k})$ \iln{qPre} $\ru_q(v_k)$
	before $p$ reads $R_{wp} =(\ready, v_{k'}, v_{k'+1})$ \iln{pPre} $\ru_p(v_{k'})$.
By Observation~\ref{SwpwQ}(\ref{SwpwQ4}), 
	$k \le k'+1$. 

\begin{compactenum}[i.]
\item $k < k'+1$. Then $k\le k'$.

\item $k = k'+1$.
	We now show that this case is impossible.
	Since $k = k'+1$,
	 $p$ reads $R_{wp} =(\ready, v_{k-1}, v_{k})$ in $\ru_p(v_{k-1})$,
	 and $\ru_p(v_{k-1})$ returns in line~\ref{SpPrt}.
	 
Note that $q$ reads $R_{pq}$ in line~\ref{Sqfromp} of $\ru_q(v_k)$.

\begin{claim}\label{above}
Process $p$ writes $v_{\ell}$ into $R_{pq}$ with some $\ell \ge k$
	before $q$ reads $R_{pq}$ in $\ru_q(v_k)$.
\end{claim}

\begin{proof}
Since $q$ reads $R_{wq} =(\ready, v_{k-1}, v_{k})$ in line~\ref{SqPre} of $\ru_q(v_k)$,
	$\ru_q(v_k)$
	 returns 
	in line~\ref{SqPrt2} or~\ref{SqPrt3}.
So there are two~cases:

	\begin{compactenum}[a.]
		 
	\item $\ru_q(v_k)$
	 returns
		in line~\ref{SqPrt2}.
	Then $q$ reads $R_{pq}=v_{\ell}$ for some $\ell \ge k$ in line~\ref{Sqfromp} of $\ru_q(v_k)$.
	Thus, since $k\ge 1$ and $R_{pq}$ is initialized to $v_0$,
	$p$ wrote $v_{\ell}$ into $R_{pq}$
	before $q$ reads $R_{pq}$ in $\ru_q(v_k)$.

	\item $\ru_q(v_k)$
	 returns
		in line~\ref{SqPrt3}.
Then $q$ reads $\lr=v_{\ell'}$ for some $\ell' \ge k$ in line~\ref{Srtoq} of $\ru_q(v_k)$.
Thus, since $k\ge 1$ and $\lr$ is initialized to $v_0$,
	$q$ wrote $v_{\ell'}$ into $\lr$ in line~\ref{Sqtor} of some $\ru_q(v_{\ell'})$ operation that precedes $\ru_q(v_k)$.
So $q$ read $R_{pq}=v_{\ell}$ for some $\ell \ge \ell' \ge k$ in line~\ref{Sqfromp} of this $\ru_q(v_{\ell'})$ operation
	that precedes $\ru_q(v_k)$.
Thus $p$ wrote $v_{\ell}$ into $R_{pq}$
	before $q$ reads $R_{pq}$ in $\ru_q(v_k)$.
	\end{compactenum}
\end{proof}

From the code of $\ru_p()$ it is clear that if $p$ writes $v_{\ell}$ to $R_{pq}$ (this can occur only in line~\ref{Sptoq} of some $\ru_p(-)$ operation)
	then $p$ previously reads $R_{wp} =(\done, v_{\ell})$ (in line~\ref{SpCommit} of that $\ru_p(-)$ operation).
Thus, by Claim~\ref{above}, $p$ reads $R_{wp} =(\done, v_{\ell})$ with $\ell \ge k$
	before $q$ reads $R_{pq}$ in $\ru_q(v_k)$.
Since $\ru_q(v_k)$
	 precedes $\ru_p(v_{k-1})$, 
	$p$ reads $R_{wp} =(\done, v_{\ell})$ before $p$ reads $R_{wp} =(\ready, v_{k-1}, v_{k})$ in $\ru_p(v_{k-1})$.
By Observation~\ref{SwpwQ}(\ref{SwpwQ6}), $\ell < k$ --- a contradiction.

\end{compactenum}

\item $q$ reads $R_{wq} =(\ready, v_{k}, v_{k+1})$ \iln{qPre} $\ru_q(v_k)$
	before $p$ reads $R_{wp} = (\done, v_{k'})$ \iln{pCommit} $\ru_p(v_{k'})$.
 By Observation~\ref{SwpwQ}(\ref{SwpwQ4.5}), 
 	$k+1 \le k'$. So \mbox{$k \le k'$}.

\item $q$ reads $R_{wq} =(\ready, v_{k}, v_{k+1})$ \iln{qPre} $\ru_q(v_k)$
	before $p$ reads $R_{wp} =(\ready, v_{k'}, v_{k'+1})$ \iln{pPre} $\ru_p(v_{k'})$.
By Observation~\ref{SwpwQ}(\ref{SwpwQ4}), $k+1 \le k'+1$. So \mbox{$k \le k'$}.
\end{compactenum}
\end{proof}

Finally, we show that there are no ``new-old'' inversions in the successive reads of $q$.

To do so, we first observe that the counters of the tuples in the variable $\lr$ of $q$ do not decrease.
To see this, note that if $q$ writes $v_k$ in $\lr$ (this occurs in line~\ref{Sqtor} of $\ru_q(v_k)$)
	then $q$ previously read
	$R_{wq}=(\ready,-,v_k \rangle)$ (in line~\ref{SqPre} of $\ru_q(v_k)$).	
So, by Observation~\ref{Smonowp}(\ref{Smonowp2}), we have:

\begin{observation}\label{Smonorq}
Suppose $q$ is not {\ml}.
	If $q$ writes $v_k$ in $\lr$ before $q$ writes $v_{k'}$ in $\lr$,
	then \mbox{$k \le k'$}.
\end{observation}

\begin{observation}\label{Smonorqwr}
Suppose $q$ is not {\ml}.
	If $q$ writes $v_k$ in $\lr$ before $q$ reads $\lr=v_{k'}$,
	then \mbox{$k \le k'$}.
\end{observation}

\begin{observation}\label{Smonoqrr}
Suppose $q$ is not {\ml}.
	If $q$ reads $\lr=v_k$ before $q$ reads $\lr=v_{k'}$,
	then \mbox{$k \le k'$}.
\end{observation}

\begin{lemma} \label{Snorq}
	If $\ru_q(v_k)$ and $\ru_{q}(v_{k'})$ are read operations by non-{\ml} process $q$,
	and $\ru_q(v_k)$ precedes $\ru_{q}(v_{k'})$, then \mbox{$k \le k'$}.
\end{lemma}

\begin{proof}
Suppose process $q$ is not {\ml}.
Let $\ru_q(v_k)$ and $\ru_{q}(v_{k'})$ be read operations by~$q$,
	 such that $\ru_q(v_k)$ precedes $\ru_{q}(v_{k'})$.
By Lemma~\ref{Sqread}, the following occurs:
\begin{compactenum}
	\item $q$ reads $R_{wq} = (\done, v_{k})$ \ilns{SqCommit} $\ru_q(v_k)$, or
	
	\item $q$ reads $R_{wq} =(\ready, v_{k-1}, v_{k})$ \ilns{SqPre} $\ru_q(v_k)$, or

	\item $q$ reads $R_{wq} =(\ready, v_{k}, v_{k+1})$ \ilns{SqPre} $\ru_q(v_k)$
\end{compactenum}

\emph{before} the following occurs:

\begin{compactenum}
	\item $q$ reads $R_{wq} = (\done, v_{k'})$ \ilns{SqCommit} $\ru_{q}(v_{k'})$, or
	
	\item $q$ reads $R_{wq} =(\ready, v_{k'-1}, v_{k'})$ \ilns{SqPre} $\ru_{q}(v_{k'})$, or

	\item $q$ reads $R_{wq} =(\ready, v_{k'}, v_{k'+1})$ \ilns{SqPre} $\ru_{q}(v_{k'})$.
\end{compactenum}

\smallskip
So there are nine possible cases:

\begin{compactenum}

\item $q$ reads $R_{wq} = (\done, v_{k})$ \iln{qCommit} $\ru_q(v_k)$
	before
	$q$ reads $R_{wq} = (\done, v_{k'})$ \iln{qCommit} $\ru_{q}(v_{k'})$.
  By Observation~\ref{Smonowp}(\ref{Smonowp2.5}),
  	 \mbox{$k \le k'$}.

 \item $q$ reads $R_{wq} = (\done, v_{k})$ \iln{qCommit} $\ru_q(v_k)$
 	before
	$q$ reads $R_{wq} =(\ready, v_{k'-1}, v_{k'})$ \iln{qPre} $\ru_{q}(v_{k'})$.
  By Observation~\ref{Smonowp}(\ref{Smonowp3}),
  	 $k < k'$. So \mbox{$k \le k'$}.

 \item $q$ reads $R_{wq} = (\done, v_{k})$ \iln{qCommit} $\ru_q(v_k)$
 	before
	$q$ reads $R_{wq} =(\ready, v_{k'}, v_{k'+1})$ \iln{qPre} $\ru_{q}(v_{k'})$.
 By Observation~\ref{Smonowp}(\ref{Smonowp3}),
 	 $k < k'+1$. So \mbox{$k \le k'$}.

 \item $q$ reads $R_{wq} =(\ready, v_{k-1}, v_{k})$ \iln{qPre} $\ru_q(v_k)$
 	before
	$q$ reads $R_{wq} = (\done, v_{k'})$ \iln{qCommit} $\ru_{q}(v_{k'})$.
 By Observation~\ref{Smonowp}(\ref{Smonowp1}),
 	 \mbox{$k \le k'$}.

\item $q$ reads $R_{wq} =(\ready, v_{k-1}, v_{k})$ \iln{qPre} $\ru_q(v_k)$
	before
	$q$ reads $R_{wq} =(\ready, v_{k'-1}, v_{k'})$ \iln{qPre} $\ru_{q}(v_{k'})$.
By Observation~\ref{Smonowp}(\ref{Smonowp2}),
	 \mbox{$k \le k'$}.

 \item $q$ reads $R_{wq} =(\ready, v_{k-1}, v_{k})$ \iln{qPre} $\ru_q(v_k)$
 	before
	$q$ reads $R_{wq} =(\ready, v_{k'}, v_{k'+1})$ \iln{qPre} $\ru_{q}(v_{k'})$.
 By Observation~\ref{Smonowp}(\ref{Smonowp2}),
 	 $k \le k'+1$.
	\begin{compactenum}[i.]
	\item $k < k'+1$. Then $k\le k'$.

	\item $k = k'+1$.
	We now show that this case is impossible.
	Since $k = k'+1$, 
		$q$ reads $R_{wq} =(\ready, v_{k-1}, v_{k})$ in $\ru_{q}(v_{k-1})$,
	and 
	$\ru_{q}(v_{k-1})$ returns in line~\ref{SqPrtlw}.
	Note that $q$ reads
	$\lr$ in line~\ref{Srtoq} of $\ru_{q}(v_{k-1 })$ before
	$\ru_{q}(v_{k-1})$ returns in line~\ref{SqPrtlw}.
\begin{claim}\label{above2}
Process $q$ reads $\lr=v_{\ell}$ for some $\ell \ge k$ in line~\ref{Srtoq} of $\ru_{q}(v_{k-1 })$.
\end{claim}

\begin{proof}

Since $q$ reads $R_{wq} =(\ready, v_{k-1}, v_{k})$ in $\ru_q(v_k)$,
	$\ru_q(v_k)$ returns
	in line~\ref{SqPrt2} or~\ref{SqPrt3}.
There are two~cases:

		\begin{compactenum}[a.]

		\item $\ru_q(v_k)$ returns in line~\ref{SqPrt2} of $\ru_q(v_k)$.
		So $q$ writes $v_{k}$ in $\lr$ in line~\ref{Sqtor} of $\ru_q(v_k)$.
Since $\ru_q(v_k)$ precedes $\ru_{q}(v_{k-1})$,
	by Observations~\ref{Smonorqwr},
		when $q$ reads 
	$\lr$ in line~\ref{Srtoq} of $\ru_{q}(v_{k-1 })$,
	$q$ reads $\lr=v_{\ell}$ for some $\ell \ge k$.
	
		\item $\ru_q(v_k)$ returns in line~\ref{SqPrt3} of $\ru_q(v_k)$.
		So $q$ reads $\lr=v_{\ell'}$ with some $\ell' \ge k$ in line~\ref{Srtoq} of $\ru_q(v_k)$.
Since $\ru_q(v_k)$ precedes $\ru_{q}(v_{k-1})$,
	by Observations~\ref{Smonoqrr},
		when $q$ reads 
	$\lr$ in line~\ref{Srtoq} of $\ru_{q}(v_{k-1 })$,
	$q$ reads $\lr=v_{\ell}$ for some $\ell \ge \ell' \ge k$.
		\end{compactenum}
\end{proof}

By Claim~\ref{above2} and the code of $\ru_{q}()$, it is clear that $\ru_{q}(v_{k-1 })$ returns $v_k$ in line~\ref{SqPrt3} --- a contradiction.

	\end{compactenum}

\item $q$ reads $R_{wq} =(\ready, v_{k}, v_{k+1})$ \iln{qPre} $\ru_q(v_k)$
	before
	$q$ reads $R_{wq} = (\done, v_{k'})$ \iln{qCommit} $\ru_{q}(v_{k'})$.
 By Observation~\ref{Smonowp}(\ref{Smonowp1}), $k+1 \le k'$. So \mbox{$k \le k'$}.

\item $q$ reads $R_{wq} =(\ready, v_{k}, v_{k+1})$ \iln{qPre} $\ru_q(v_k)$
	before
	$q$ reads $R_{wq} =(\ready, v_{k'-1}, v_{k'})$ \iln{qPre} $\ru_{q}(v_{k'})$.
By Observation~\ref{Smonowp}(\ref{Smonowp2}), $k+1 \le k'$. So \mbox{$k \le k'$}.

\item $q$ reads $R_{wq} =(\ready, v_{k}, v_{k+1})$ \iln{qPre} $\ru_q(v_k)$
	before
	$q$ reads $R_{wq} =(\ready, v_{k'}, v_{k'+1})$ \iln{qPre} $\ru_{q}(v_{k'})$.
By Observation~\ref{Smonowp}(\ref{Smonowp2}), $k+1 \le k'+1$. So \mbox{$k \le k'$}.
\end{compactenum}
\end{proof}

We now prove that the writes and reads of the lower-level procedures $\wu()$, $\ru_p()$, and  $\ru_q()$ satisfy
	Property~2 of Definition~\ref{LinearizableByz}.

\begin{lemma}\label{Sp2r}
Let $\ru_r(v_k)$ and $\ru_{r'}(v_{k'})$ be any read operations by non-{\ml} processes $r$ and $r'$
	in $\{p,q\}$.
If $\ru_r(v_k)$ precedes $\ru_{r'}(v_{k'})$ then \mbox{$k \le k'$}.
\end{lemma}

\begin{proof}
Immediate from Lemmas~\ref{Smonopread},~\ref{Snopq},~\ref{Snoqp},~\ref{Snorq}.
\end{proof}

Finally, we prove that the write and read operations
	of the high-level procedures $\Wu()$ and $\Ru()$
	satisfy Property~2 of Definition~\ref{LinearizableByz}.

\begin{lemma}\label{SCp2} \emph{[\textbf{Property 2}: No ``new-old'' inversion]}

Let $\Ru(u_k)$ and $\Ru(u_{k'})$ be any read operations by non-{\ml} processes in $\{p,q\}$.
If $\Ru(u_k)$ precedes $\Ru(u_{k'})$ then \mbox{$k \le k'$}.
\end{lemma}

\begin{proof}
Let $\Ru(u_k)$ and $\Ru(u_{k'})$ be any read operations by non-{\ml} processes $r$ and $r'$ in $\{p,q\}$.
Suppose $\Ru(u_k)$ precedes $\Ru(u_{k'})$.
By Observation~\ref{SRr},
 in the $\Ru(u_k)$ and $\Ru(u_{k'})$ operations,
	processes $r$ and $r'$ invoke and complete a $\ru_r(v_k)$ and $\ru_{r'}(v_{k'})$ operation, respectively.
Since $\Ru(u_k)$ precedes $\Ru(u_{k'})$,
	$\ru_r(v_k)$ precedes $\ru_{r'}(v_{k'})$.
By Lemma~\ref{Sp2r},
	\mbox{$k \le k'$}.
\end{proof}

By Lemmas~\ref{SCp1} and~\ref{SCp2}, the $\Wu(-)$ and $\Ru(-)$ operations of the register implementation~$\Iup{2}$
 	satisfy the linearizability Properties~1 and~2 of Definition~\ref{LinearizableByz}, so $\Iup{2}$ is linearizable.
By Observation~\ref{Obs-algo2iswf}, $\Iup{2}$ is also bounded wait-free.
Thus:

\Stwofromone*

% !TEX root = main.tex

% !TEX root = main.tex
\section{Bounded Termination: impossibility proof}\label{bounded-termination-impo}
We now prove that in a system with $n+1$ Byzantine processes,
	there is no linearizable implementation of a $\Reg{1}{n}$ from atomic $\Reg{1}{n-1}$s
	that satisfies the Bounded Termination property
	even if we assume that \emph{only the readers can be faulty, and at most one of them can fail}.
More precisely:
	
\BoundedImpoResult*

\begin{proof}
Let $n \ge 3$.
Suppose, for contradiction, that there is an 
	implementation {\AWB}
	of a $\Reg{1}{n}$ $\REG$ from atomic $\Reg{1}{n-1}$s
	that is linearizable (i.e., it satisfies the Register Linearizabilty property)
	and satisfies the Bounded Termination property
	in a system where the writer $w$ of $\REG$
	is correct and at most one of the $n$ readers of $\REG$ can be {\ml}.

\begin{figure}[!t]
\vspace{-6mm} %%%%%%%%%%
\minipage{0.48\textwidth}
     \centering 
    \includegraphics[width=0.9\textwidth]{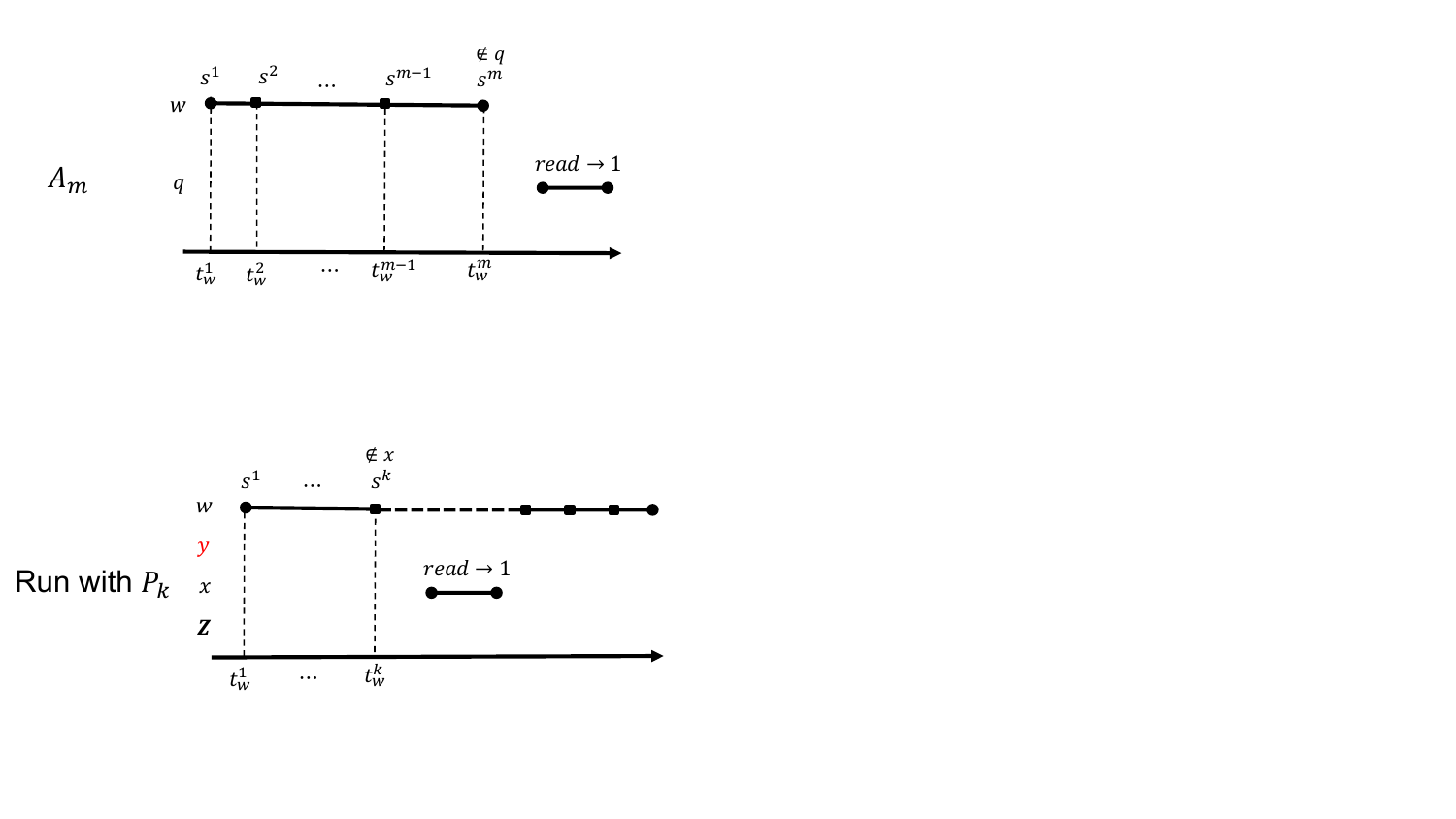}
    \caption{{\Run} $A_m$} 
    \label{Bounded-S}
\endminipage
\minipage{0.48\textwidth}
    \centering 
    \includegraphics[width=0.9\textwidth]{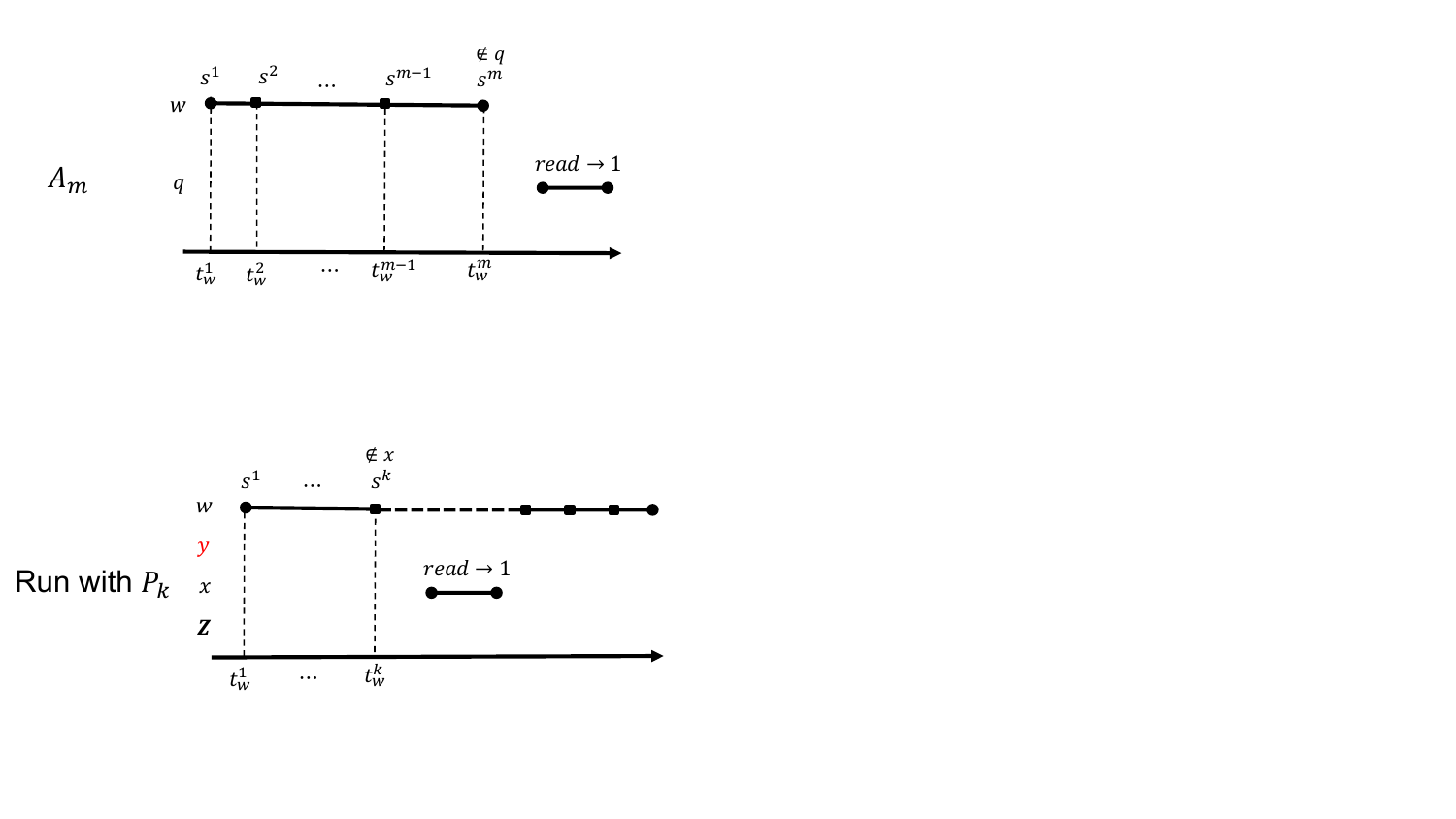}
    \caption{A {\run} with property $P_k$} 
    \label{Bounded-E}
\endminipage\hfill
\newline

\minipage{0.48\textwidth}
    \centering 
    \includegraphics[width=0.9\textwidth]{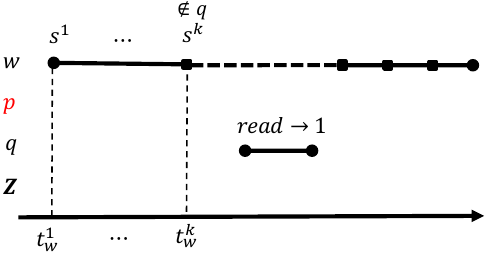}
    \caption{{\Run} $A_{k}$} 
    \label{Bounded-Ak}
\endminipage
\minipage{0.48\textwidth}
    \centering 
    \includegraphics[width=0.9\textwidth]{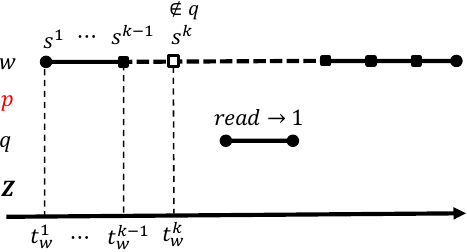}
    \caption{{\Run} $B_{k-1}$} 
    \label{Bounded-Bk-1}
\endminipage\hfill
\newline

\minipage{0.48\textwidth}%
    \centering 
    \includegraphics[width=0.9\textwidth]{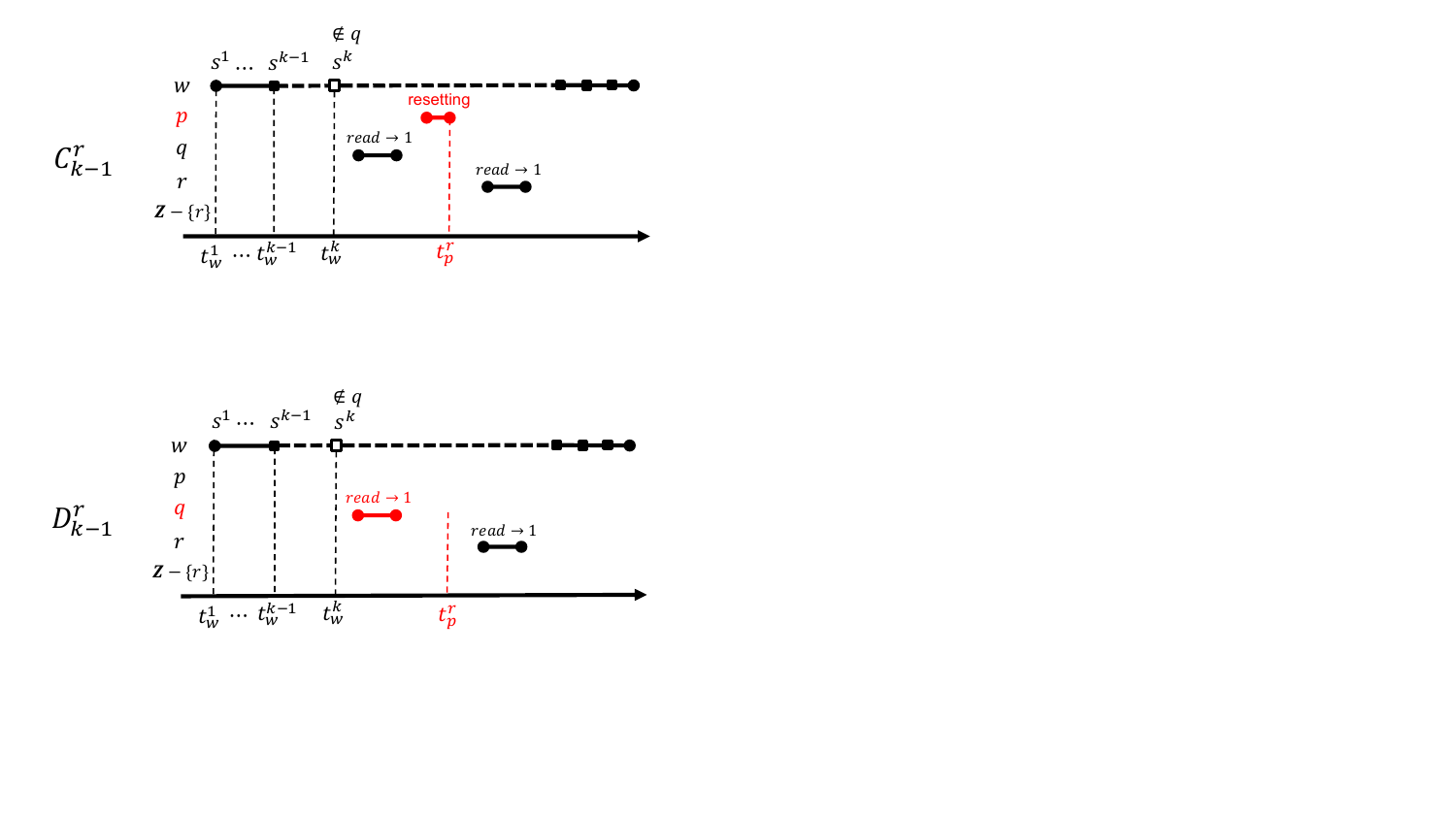}
    \caption{{\Run} $C_{k-1}^r$} 
    \label{Bounded-Ck-1}
\endminipage
\minipage{0.48\textwidth}
    \centering 
    \includegraphics[width=0.9\textwidth]{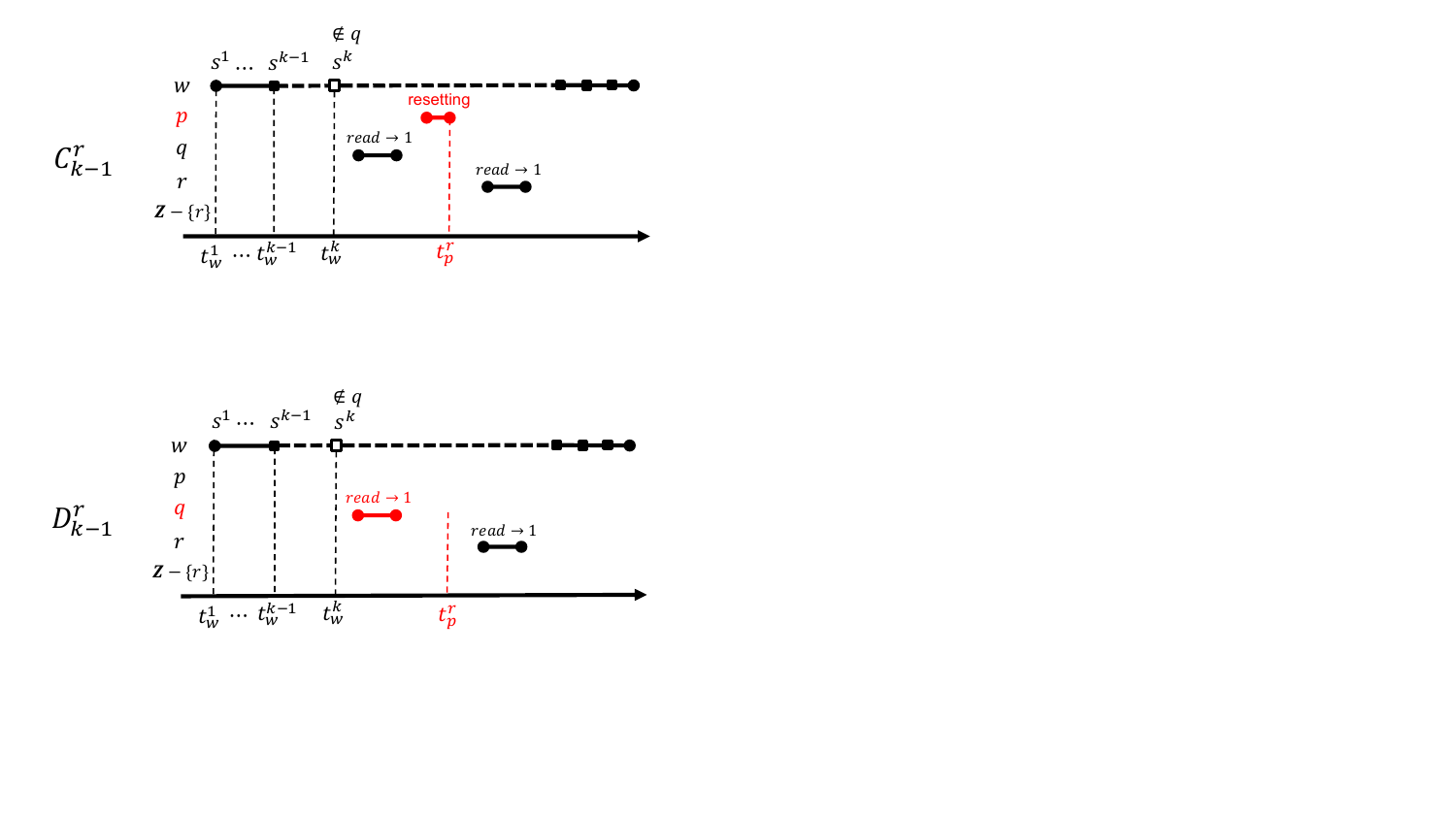}
    \caption{{\Run} $D_{k-1}^r$} 
    \label{Bounded-Dk-1}
\endminipage\hfill
\newline

\minipage{0.48\textwidth}%
    \centering 
    \includegraphics[width=0.9\textwidth]{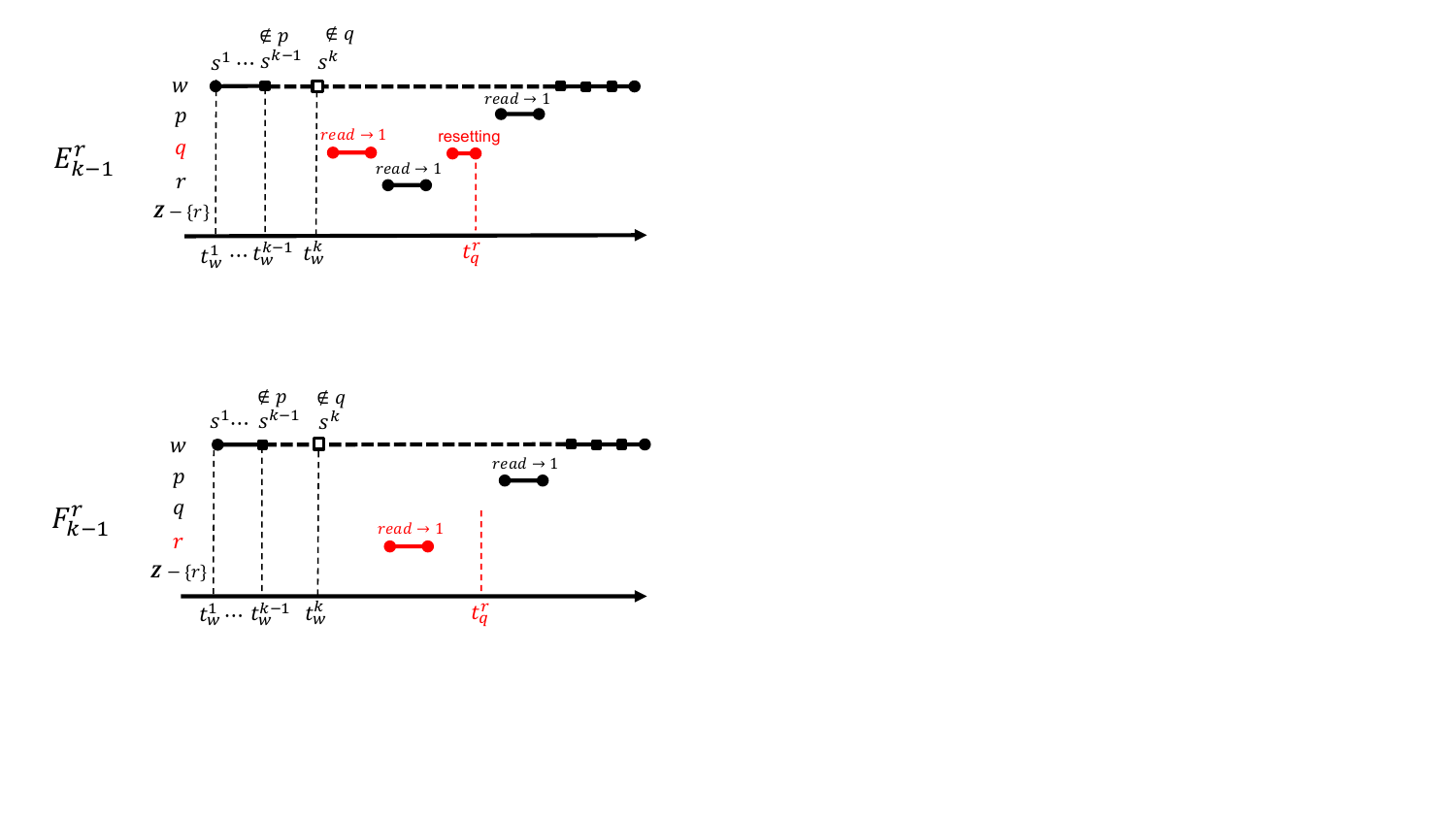}
    \caption{{\Run} $E_{k-1}^r$} 
    \label{Bounded-Ek-1}
\endminipage
\minipage{0.48\textwidth}%
    \centering 
    \includegraphics[width=0.9\textwidth]{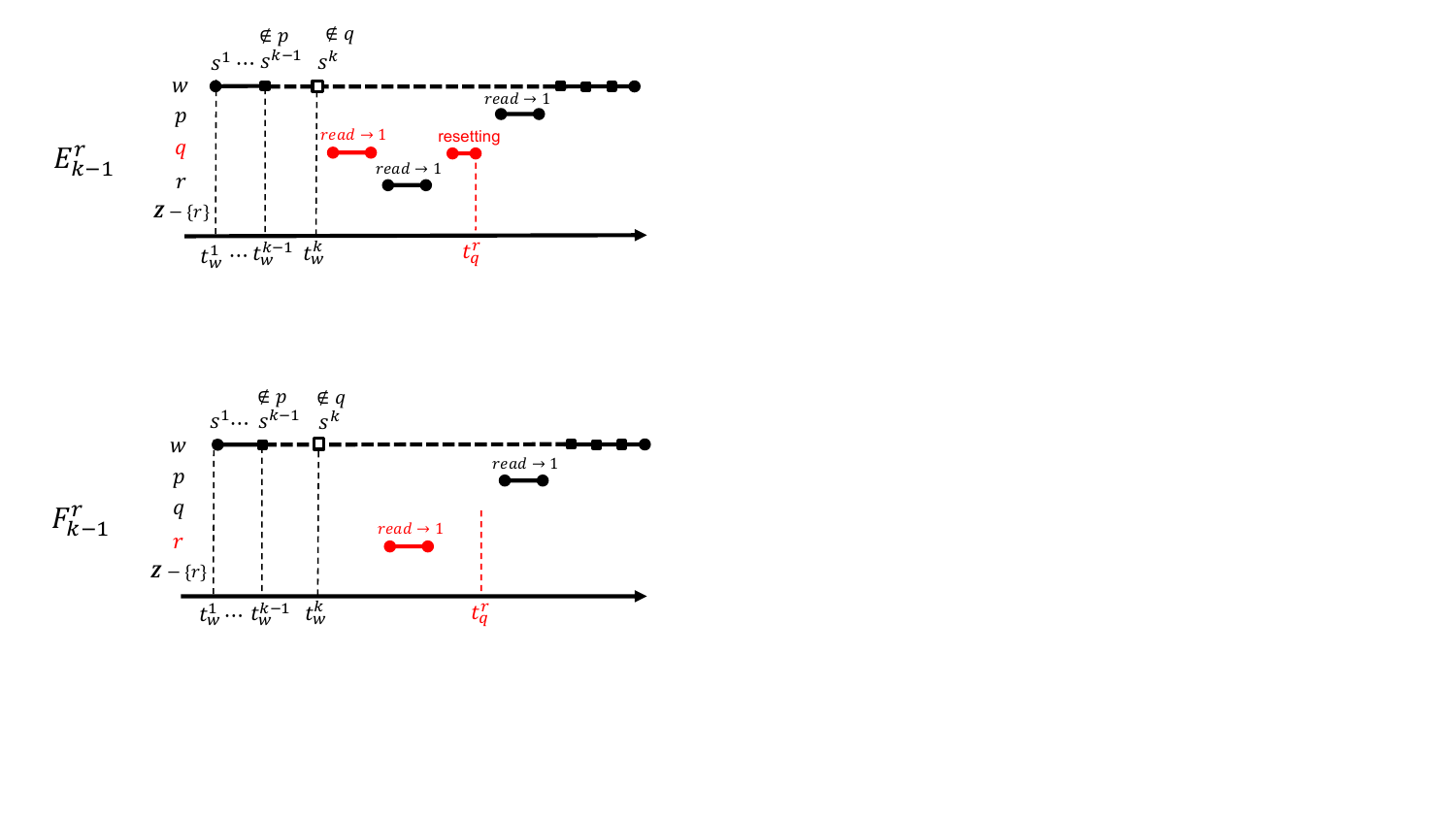}
    \caption{{\Run} $F_{k-1}^r$} 
    \label{Bounded-Fk-1}
\endminipage\hfill
\newline

\minipage{0.48\textwidth}%
    \centering 
    \includegraphics[width=0.9\textwidth]{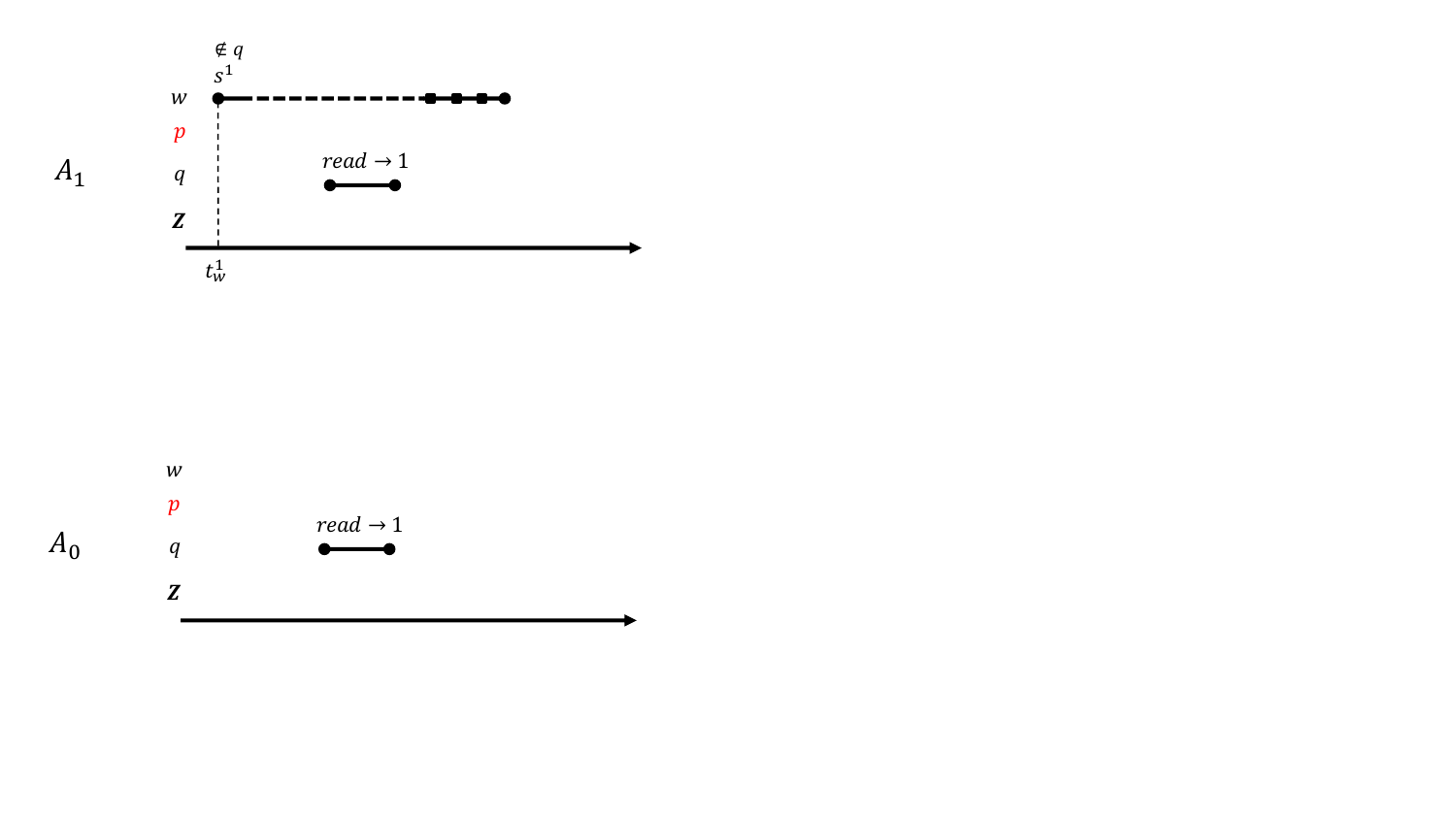}
    \caption{{\Run} $A_1$} 
    \label{Bounded-A1}
\endminipage
\minipage{0.48\textwidth}
    \centering 
    \includegraphics[width=0.9\textwidth]{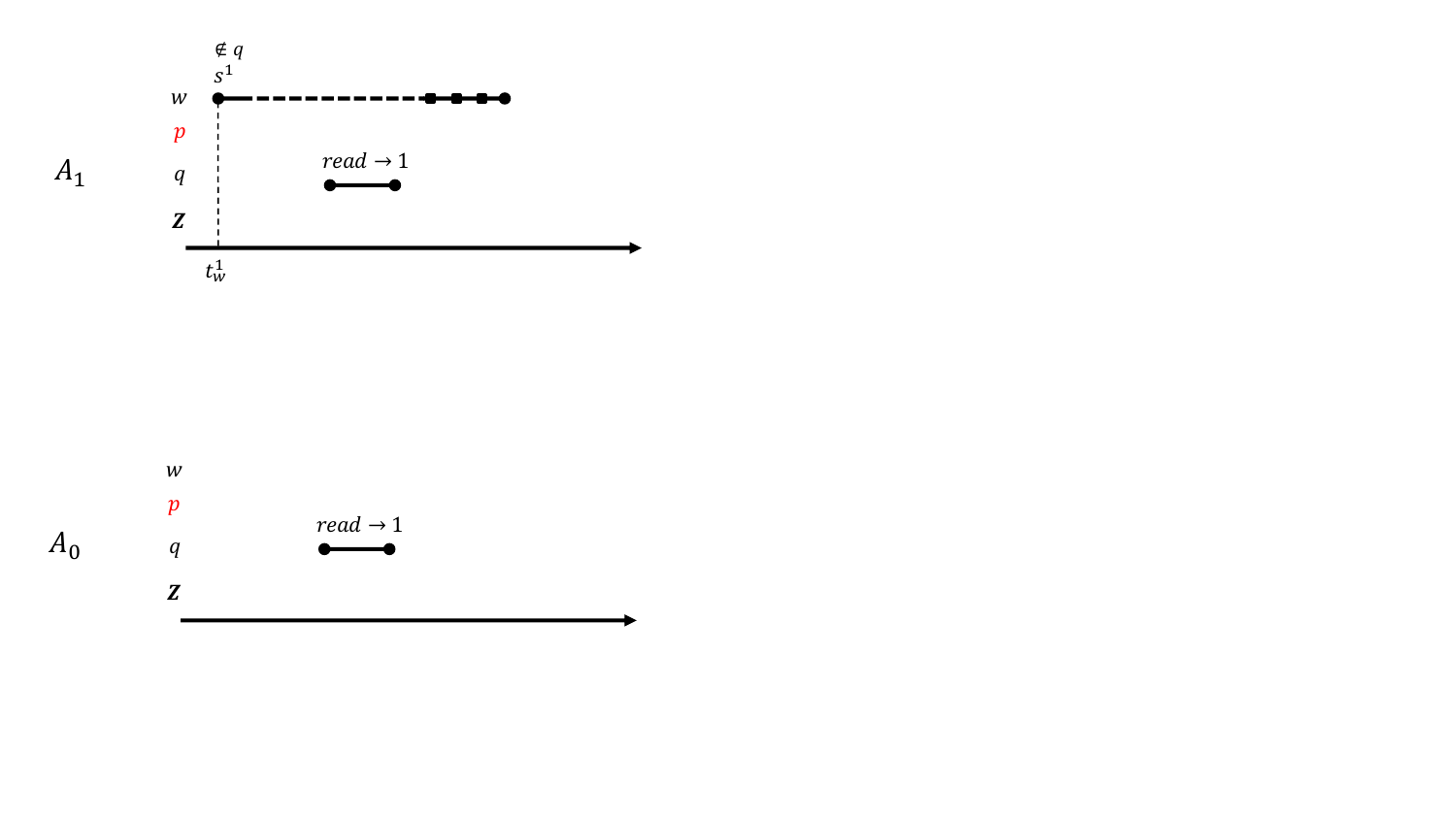}
    \caption{{\Run} $A_0$} 
    \label{Bounded-A0}
\endminipage\hfill
\end{figure}

We now construct a sequence of {\run}s of {\AWB} that leads to a contradiction.
In all these {\run}s, the initial value of the implemented $\REG$ is $0$,
	the writer $w$ invokes only one operation into~$\REG$, namely a write of $1$,
	and each reader reads $\REG$ at most once (i.e., $\REG$ is only a ``one-shot'' binary register).

In all these {\run}s: (a) the writer is correct and
	 (b) there is at most one {\ml} reader (the other $n-1$ readers~are~correct).
Thus, these runs of {\AWB}
	must satisfy the linearizability Properties 1 and 2 of
	Register Linearizability (Definition~\ref{LinearizableByz}),
	and Bounded Termination (Definition~\ref{Btermination}), i.e., 
	every correct reader must complete any read operation that it invokes in a bounded number of steps.

\begin{definition}\label{Bounded-invisible}
Let $s$ be any step that the writer $w$ takes when executing the implementation {\AWB} of $\REG$.
Step $s$ is \emph{invisible
	to a reader $p$} if $s$ is either a local step of $w$, or 
	the reading or the writing of an atomic $\Reg{1}{n-1}$ that is not readable by $p$.
\end{definition}

Since there are $n$ readers,
	and the registers that $w$ can write 
	are atomic $\Reg{1}{n-1}$s,
	every write by $w$ into one of these registers
	is invisible to one of the readers.
So: 

\begin{observation}\label{Bounded-invisibletoareader}
Let $s$ be any step that the writer $w$ takes when executing the implementation {\AWB} of $\REG$. 
Step $s$ is invisible to at least one of the $n$ readers.
\end{observation}

Let $A_m$ be the following {\run} of {\AWB} (see Figure~\ref{Bounded-S}):

\begin{compactitem}  
\item The writer $w$ and all the readers are correct.

\item The writer $w$ invokes an operation to write 1 on $\REG$.
	By the Bounded Termination property of {\AWB},
 	$w$ completes this operation.
	
During this write operation, 
	$w$ takes a sequence of steps $s^1,...,s^{m}$
	such that each $s^i$ is either a local step, or
	the reading or the writing of an atomic $\Reg{1}{n-1}$
	($s^0$ is the invocation step of the write operation, and $s^m$ is the response step of this operation).
	Let~$t_w^i$ be the time when step $s^i$ occurs.
	
\item After taking the step $s^{m}$ at time $t_w^m$, the writer $w$ stops taking steps
	(it has completed its write operation on $\REG$).
	
\item Let $q$ be a reader such that step $s^m$ is invisible to $q$
	(by Observation~\ref{Bounded-invisibletoareader}, this reader exists).

	 After the time $t_w^{m}$,
	correct reader $q$ invokes a read operation on $\REG$.
	By the Bounded Termination property of {\AWB},
	$q$ completes its read operation.
	By the linearizability properties of {\AWB}, this read operation on $\REG$ returns 1.

\item All the other readers take no steps.
		
\end{compactitem}
\begin{definition}\label{Bounded-pk}
For every $k$,
 	$1 \le k \le m$,
 	a {\run} of {\AWB} has property $\pk{k}$
	 if the following holds:
\begin{compactenum}
\item Up to and including time $t_w^k$, all processes behave exactly as in $A_m$, that is:
\begin{compactitem}	
	\item $w$ takes steps $s^0, s^1, \ldots, s^k$
	\item All the readers take no steps.
\end{compactitem}

\item After taking the step $s^k$ at time $t_w^k$, the correct writer $w$ behaves as follows:
	\begin{compactitem}	
	\item If $k=m$, $w$ stops taking steps: it has completed its write operation on $\REG$.
	\item If $k<m$, $w$ \emph{temporarily} stops taking steps. 
	\end{compactitem}

\item There is a reader $x$ that is correct such that step $s^k$ is invisible to $x$.
After time $t_w^k$, reader~$x$ starts and completes a read operation on $\REG$ that returns~$1$.

\item There is a reader $y \neq x$ that may be correct or malicious. After time $t_w^k$, reader $y$ may or may not take steps.

\item There is a set {\sety} of $n-2$ distinct readers other than $x$ and $y$ that are correct and~take~no~steps.

\item If $k < m$, after the reader $x$ reads 1 from $\REG$, the correct writer $w$ resumes taking steps and completes its write operation on $\REG$.
\end{compactenum}
\end{definition}

Note that since $n\ge 3$, the set {\sety} contains at least one reader.
Furthermore, all the readers that take steps do so after time $t_w^k$. 

A {\run} of {\AWB} with property $\pk{k}$ 
	 is shown in Figure~\ref{Bounded-E}.
In this figure and all the subsequent ones,
	correct readers are in black font,
	while the reader that may be {\ml} is colored~{\color{red} red}
	(this~reader may have taken some steps after time $t_w^k$, but these are \emph{not} shown in the figure).
The ``$\notin x$'' on top of a step $s^i$ means that $s^i$
	is invisible to the reader~$x$.

Note that the {\run} $A_m$ of {\AWB} satisfies property $\pk{m}$:
	the reader denoted $x$ in property $\pk{m}$ is the reader $q$ of {\run} $A_m$,
	the reader $y$ of $\pk{m}$ is an arbitrary reader other than $q$ in $A_m$,
	and the set $Z$ of $\pk{m}$ is the set of the remaining $n-2$ readers in $A_m$.
So we have:

\begin{observation}\label{Bounded-baserun}
Run $A_m$ of {\AWB} has property $\pk{m}$.
\end{observation}

\begin{claim}~\label{Bounded-induction}
	 For every $k$, $1 \le k \le m$, there is a {\run} of {\AWB} that has property $\pk{k}$.
\end{claim}
\begin{proof}
We prove the claim by a backward induction on $k$, starting from $k=m$.

\smallskip\noindent
\textbf{Base Case:} $k=m$. This follows directly from Observation~\ref{Bounded-baserun}.

\noindent
\textbf{Induction Step}: Let $k$ be such that $1 < k \le m$.

\RN{$A_{k}$.} Suppose
	there is a {\run} $A_{k}$
	of {\AWB} that has property $\pk{k}$ (this is the induction hypothesis).
We now show that there is a {\run} $A_{k-1}$ of {\AWB} that has property $\pk{k-1}$.

Since {\run} $A_{k}$ of {\AWB} satisfies $\pk{k}$, the following holds in $A_{k}$ (see Figure~\ref{Bounded-Ak}):
\begin{compactitem}  

\item Up to and including time $t_w^k$, all processes behave exactly as in $A_m$.

\item After taking the step $s^k$ at time $t_w^k$, the correct writer $w$ behaves as follows:
	\begin{compactitem}	
	\item If $k=m$, $w$ stops taking steps: it has completed its write operation on $\REG$.
	\item If $k<m$, $w$ {\tsts}. 
	\end{compactitem}
	
\item There is a reader $q$ that is correct such that step $s^k$ is invisible to $q$.
After time $t_w^k$, 
reader~$q$ starts and completes a read operation on $\REG$ that returns~$1$.

\item There is a reader $p \neq q$ that may be correct or malicious. After time $t_w^k$, reader $p$ may or may not take steps.\footnote{These steps are not shown in Figure~\ref{Bounded-Ak}.}

\item There is a set {\sety} of $n-2$ distinct readers other than $p$ and $q$ that are correct and~take~no~steps.

\item If $k < m$, after $q$ reads 1 from $\REG$, the correct writer $w$ resumes taking steps and completes its write operation on $\REG$.

\end{compactitem}

\RN{$B_{k-1}$.}
From the {\run} $A_{k}$ of {\AWB} we construct the following {\run} $B_{k-1}$ of {\AWB} (Figure~\ref{Bounded-Bk-1}).
Intuitively, $B_{k-1}$ is the same as $A_{k}$ except that: 
(a) after taking step $s_{k-1}$ at time $t_w^{k-1}$, the writer $w$ {\tsts}, and (b) $w$ resumes taking steps only after the reader $q$ completes its read of 1. 
This {\run} is possible because
	even though $p$ may have ``noticed'' that $w$ ``pauses''
	after taking step $s^{k-1}$,
	$p$~may be {\ml} (all the other readers are correct in this {\run}),
	and $p$ behaves exactly as in $A_k$,
and 
(2) $q$ cannot distinguish between $A_k$ and $B_{k-1}$
		because step $s^k$ is invisible to $q$, and $p$ and all the readers in {\sety} behave as in $A_k$;
		so $q$ behaves as in $A_k$, and in particular $q$ reads 1 in $B_{k-1}$ as in $A_{k}$.
After $q$ reads 1, $w$~completes its write operation on $\REG$.

More precisely in $B_{k-1}$:

\begin{compactitem}   
\item All processes behave exactly as in $A_k$ up to and including time $t_w^{k-1}$.

\item After taking step $s^{k-1}$ at time $t_w^{k-1}$, $w$ {\tsts}. 
	
\item All the readers in {\sety} are correct and take no steps, exactly as in $A_k$.

\item $p$ behaves exactly as in $A_k$.
	This is possible because
		even though $p$ may have ``noticed'' that $w$ {\tsts} after step $s^{k-1}$,  
		$p$ may be {\ml} (all the other readers are correct in this {\run}).
		
\item $q$ behaves exactly as in $A_k$.
	In particular, after time $t_w^k$, 
	$q$~starts and completes a read operation on $\REG$ that returns~$1$.
	This is possible because $q$ cannot distinguish between $A_k$ and $B_{k-1}$:
	 $s^k$ is invisible to $q$, and $p$ and all the readers in {\sety} behave exactly as in $A_k$.

\item After $q$ reads 1 from $\REG$, the correct writer $w$ resumes taking steps and completes its write operation on $\REG$.

\end{compactitem}

Note that in $B_{k-1}$ all processes behave exactly as in $A_m$ up to and including time $t_w^{k-1}$.

There are two cases:

\textbf{Case 1:} 
\emph{$s^{k-1}$ is invisible to $q$.}
Then $B_{k-1}$ is a {\run} of {\AWB} that has the property $\pk{k-1}$, as we wanted to show.

\textbf{Case 2:} 
\emph{$s^{k-1}$ is visible to $q$.}
Then, by Observation~\ref{Bounded-invisibletoareader}, $s^{k-1}$ is invisible to $p$ or to~some~$r' \in ~{\sety}$.

\noindent
\RN{$C_{k-1}^r$.}
Let $r$ be \emph{any} reader in {\sety}.
From the {\run} $B_{k-1}$ of {\AWB} we construct the following {\run} $C_{k-1}^r$ of {\AWB} (Figure~\ref{Bounded-Ck-1}).
$C_{k-1}^r$ is the same as $B_{k-1}$ up to the time when $q$ completes its read operation on~$\REG$.
After the correct reader $q$ reads~1, {\ml} process $p$
	wipes out any trace of the write steps that it may have taken so far,
	and then correct reader $r\in \sety$  
	starts a read operation~on~$\REG$.
	By the Bounded Termination property of {\AWB}, 
		this read operation by $r$ must complete (without waiting for the correct writer $w$ to complete its write operation\footnote{Even though $r$ ``knows'' 
		that $w$ is correct and so $w$ will eventually take all the steps necessary to complete its write operation, $r$ cannot wait for them: this would violate the \emph{Bounded} Termination property of {\AWB}.}).
	Since 
		 $q$ previously read~1,
		by the linearizability of {\AWB}, 
		$r$ also reads 1.
After~$r$~reads~1, $w$~completes its write operation on $\REG$.

More precisely in $C_{k-1}^r$:

\begin{compactitem}   
\item All processes behave exactly as in $B_{k-1}$ up to and including the time when $q$ completes its read operation on $\REG$.

\item All the readers in $\sety-\{r\}$ are correct and take no steps\footnote{If $n=3$, then the set $\sety-\{r\}$ is empty.}.

\item After the correct reader $q$ completes its read operation on $\REG$:

\begin{compactitem}   

	\item $q$ takes no steps.

	\item 
	$p$ resets all the atomic registers that it can write to their initial values.
	Process $p$ can do so because it may be {\ml} (all the other readers are correct in this {\run}).
	Let~$t_p^r$~be the time when
	$p$ completes all the register resettings.

	\item Correct reader $r$
	starts a read operation on $\REG$ after time $t_p^r$.
	It takes no steps before~this~read.
	By the Bounded Termination property of {\AWB},
		$r$ completes its read operation (without waiting for correct $w$ to resume taking its steps).
	Since $w$ is correct, and
		the read operation by correct $q$ precedes the read operation by $r$ and returns~$1$,
		by the linearizability of~{\AWB},
		the read operation by correct reader $r$~also~returns~$1$.
	
	\item After $r$ reads 1 from $\REG$, the correct writer $w$ resumes taking steps and completes its write operation on $\REG$.	
\end{compactitem}
\end{compactitem}

Note that in $C_{k-1}^r$ all processes behave exactly as in $A_m$ up to and including time $t_w^{k-1}$.

\RN{$D_{k-1}^r$.}
We can now construct the following {\run} $D_{k-1}^r$ of {\AWB} (Figure~\ref{Bounded-Dk-1}).
Intuitively,	we obtain $D_{k-1}^r$ from $C_{k-1}^r$ by removing all the steps of $p$.
So reader $p$ (which was {\ml} in $C_{k-1}^r$)
	is now a correct process that takes no steps.
Despite the removal of $p$'s steps, $q$~behaves exactly as in $C_{k-1}^r$ because $q$
	(which was correct in $C_{k-1}^r$) may now be {\ml}.
Up to and including time~$t_w^{k-1}$, the writer $w$ also behaves exactly
	as in $C_{k-1}^r$ because it cannot see the removal of $p$'s steps:
	they all occur after time $t_w^{k-1}$.
Correct reader $r$ behaves exactly as in $C_{k-1}^r$
	because it also cannot see the removal of $p$'s steps:
	in both $C_{k-1}^r$ and $D_{k-1}^r$,
	$r$ does not ``see'' any steps of $p$.
So $r$ reads 1 in $D_{k-1}^r$ as in $C_{k-1}^r$.
After $r$ reads 1, $w$~completes its write operation on $\REG$.

More precisely in $D_{k-1}^r$:

\begin{compactitem}   
\item After taking step $s^{k-1}$ at time $t_w^{k-1}$, $w$ {\tsts}, as~in~$C_{k-1}^r$.

\item All the readers in $\sety-\{r\}$ are correct and take no steps, as in $C_{k-1}^r$.

\item $p$ is correct and it takes no steps.
	So all the
	atomic
	registers that it can write retain their initial~values.

\item $q$ behaves exactly as in $C_{k-1}^r$.
	  This is possible because
		even though $q$ may have ``noticed'' the removal of $p$'s steps,
		$q$ may be {\ml} (all the other readers are correct in this {\run}).
	
\item $r$ behaves exactly as in $C_{k-1}^r$.
	In particular,
	after time $t_p^r$ reader $r$~starts and completes a read operation on $\REG$ that returns~$1$.
	This is possible because $r$ cannot distinguish between $C_{k-1}^r$ and~$D_{k-1}^r$:
	$r$ cannot see the removal of $p$'s steps, 
		and $q$ and all the readers in $\sety-\{r\}$ behave exactly as in $C_{k-1}^r$.
		
\item After $r$ reads 1 from $\REG$, the correct writer $w$ resumes taking steps and completes its write operation on $\REG$.	

\end{compactitem}

Note that in $D_{k-1}^r$ all processes behave exactly as in $A_m$ up to and including time $t_w^{k-1}$.

If $s^{k-1}$ is invisible to reader $r$, it is clear that the {\run} $D_{k-1}^r$ of {\AWB} has property $\pk{k-1}$.

Recall that (1) the reader $r$ above is an \emph{arbitrary} reader in {\sety},
	and
	(2) $s^{k-1}$ is invisible to~$p$ or to some reader $r' \in \sety$.
So there are two cases:

\textbf{Subcase 2a:} \emph{$s^{k-1}$ is invisible to some reader $r' \in \sety$.}
In the above we proved that the {\run} $D_{k-1}^{r'}$ of {\AWB} has property $\pk{k-1}$,
	as we wanted to show.
	
\textbf{Subcase 2b:} \emph{$s^{k-1}$ is invisible to $p$.}

\RN{$E_{k-1}^r$.}
From the {\run} $D_{k-1}^r$ of {\AWB} we construct the following {\run} $E_{k-1}^r$ of {\AWB} (Figure~\ref{Bounded-Ek-1}).
$E_{k-1}^r$ is the same as $D_{k-1}^r$ up to the time when $r$ completes its read operation on~$\REG$.
After $r$ reads~1, {\ml} process $q$
	wipes out any trace of the write steps that it may have taken so far,
	and then correct reader $p$  
	starts a read operation~on~$\REG$.
	By the Bounded Termination property of {\AWB}, 
		this read operation by $p$ must complete (without waiting for the correct writer $w$ to complete its write operation).
	Since $r$ previously read~1,
		by the linearizability~of~{\AWB}, 
	$p$ also reads 1.
After~$p$~reads~1, $w$~completes its write operation on $\REG$.

More precisely in $E_{k-1}^r$:
\begin{compactitem} 

\item All processes behave exactly as in $D_{k-1}^r$ up to and including the time when $r$ completes its read operation on $\REG$.

\item All the readers in $\sety-\{r\}$ are correct and take no steps, as in $D_{k-1}^r$.

\item After the correct reader $r$ completes its read operation on $\REG$:

\begin{compactitem}   

	\item $r$ takes no steps.

	\item 
	$q$ resets all the atomic registers that it can write to their initial values.
	Process $q$ can do so because it may be {\ml} (all the other readers are correct in this {\run}).
	Let~$t_q^r$~be the time when
	$q$ completes all the register resettings.

	\item Correct reader $p$
	starts a read operation on $\REG$ after time $t_q^r$.
	It takes no steps before this read.
	By the Bounded Termination property of {\AWB},
		$p$ completes its read operation (without waiting for correct $w$ to resume taking its steps).
	Since $w$ is correct,
		and the read operation by correct $r$ precedes the read operation by $p$ and returns~$1$,
		by the linearizability of {\AWB},
		the read operation by correct reader $p$~also~returns~$1$.
			
	\item After $p$ reads 1 from $\REG$, the correct writer $w$ resumes taking steps and completes its write operation on $\REG$.	

\end{compactitem}

\end{compactitem}

Note that in $E_{k-1}^r$ all processes behave exactly as in $A_m$ up to and including time $t_w^{k-1}$.

\RN{$F_{k-1}^r$.}
Finally, we construct the {\run} $F_{k-1}^r$ of {\AWB} by removing
	all the steps of $q$ from $E_{k-1}^r$ (see Figure~\ref{Bounded-Fk-1}).
So $q$ (which was {\ml} in $E_{k-1}^r$)
	is now a correct process that takes no steps.
Despite the removal of $q$'s steps, $r$ behaves exactly as in $E_{k-1}^r$ because $r$ (which was correct in $E_{k-1}^r$) may now be {\ml}.
Up to and including time $t_w^{k-1}$, the writer $w$ also behaves exactly as in $E_{k-1}^r$ because it cannot see the removal of $q$'s steps:
	they all occur after time $t_w^{k-1}$.
Correct reader $p$ behaves exactly as in $E_{k-1}^r$ because it also cannot see the removal of $q$'s steps:
	in both $E_{k-1}^r$ and $F_{k-1}^r$,
	$p$ does not ``see'' any steps of $q$. 
So $p$ reads 1 in $F_{k-1}^r$ as in $E_{k-1}^r$.
After~$p$ reads 1, $w$~completes its write operation on $\REG$.

More precisely in $F_{k-1}^r$:

\begin{compactitem} 

\item After taking step $s^{k-1}$ at time $t_w^{k-1}$, $w$ {\tsts}, as~in~$E_{k-1}^r$. 

\item All the readers in $\sety-\{r\}$ are correct and take no steps, as in $E_{k-1}^r$.

\item $q$ is correct and it takes no steps. So all the atomic registers that it can write retain their initial values.
	
\item $r$ behaves exactly as in $E_{k-1}^r$.
	  This is possible because
		even though $r$ may have ``noticed'' the removal of $q$'s steps,
		$r$ may be {\ml} (all the other readers are correct in this {\run}).
	
\item $p$ behaves exactly as in $E_{k-1}^r$.
	In particular, after time $t_q^r$ reader
	$p$~starts and completes a read operation on $\REG$ that returns~$1$.
	This is possible because $p$ cannot distinguish between $E_{k-1}^r$ and~$F_{k-1}^r$:
	$p$ cannot see the removal of $q$'s steps, 
		and $r$ and all the readers in $\sety-\{r\}$ behave exactly as in $E_{k-1}^r$.

\item After $p$ reads 1 from $\REG$, the correct writer $w$ resumes taking steps and completes its write operation on $\REG$.	

\end{compactitem}

Note that in $F_{k-1}^r$ all processes behave exactly as in $A_m$ up to and including time $t_w^{k-1}$.

Since $s^{k-1}$ is invisible to $p$, it is clear that the {\run} $F_{k-1}^r$ of {\AWB} has property $\pk{k-1}$.

The above concludes the proof of the Induction Step of Claim~\ref{Bounded-induction}:
	we proved that, in all possible cases, there is a {\run} of {\AWB} that has property $\pk{k-1}$, as we needed to show.
\end{proof}

By the Claim~\ref{Bounded-induction} that we just proved,
	the implementation {\AWB} of $\REG$ has a {\run} $A_1$ with property~$\pk{1}$.
By this property, the following holds in $A_{1}$ (see Figure~\ref{Bounded-A1}):

\begin{compactitem}  

\item Up to and including time $t_w^1$, all processes behave exactly as in $A_m$.

\item After taking the step $s^1$ at time $t_w^1$, the correct writer $w$ {\tsts}.
	
\item There is a reader $q$ that is correct such that step $s^1$ is invisible to $q$.
After time $t_w^1$, 
reader~$q$~starts and completes a read operation on $\REG$ that returns~$1$.

\item There is a reader $p \neq q$ that may be correct or malicious. After time $t_w^1$, reader $p$ may or may not take steps.

\item There is a set {\sety} of $n-2$ distinct readers other than $p$ and $q$ that are correct and~take~no~steps.

\item After $q$ reads 1 from $\REG$, the correct writer $w$ resumes taking steps and completes its write operation on $\REG$.

\end{compactitem}

From the {\run} $A_{1}$ of {\AWB} we construct the following {\run} $A_0$ of {\AWB} (Figure~\ref{Bounded-A0}).
Intuitively, $A_0$~is the same as $A_1$ except that the correct writer $w$ does not take any steps
	(i.e., $w$ does~not invoke a write 1 operation on $\REG$), but all the readers
	behave the same as in $A_1$ and so $q$ still reads 1.
This {\run} of {\AWB} is possible because:
	(1) even though $p$ may have ``noticed'' that $w$~does not take any steps,
	$p$~may be {\ml} (all the other readers are correct in this {\run}),
	and $p$ behaves exactly as in $A_1$,
	and 
	(2) $q$ cannot distinguish between $A_1$ and $A_0$
	because $s^1$ is invisible to $q$, and $p$ and all the readers in {\sety} behave as in $A_1$.
So $q$ reads 1 from $\REG$ in $A_0$ exactly as in $A_{1}$.
Since the initial value of the implemented register $\REG$ is $0$, {\run} $A_0$ of the implementation {\AWB}
	of $\REG$ violates the linearizability~of~{\AWB}
	--- a contradiction that concludes the proof of Theorem~\ref{Bounded-Theo-Impossibility-Result}.
\end{proof}

It is easy to verify that the above proof
	holds (without any change) even if all the readers have atomic $\Reg{1}{n}$s that they can write and all processes can read.
Thus:

\begin{theorem}\label{Bounded-STheo-Impossibility-Result}
For all $n \ge 3$, in a system with $n+1$ processes that are subject to Byzantine failures,
	there is \emph{no}
	linearizable implementation of a 
	$\Reg{1}{n}$
	that satisfies Bounded Termination,
	even under the assumption that:
	
\begin{compactitem}

\item The writer $w$ of the implemented $\Reg{1}{n}$ is correct and at most one reader can be {\ml}, and

\item  $w$ has atomic $\Reg{1}{n-1}$s, and every reader has atomic $\Reg{1}{n}$s.

\end{compactitem}
\end{theorem} 	
\end{document}